\documentclass[10pt, journal,fleqn]{IEEEtran}  % Comment this line out if you need a4paper

\IEEEoverridecommandlockouts     % This command is only needed if  % you want to use the \thanks command
%\overrideIEEEmargins                     % Needed to meet printer requirements.

% COLORS

\usepackage{color,amsmath,amsthm,amstext,amsfonts,amssymb, mathrsfs,mathtools}
\usepackage{graphicx,algorithm,algorithmic}
\usepackage{tikz}
\usepackage{setspace}
\usepackage{array} 
\usepackage{booktabs}  
\usepackage{bbm}
\usepackage{wasysym}
\usepackage{textcomp}
\usepackage{hyperref} 
\usepackage[sort,compress]{cite}

\newtheorem{proposition}{Proposition}
\newtheorem{theorem}{Theorem}
\newtheorem{definition}{Definition}

\theoremstyle{remark}
\newtheorem{remark}{Remark} 
\newtheorem{example}{Example}

\newcommand{\Ccal}{\cal C}

\newcommand{\Eset}{\mathbb{E}}

\newcommand{\Hset}{\mathbb{H}}
\newcommand{\Pset}{\mathbb{P}}
\newcommand{\Sset}{\mathbb{S}}

% LISTS AND COUNTERS
\newcounter{l1}
\newcounter{l2}
\newcounter{l3}
\setlength{\itemsep}{0cm} \setlength{\itemindent}{0in}
\newcommand{\bdotlist}{\begin{list}{$\bullet$}{}}
\newcommand{\bboxlist}{\begin{list}{$\Box$}{}}
\newcommand{\bbboxlist}{\begin{list}{\raisebox{.005in}{{\tiny $\blacksquare$ \ \ }}}{}}
\newcommand{\bdashlist}{\begin{list}{$-$}{} }
\newcommand{\blist}{\begin{list}{}{} }
\newcommand{\barablist}{\begin{list}{\arabic{l1}}{\usecounter{l1}}}
\newcommand{\balphlist}{\begin{list}{(\alph{l2})}{\usecounter{l2}}}
\newcommand{\bAlphlist}{\begin{list}{\Alph{l2}.}{\usecounter{l2}}}
\newcommand{\bdiamlist}{\begin{list}{$\diamond$}{}}
\newcommand{\bromalist}{\begin{list}{(\roman{l3})}{\usecounter{l3}}}

\newcommand{\beq}{\begin{equation}}
\newcommand{\eeq}{\end{equation}}

\newcommand{\dst}[1]{\displaystyle{ #1 }}

\newcommand{\tn}{\textnormal}
\DeclarePairedDelimiter{\ceil}{\lceil}{\rceil}
\setlength{\parindent}{0in}
\setlength{\parskip}{4pt}

%\doublespacing
%\onehalfspacing
%\singlespacing

%\input{header.tex}

%\usepackage{hyperref}
%\hypersetup{
%    colorlinks=false,
%    linkcolor=blue,
%    filecolor=magenta,      
%    urlcolor=cyan, 
%}
%\urlstyle{same} 
%\usepackage{hyperref} 

\title{Mechanism Design for \\  Demand Response Programs}

\author{Deepan Muthirayan, Dileep Kalathil, Kameshwar Poolla, Pravin Varaiya % <- this % stops a space
\thanks{Deepan Muthirayan is with the School of Electrical Engineering and Computer Science, UC Irvine. Dileep Kalathil is the Department of Electrical and Computer Engineering, Texas A\&M University, College Station. Kameshwar Poolla and Pravin Varaiya are with the Department of Electrical Engineering and Computer Sciences, University of California, Berkeley.}
\thanks{$^\pi$Supported in part by the National Science Foundation under Grants EECS-1129061/9001, 
CPS-1239178; and by CERTS under sub-award 09-206; PSERC S-52.}
}

\begin{document}
\maketitle
\markboth{Submission for IEEE Transactions on Smart Grids} 
{Muthirayan \emph{et al.}}
%\thispagestyle{empty}
%\pagestyle{empty}
%%%%%%%%%%%%%%%%%%%%%%%%%%%%%%%%%%%%%%%%%%%%%%%%%%%%%%%%%%%%%%%%%%%%%%%%
%%%%%%%%

\begin{abstract}
Demand Response (DR) programs serve to reduce the consumption of electricity at times when the supply is scarce and expensive. 
Consumers or {\em agents} with flexible consumption profiles are recruited by an aggregator who manages the DR program.
%The utility informs the aggregator of an anticipated DR event.
%In response, the aggregator is required to deliver a target reduction in electricity consumption of $D$ KWh 
%during the event to the utility. The aggregator calls on a subset of its pool of recruited agents to reduce their electricity use.
The aggregator calls on a subset of its pool of recruited agents to reduce their electricity use during DR events.
Agents are paid for reducing their energy consumption from contractually established baselines.  
{\em Baselines} are counter-factual consumption estimates of the energy an agent would have consumed if they 
were not participating in the DR program. Baselines are used to determine payments to agents. 
This creates an incentive for agents to inflate their baselines in order to increase the payments they receive. 
There are several newsworthy cases of agents gaming their baseline for economic benefit. We propose a novel {\em self-reported} baseline mechanism (SRBM) where each agent reports its baseline and marginal utility. These reports are strategic and need not be truthful. Based on the reported information, the aggregator selects or {\em calls} on agents with a certain probability to meet the load reduction target $D$.  Called agents are paid for observed reductions from their self-reported baselines. Agents who are not called face penalties for consumption shortfalls below their baselines.
%The mechanism is specified by the probability with which agents are called, reward prices for 
%called agents, and penalty prices for agents who are not called. 
Under SRBM, we show that truthful reporting of baseline consumption and marginal utility is a dominant strategy. Thus, SRBM
eliminates the incentive for agents to inflate baselines. SRBM is assured to meet the load reduction target.  %SRBM is also nearly efficient since it selects agents with the smallest marginal utilities,
%and each called agent contributes maximally to the load reduction target. 
Finally, we show that SRBM is almost optimal in the metric of average cost of DR provision faced by the aggregator.
%Finally, we compare SRBM with the {\em status quo} baselining procedure used by the California Independent System Operator. 
\end{abstract}

\vspace*{-0.1in}

\section{Introduction} 

The core problem in power systems operations is to maintain the fine balance of electricity supply and demand at all times. 
This must be done economically through markets while respecting resource and reliability constraints. 
Adeptly managing flexible demand is a far better alternative to increased reserve generation, 
since it is inexpensive, produces no emissions, and consumes no resources. 
%This is recognized and encouraged by the Federal Energy Regulatory Commission through its Order 745, which mandates that 
%Demand Response (DR) (the voluntary curtailment of consumption) be compensated on par with the conventional generation that supplies grid power \cite{ferc745}.
While most DR programs are limited to infrequent peak shaving applications, it is recognized that 
demand flexibility has the potential to offer more lucrative ancillary services such as frequency regulation or load-following. 
These applications can support balancing supply and demand to compensate for the variability of renewables.  
This paper, however, is concerned only with peak shaving DR applications.   

%At certain times such as mid-afternoons on hot summer days, the total demand for electricity surges. 
%At these times or {\em events}, it is more cost-effective to reduce demand than to increase supply to %maintain power balance. 
%Demand Response programs are designed to reduce electricity consumption during such events. 
In DR programs, aggregators recruit  residential or industrial customers who are willing to reduce their electricity consumption 
in exchange for financial rewards. The aggregator serves as an intermediary and represents these flexible consumers  
or {\em agents} to the local utility. The aggregator receives a  payment from the utility for the ability to reduce demand at short notice, 
and,  in-turn, pays the agents for  their consumption reduction during DR events. The key difficulty is in measuring 
this reduction in consumption. While the actual consumption of agents is measured, their intended consumption or {\em baseline} is a counter-factual.
%There may also be contractual riders that limit the frequency of DR events during which consumers are called, or attach other financial conditions to the DR transactions.   

%There are three key components of any DR program that need to be designed: 
%(a)  a {\em baseline} against which demand reduction is measured, 
%(b)  the {\em payment scheme} for agents who reduce their consumption from this baseline, and 
%(c) various {\em contractual clauses} such as limits on the frequency of DR events or penalties for nonconforming agents. 
Commonly used baselines include historical averages of consumption on similar days (by the agent, or by a peer group of similar agents). 
However, there are newsworthy cases where agents have deliberately inflated their baseline to extract larger payments \cite{gaming-examples}. 
Inaccurate baselines can result in  over-payment, compromising the cost-effectiveness of the DR program, or in  
under-payment, adversely affecting the ability to recruit participants into DR programs. 
Finally, fairness can be of concern. An agent who happens to be on vacation during a DR event receives a payment for load reduction 
without suffering any hardship. This can be perceived as unfair by other agents who deliberately curtail their consumption 
and suffer some dis-utility. Addressing these issues is essential to encourage and sustain wider use of DR programs.

\subsection{Our Contributions} 

We approach DR program planning as a {\em mechanism design} problem. We propose a novel {\em self-reported} baseline mechanism (SRBM) where
agents {\em self-report} baselines which are forecasts of their intended future consumption,
and their marginal utilities to an aggregator who manages the DR program. Agents need not be truthful in their reporting. 
The self-reported baseline is used to determine payments
in the DR program. %In response to an anticipated DR event, the aggregator is required to deliver a target reduction $D$ in electricity use.
%Based on the collective agent reports, the aggregator selects or {\em calls} on certain agents to reduce their consumption.
%The key idea is is to {\em randomize agent selection} in order to discourage baseline gaming.
%The aggregator sets reward and penalty prices.
%Called agents are rewarded for consumption reduction measures from their baselines. 
%Agents who are not called are penalized for consumption shortfalls below their reported baselines. 
The objective of the aggregator is to design an incentive mechanism so that (a) the DR program delivers any load reduction target, 
(b) each agent reports their true baseline and true marginal utility, and 
(c) the DR program delivers the load reduction target at minimum cost to the aggregator. 

Under SRBM, we show that  truthful reporting of  baseline consumption and marginal utility
is a dominant strategy for each agent. 
We show that agents are faithful to their baseline  consumption when not called for DR, 
and that agents maximally reduce their consumption when called for DR. Under this mechanism, 
the aggregator can ensure adequate response, i.e. an assurance of being able to
deliver the load reduction target $D$. %SRBM is also nearly efficient since it selects agents with the smallest marginal utilities,
%and each called agent contributes maximally to the load reduction target. 
We characterize the minimum possible average cost $\phi_\text{min}$ 
of DR provision per KWh under a class of mechanisms. We then show that SRBM is nearly optimal in the metric of
expected cost of DR provision, i.e. it results in a cost close to $\phi_\text{min}$.

There is extensive literature on network resource allocation problems and all of them consider an infinitely divisible good or service that is to be efficiently shared among distributed agents acting with self interests  \cite{semret1999market, zou2018resource}. Our setting also considers an infinitely divisible service request and selfish service providers (DR providers) but the difference being that the service provided is not measurable because of the lack of baseline against which DR service provided has to be measured. This is a challenge for mechanism design. Moreover, the literature in network resource allocation problems \cite{semret1999market, zou2018resource} consider multishot setting and study convergence properties to a Nash Equilibrium over repeated reporting, allocation and pricing. The few works which consider a single shot setting require the consumers to report a very high dimensional bid for convergence in a single shot. Here we study convergence in a single shot with an additional challenge of lack of baseline measurement. Therefore our contribution is proposing a resource allocation dominant strategy mechanism in a single shot setting with this new challenge.

%Our SRBM mechanism has non-uniform prices for demand reduction. 
%We propose a second mechanism with a uniform payment scheme. In this second mechanism, we show that truthful reporting of 
%baseline consumption and marginal utility is a Nash equilibrium under some assumptions. Using simulation studies, we  argue that 
%these assumptions are not be necessary when the number of agents is very large. 
%\vspace{-0.5cm}
\subsection{Related Work}
Traditional DR programs reward participating agents for load reduction during peak demand periods. 
Agents have an incentive to deliberately inflate their baselines to increase the payments they receive \cite{chao2010price, chao2013incentive, bushnell2009comes, wolak2007residential}. Alternative baseline mechanisms (ex: aggregate baselines) which 
improve market efficiency are offered in \cite{chao2013incentive}.  These do not explicitly address baseline inflation concerns. 
Adverse selection and double payment effects are two other issues that arise from  rewarding agents based on 
estimated baselines \cite{chao2010price}. 

%Chen {\emet al.}  \cite{chen2012cheat}, consider penalties when the consumer deviates from the baseline. 
%The authors propose a two-stages game for DR. It is shown that the penalty (linear in deviations) induces users to report their 
%true baseline assuming knowledge of consumer's utility function. The aggregator realizes its DR objective by tuning the retail price. 

%All of them require historical data of consumption.  

There is a substantial literature on baseline estimation methods. These can be broadly classified into (a) averaging, (b) regression, and (c) control group methods. 

Averaging methods determine baselines by averaging the consumption on past days that are similar 
(ex: in temperature or workday) to the event day. 
There are many variants such as weighted averaging and using an adjustment factor to account 
for variations between the event day and prior similar days.    
A detailed comparison of  different averaging methods in offered in \cite{coughlin2008estimating, grimm2008evaluating, wijaya2014bias}. 
While averaging methods are attractive because of their simplicity, they suffer from estimation biases that can be 
substantial \cite{wijaya2014bias, weng2015probabilistic}. Also, these methods require significant data access, 
especially for residential DR programs \cite{nolan2015challenges}. 

Regression methods fit a load prediction model to historical data, which is then used to predict the baseline \cite{zhou2016forecast, mathieu2011quantifying}. 
They can potentially overcome biases incurred by averaging methods \cite{nolan2015challenges}. 
They often require considerable historical data for acceptable accuracy, and the models may be  
too simple to capture the complex behavior of individual agents.   

Control group methods are found to be more accurate 
than averaging or regression methods \cite{hatton2016statistical}. 
While they do not require large amount of historical data, they require additional metering infrastructure.
This complicates and raises the costs of implementation, particularly for large numbers of recruited agents. 
Finally, \cite{weng2015probabilistic} proposes a probabilistic method using Gaussian statistics to estimate baselines. We refer the readers to \cite{xia2017energycoupon} for a discussion on these different baseline schemes. 
%Data analytic approaches such as clustering \cite{park2015data} and neural networks \cite{jazaeri2016baseline} have also been proposed.  
%What is not clear is the extent to which regression or control group based approaches compare to these novel methods. 
%This allows them to characterize the uncertainty in baseline estimation and they propose a method that is superior to averaging methods. 
 
Most of the methods mentioned above focus on baseline estimation. They often overlook the behavioral or gaming aspects of agents intentionally inflating their baselines. There are some exceptions, notably \cite{chen2012cheat} which considers a linear penalty when agents deviate from their baselines and shows that the penalty induces users to report their true baselines. In \cite{ramos2013asymmetry}, again under linear penalties, a centralized DR scheduling algorithm is proposed to guarantee incentive compatibility in the case of two agents. A DR market assuming known baselines is proposed in \cite{nguyen2013market, nguyen2011pool}.
 %where the objective is to maximize the social benefit. 
The approaches in \cite{chen2012cheat, ramos2013asymmetry, nguyen2013market, nguyen2011pool} either assume knowledge of utility functions or true baselines. In \cite{dobakhshari2016contract} authors propose an optimal contract mechanism for the DR aggregator. However they assume that true reduction can be observed at a later time, and the payment depends on this information. Our approach doesn't require this assumption. Instead we propose a joint design of baseline estimation and incentive design to address both problems together.

\section{Problem Formulation}  
\label{sec:preliminaries}

\vspace{-0.5em}
{\small
\begin{table}
\caption{Notations}
\begin{tabular}{c|l} 
\toprule
$\Eset[ X ]$ & expected value of the random variable $X$ \\
$D$ & load reduction target \\
$m$ & number of DR events agents must participate in \\
$N$ & number of agents recruited by aggregator\\
$u_k$ & utility of agent $k$ \\
$q_{k}$ & discretionary energy consumption of agent $k$\\
$b_k$ & true baseline consumption of agent $k$\\
$\pi_k$ & true marginal utility of agent $k$\\
$\pi_\text{max}$ & upper bound on marginal utilities, $\pi_\text{max} \geq \max_k \pi_k$ \\
$\alpha_k$ & probability that agent $k$ is selected \\
$f_k$ & baseline report of agent $k$\\
$\mu_k$ & marginal utility report of agent $k$\\ 
$\pi^r_k$ & reward/kWh awarded to agent $k$\\
$\pi^p_k$ & penalty/kWh imposed on agent $k$\\
$\pi^e$ & retail price of energy \\
$\pi^o$ & recruitment cost per enrolled agent \\
$\Pset^i$ & pod $i$ \\
$\Sset^i$ & pod core $i$ \\
$\Hset^i$ & pod header $i$ \\
$\beta^i$ & probability that pod $i$ is selected for DR \\
$\nu^i$ & maximum reported marginal utility in pod $i$ \\
$\phi$ & average cost of DR provision per KWh \\
$\psi$ & payout to agents per DR event \\
%$R(\cdot, \cdot, \cdot)$ & Reward function for load reduction \\ 
%$\Phi(\cdot, \cdot, \cdot)$ & Penalty function for deviation from baseline \\
\bottomrule
\end{tabular}
\end{table}}
\label{tab:notation}
%\end{table}

\subsection{Consumer Model}
Let $u_{k}(q_{k})$ be the utility of agent $k$ derived by consuming $q_k$ units of energy. We assume that the utility functions 
$u_k(\cdot)$ have the piece-wise linear form
\begin{equation}
\label{eq:utility-1}
u_{k}(q_{k}) = \left\lbrace \begin{array}{cc}
\pi_{k} q_{k} & \tn{if} ~ q_{k} < b_k \\
\pi_{k} b_k & \tn{if} ~ q_{k} \geq b_k
\end{array} \right.
\end{equation}
Here $b_k$ is the maximum possible consumption of agent $k$. Any additional consumption will not increase its utility. 
We call $\pi_{k}$ the {\em true marginal utility} of agent $k$. We assume that $\pi_k$s are i.i.d and so are $b_k$s and that $\pi_k$ and $b_k$s are independent of each other.

Let $\pi^{e}$ be the retail price of electricity
offered by the utility.  
%The maximum possible aggregate load reduction is $D_\text{max} = \sum_k b)k$. We say that a load reduction target $D$
%is feasible if $D \leq D_\text{max}$.
The net utility $U_{k}(\cdot)$ of agent $k$ is
\begin{equation*}
%\label{eq:net-utility-1}
U_{k}(q_{k}) = \left\lbrace \begin{array}{cc}
\pi_{k}q_{k} -\pi^{e}q_{k} & \tn{if} ~q_{k} < b_k \\
\pi_{k}b_k -\pi^{e}q_{k} & \tn{if} ~q_{k} \geq b_k
\end{array} \right.  
\end{equation*} 
We clearly require $\pi_{k} > \pi^{e}$, else agent $k$ would not consume electricity. Equivalently, for every agent, the marginal utility derived from electricity consumption exceeds the
retail electricity price.

\begin{figure}[h!]
\centering
\begin{tikzpicture}[scale = 0.8]

%\draw[help lines] (0,-2) grid (10,4);
\draw [->, thick] (0,0) -- (4,0);
\draw (3.75,-0.2) node[anchor=north] {\small $q_k$};
\draw (2,-0.2) node[anchor=north] {\small $b_k$};
%\draw (-0.2,2.25) node[anchor=east] {\small $\pi_k b_k$};
\draw [->, thick] (0,0) -- (0,3);
\draw (-0.2,2.75) node[anchor=east] {\small $u_k$} ;
\draw[blue, thick] (0,0) -- (2,2.25) -- (4,2.25);
\draw[dashed] (2,0) -- (2,3);
\draw[dashed] (0,2.25) -- (2,2.25);
\draw [red,thick] (0.8,0.9) -- (1.2,0.9) -- (1.2, 1.35);
\draw (1.25,1.1) node[anchor=west] {\small $\pi_k$};

\begin{scope}[xshift = 2in]
\draw [->, thick] (0,0) -- (4,0);
\draw (3.75,-0.2) node[anchor=north] {\small $q_k$};
\draw (2,-0.2) node[anchor=north] {\small $b_k$};
%\draw (3,2) node[anchor=west] {\small $(\pi_k - \pi^e) b_k$};
\draw [->, thick] (0,0) -- (0,3);
\draw (-0.2,2.75) node[anchor=east] {\small $U_k$} ;
\draw[blue, thick] (0,0) -- (2,2) -- (4,1.5);
\draw[dashed] (2,0) -- (2,3);
\draw[dashed] (0,2) -- (4,2);
\draw [red,thick] (0.8,0.8) -- (1.2,0.8) -- (1.2, 1.2);
\draw (.35,.35) node[anchor=west] {\small $\pi_k - \pi^e$};
\draw [red,thick] (2.8,1.8) -- (2.8,1.6) -- (3.6, 1.6);
\draw (2.7,1.3) node[anchor=west] {\small $- \pi^e$};
\end{scope}

\end{tikzpicture}%
\vspace*{-0.1in}
\caption{(a) Utility, and (b) net utility of agent $k$}
\label{fig:conut}
\end{figure}
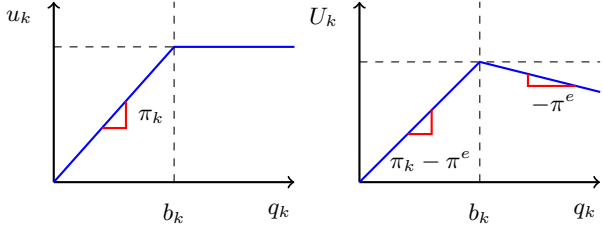

The optimal consumption for agent $k$  maximizes the net utility. This is $b_k$ as is evident from Figure \ref{fig:conut}(b).
We call $b_k$ the {\em true baseline consumption} of agent $k$. 

\begin{remark}
We are only modeling the {\em discretionary} electricity consumption of an agent. 
This is the consumption that the agent is willing to forgo
in exchange for monetary compensation. With this interpretation, $b_k$ is the {\em maximum consumption reduction}
that agent $k$ is willing to provide, and $\pi_k$ is the {\em minimum compensation} per KWh it requires to provide this reduction.
\end{remark}

\begin{remark}
The agent utility functions are private information. The aggregator does not have knowledge of agent baselines and marginal utilities.
We assume that the aggregator has knowledge of an upper bound on the agent marginal utilities, i.e. it knows $\pi_\text{max}$ where
\begin{equation} \pi_\text{max} \geq \max_{k} \pi_{k}. \end{equation}
We note that $\pi_\text{max}$ has the interpretation of the maximum price that the aggregator is {\em willing to pay} agents per KWh of demand reduction.
\end{remark}

\begin{remark} 
We assume that the agent parameters $\pi_k$ and $b_k$ are independent across the agents. This is reasonable because the true marginal utility and the true baseline of an agent are not dependent on the behavior of other agents.
\end{remark}

\begin{remark}
The agent's utility function is chosen to be a piecewise linear concave function. This simplifies the problem setting and allows a more tractable analysis while retaining all the challenges that the general problem poses. If the agent's utility function is modeled by a general concave function, as in network resource allocation problems, then for convergence in single shot this would necessarily require the mechanism designer to elicit very high dimensional bids. This, in addition to the lack of baseline measurement, makes the general problem complex for analysis. We suggest this as a future direction for research.
\end{remark}

\subsection{Aggregator Model}

The utility is financially motivated to reduce its procurement costs at times when supply is scarce and expensive.  
It purchases demand response services from aggregators in capacity markets.
%Let $\pi_\text{dr}$ be the price per KWh of demand response offered by the utility. 
%As the aggregator is one of many firms that provide DR, it acts as a price taker. 
Suppose the aggregator commits to offer the utility $D$ KWh of demand reduction during DR events over some contract window.
It recruits $N$ agents into its DR program from a large candidate pool. We will see later that $N$ is a function of the consumer {\it selection mechanism}.
The cost of recruitment is $\pi^o$ per enrolled agent.
{\it The recruited agents are obligated to participate in $m$ DR events contractually}.

The aggregator profit is the revenue from the utility, minus the payout to the agents and recruiting costs. 
It may also receive penalty revenue from agents, but we will show that this is not the case under our baseline mechanisms.
The total expected cost faced by the aggregator is 
\[ J_\text{agg} =  m \Eset[ \psi ] +  \pi^o \Eset [ N  ] \]
where $\psi$ is the payout per DR event, $N$ is the number of recruited agents and expectation is over consumer baseline and marginal utility.
The aggregator's {\em expected cost of demand response} $\phi$, i.e. the average cost per KWh of demand reduction is then
\beq 
\label{eq:aggcost} \phi = \frac{J_\text{agg}}{D} = \underbrace{\frac{\Eset [ \psi ]}{D}}_\text{payout per KWh} + \underbrace{\frac{\pi^o\Eset[N]}{mD}}_\text{recruitment cost} \eeq

%We consider a DR program with $N$ participating agents. The program is managed by an aggregator. The aggregator receives a 
%load reduction target from the utility and selects which agents to call on from its recruited pool.
%\input{Figures/fig1.tex}

\subsection{Event Time-line}  
\label{sec:event-timeline}
\vspace{-0.2cm}
The time-line of events in our problem formulation is shown in Figure \ref{fig:timeline}. 
We divide these events into four periods.
%formulate baseline reporting and subsequent load reduction as a two stage mechanism.

\begin{figure}[h!]
\centering
\begin{tikzpicture}[scale = 0.9]

%\draw[help lines] (0,-2) grid (10,4);
\draw [->, very thick] (0,0) -- (9.5,0);
\draw[dashed] (3,-1) -- (3,1.75);
\draw[dashed] (6,-1) -- (6,1.75);
\draw (1.5,1.75) node[anchor=north] {\small Period 1};
\draw (4.5,1.75) node[anchor=north] {\small Period 2};
\draw (7.5,1.75) node[anchor=north] {\small Period 3};

\draw (1.5,0.5) node[align=left] {\scriptsize utility notifies \\[-0.05in] \scriptsize agg of DR event};
\draw[thick,->] (0.3,0) -- (0.3,0.5);
\draw (1.4,-0.5) node[align=right] {\scriptsize agents report \\[-0.05in] \scriptsize private info $f_k, \mu_k$ };
\draw[thick,->] (2.7,0) -- (2.7,-0.5);

\draw (4.6,0.5) node[align=left] {\scriptsize agg selects  \\[-0.05in] \scriptsize agents for reduction};
\draw[thick,->] (3.3,0) -- (3.3,0.5);
\draw (4.25,-0.5) node[align=right] {\scriptsize agg publishes \\[-0.05in] \scriptsize penalty/reward prices };
\draw[thick,->] (5.7,0) -- (5.7,-0.5);

\draw (7.5,0.5) node[align=left] {\scriptsize agents decide  \\[-0.05in] \scriptsize consumption $q_k$};
\draw[thick,->] (6.3,0) -- (6.3,0.5);
\draw (7.5,-0.5) node[align=right] {\scriptsize agents pay penalties \\[-0.05in] \scriptsize or  recieve rewards};
\draw[thick,->] (8.9,0) -- (8.9,-0.5);

%%\draw (-0.2,2.25) node[anchor=east] {\small $\pi_k b_k$};
%\draw [->, thick] (0,0) -- (0,3);
%\draw (-0.2,2.75) node[anchor=east] {\small $u_k$} ;
%\draw[blue, thick] (0,0) -- (2,2.25) -- (4,2.25);
%\draw[dashed] (2,0) -- (2,3);
%\draw[dashed] (0,2.25) -- (2,2.25);
%\draw [red,thick] (0.8,0.9) -- (1.2,0.9) -- (1.2, 1.35);
%\draw (1.25,1.1) node[anchor=west] {\small $\pi_k$};
%
%\begin{scope}[xshift = 2.2in]
%\draw [->, thick] (0,0) -- (4,0);
%\draw (3.75,-0.2) node[anchor=north] {\small $q_k$};
%\draw (2,-0.2) node[anchor=north] {\small $b_k$};
%%\draw (3,2) node[anchor=west] {\small $(\pi_k - \pi^e) b_k$};
%\draw [->, thick] (0,0) -- (0,3);
%\draw (-0.2,2.75) node[anchor=east] {\small $U_k$} ;
%\draw[blue, thick] (0,0) -- (2,2) -- (4,1.5);
%\draw[dashed] (2,0) -- (2,3);
%\draw[dashed] (0,2) -- (4,2);
%\draw [red,thick] (0.8,0.8) -- (1.2,0.8) -- (1.2, 1.2);
%\draw (.35,.35) node[anchor=west] {\small $\pi_k - \pi_e$};
%\draw [red,thick] (2.8,1.8) -- (2.8,1.6) -- (3.6, 1.6);
%\draw (2.7,1.4) node[anchor=west] {\small $- \pi_e$};
%\end{scope}

\end{tikzpicture}%
\caption{Event Time-line}
\label{fig:timeline}
\end{figure}
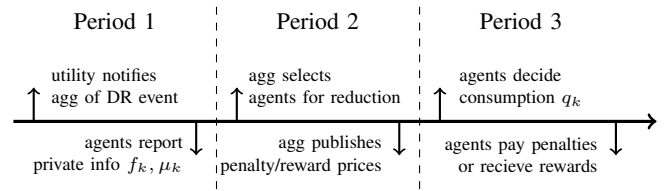

\textit{Period 0 (Common information):} The aggregator recruits $N$ agents  into its DR program.
Agents enroll based on the opportunity to receive financial rewards.  The aggregator informs participating agents 
of (a) agent selection mechanism, and (b) the mechanism which sets penalty and reward prices. 

\textit{Period 1 (Reporting):} The utility notifies the aggregator of an anticipated DR event during which it is 
obligated to deliver $D$ KWh of demand reduction.  
% declares a selection procedure for choosing agents in response to an anticipated DR event.
All agents (indexed by $k$) report their baselines and marginal utilities, $f_k$ and $\mu_k$ respectively,  to the aggregator. 
Agents need not be truthful. We stress that the agent reports $(f_k, \mu_k)$ are strategic, 
i.e. agents may opt to deliberately submit incorrect reports.

\textit{Period 2 (Selection):} The aggregator delivers the aggregate load reduction target $D$ by selecting 
a subset of agents to call on for consumption reduction. This selection is based on the collective reports submitted by the agents. 
The aggregator notifies selected agents to reduce their consumption. The aggregator computes and publishes  reward prices 
$\pi^r_{k}$ for agents who are called,  and  a penalty price $\pi^p_k$ for 
agents who are not called.  These can be  agent-specific.

\textit{Period 3 (Load reduction and payment):} 
During the DR event, all agents decide on their actual consumption $q_k$.
If agent $k$ is called, it receives an {\em ex post}  reward 
\beq 
R = \pi_k^r (f_k - q_k)^+.
\label{eq:reward1}
\eeq
If agent $k$ is not called, it is assessed an {\em ex post} penalty 
\beq 
P = \pi^p_k (f_k - q_k)^+ .
\label{eq:penalty1}
\eeq
Thus, selected agents are rewarded for consumption reduction from their reported baselines, and agents that are not selected are
penalized for consumption shortfalls below their reported baselines.

%
%\begin{figure}[h!]
%\centering
%\input{Figures/fig3.tex}
%\caption{(a) Reward and (b) penalty functions.}
%\label{fig:rpfunct}
%\end{figure}

\subsection{The Agent's Problem} 
We assume that the agents are rational and non-cooperative. Each agent faces a two stage decision problem. 
In the first stage, it has to decide the value of its reports $(f_k, \mu_k)$. In the second stage, it has to decide on its
actual  energy consumption $q_{k}$ during the DR event. 
This second stage decision depends on whether or not agent $k$ is called for the DR event. 

There are two serious complexities that arise. First, agents are selected at {\em random} by the aggregator in response to a DR event.
The selection probability depends on the details of the aggregator's mechanism, which, in turn, depends on the submitted reports of {\em all} agents. 
Second, the reward and penalty prices faced by agent $k$ are set by the aggregator. These vary by agent and depend 
on the collective reports of the other agents $i \neq k$. 

We could approach the agent's decisions through a complicated game-theoretic formulation, but this is unnecessary.
It happens that for our self reported baseline mechanism (detailed in Section \ref{sec:mu-unknown}), {\em truthful reporting is the dominant strategy}.
Equivalently, agent $k$ will choose to reveal its true baseline and true marginal cost in the first stage. This results in the lowest expected cost for agent $k$, 
regardless of the reports of other agents (see Theorem \ref{thm:muuc-main}).

\section{Baseline-only Reporting} \label{sec:baselineonly}

Before we describe our self-reported baseline mechanism, we consider a simpler scheme where agents 
(a) only report their baselines, and (b) face
uniform reward and penalty prices independent of their reports, set by the aggregator.
This intermediate analysis will inspire our more complex mechanism where agents report both their baselines and 
marginal utilities, and face nonuniform prices.
  
\subsection{Mechanism Definition}

The aggregator recruits $N$ agents who are contractually obligated to participate in $m$ DR events.
 In response to  a DR event, all  recruited agents are required to only report their baselines $f_k$. 
The decisions of agents are not coupled as the penalty and reward prices $\pi^r, \pi^p$ are constants chosen by the aggregator.
The aggregator selects agents independently with probability $\alpha$ until
\beq  
\sum_k f_k > D.
\label{eq:aa} 
\eeq
Meeting (\ref{eq:aa}) requires that a sufficient number $N$ of agents be recruited. We will explore this aspect of the mechanism in 
Theorem \ref{thm:aa}. 

The aggregator's choices under this mechanism are $(N, \pi^r, \pi^p, \alpha)$.
The decision variables for agent  $k$ are its reported baseline $f_k$ and second stage consumption $q_k$ during the DR event.

\subsection{Agent Decisions}

The objective of each agent is to minimize its expected cost.
Agent $k$ solves a two-stage optimization problem. In the first stage, it optimally selects its baseline report $f_k$.
In the second stage, it is informed whether or not it is selected, and then optimally selects its consumption $q_k$ during the DR event.  

Suppose agent $k$ submits a baseline report $f_k$. If this agent is selected, its second stage cost function is
\[ J_\text{s}(q_k, f_k) = \pi^e q_k - u(q_k) - \pi^r (f_k - q_k)^+. \]
If the agent is not selected, it second stage cost function is
\[ J_\text{ns}(q_k, f_k) = \pi^e q_k - u(q_k) + \pi^p (f_k - q_k)^+. \]
%Selected agents are rewarded for consumption reduction from their reported baselines, and agents that are not selected are
%penalized for consumption shortfalls below their reported baselines.

Define the optimal consumptions
\begin{eqnarray*}
q^*_\text{s} & = & \arg\min_{q_k} J_\text{s}(q_k, f_k),~~ q^*_\text{ns} =  \arg\min_{q_k} J_\text{ns}(q_k , f_k). 
\end{eqnarray*}
Note that these depend on the first stage decision $f_k$, i.e. the reported baselines.

Let $\alpha$ be the selection probability. The first stage decision problem of agent $k$ is to minimize its expected cost:
\[ J(f_k) =  \alpha J_\text{s}(q^*_\text{s}, f_k) + (1 - \alpha) J_\text{ns}(q_\text{ns}^*, f_k).  \]
Let $f_k^*$ be the optimal first stage decision of agent $k$, i.e. its reported baseline:
\[ f_k^* = \arg\min J(f_k). \]

We have the following result.
\begin{theorem} \label{thm:1}
Suppose the mechanism prices satisfy
$\pi^r \geq \pi_\text{max} - \pi^e, \quad \pi^p \geq \pi^e$, 
and the probability of an agent being called satisfies $\alpha \leq \pi^e/(\pi^r + \pi^e)$. 
Under the baseline-only reporting mechanism, 
\setlength{\leftmargini}{0.15in}
\balphlist \setlength{\itemsep}{-0.1cm}
\item agents truthfully report their baselines, i.e. $f_k^* = b_k$ %, i.e. there is no baseline inflation
\item called agents consume $q_\text{s}^* = 0$ %, providing the maximal reduction in discretionary consumption
\item agents that are not called consume $q_\text{ns}^* = b$
%\item the expected payout to  agent  $k$ is $\alpha \pi^r b_k$
\item the aggregator receives no penalty revenue 
\item the load reduction target $D$ is met.
\end{list}
\end{theorem}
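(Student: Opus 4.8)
The plan is to solve agent $k$'s two-stage problem --- whose first-stage objective is the cost $J(f_k)=\alpha J_\text{s}(q^*_\text{s},f_k)+(1-\alpha)J_\text{ns}(q^*_\text{ns},f_k)$ defined above, with the fixed selection probability $\alpha$ --- by a sequence of one-dimensional piecewise-linear minimizations. First I would fix an arbitrary report $f_k\ge 0$ and solve both second-stage subproblems $\min_{q_k}J_\text{s}(q_k,f_k)$ and $\min_{q_k}J_\text{ns}(q_k,f_k)$ in closed form. Because $u(\cdot)$ has a kink at $b_k$ and $(f_k-q_k)^+$ has a kink at $f_k$, each objective is piecewise affine in $q_k$ with breakpoints at $\min(f_k,b_k)$ and $\max(f_k,b_k)$, so every minimizer lies in $\{0,\min(f_k,b_k),\max(f_k,b_k)\}$ --- on the unbounded segment $q_k\ge\max(f_k,b_k)$ both objectives are affine with slope $\pi^e>0$ --- and the sign of each remaining affine slope is pinned down by a mechanism hypothesis: $\pi^r\ge\pi_\text{max}-\pi^e\ge\pi_k-\pi^e$ makes $J_\text{s}(\cdot,f_k)$ nondecreasing on $[0,\min(f_k,b_k)]$, while $\pi^p\ge\pi^e$ together with $\pi_k>\pi^e$ makes $J_\text{ns}(\cdot,f_k)$ nonincreasing on $[0,\max(f_k,b_k)]$.

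Carrying out the case split on whether $f_k$ lies below or above $b_k$ then yields clean forms. For the not-called problem, $q^*_\text{ns}=\max(f_k,b_k)$ and $J_\text{ns}(q^*_\text{ns},f_k)=\pi^e(f_k-b_k)^+-(\pi_k-\pi^e)b_k$; in particular the not-called value is constant in $f_k$ on $[0,b_k]$ and then increases with slope $\pi^e$. For the called problem, $J_\text{s}(q^*_\text{s},f_k)=\min\{-\pi^r f_k,\ -(\pi_k-\pi^e)b_k\}$, with $q^*_\text{s}=0$ when $f_k\ge(\pi_k-\pi^e)b_k/\pi^r$ and $q^*_\text{s}=b_k$ otherwise; in particular the called value is nonincreasing in $f_k$ and is affine with slope $-\pi^r$ for $f_k\ge b_k$, and $q^*_\text{s}=0$ once $f_k\ge b_k$ (since $\pi^r\ge\pi_k-\pi^e$).

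Next I would minimize $J(f_k)$ over $f_k\ge 0$ in two pieces. On $[0,b_k]$ the not-called term is constant and the called term is nonincreasing, so $J$ is nonincreasing and is minimized at the right endpoint $f_k=b_k$. On $[b_k,\infty)$, $J$ is affine with slope $(1-\alpha)\pi^e-\alpha\pi^r=\pi^e-\alpha(\pi^r+\pi^e)$, which is $\ge 0$ precisely under the hypothesis $\alpha\le\pi^e/(\pi^r+\pi^e)$, so $J$ is nondecreasing on that interval and again minimized at $f_k=b_k$. Hence $f^*_k=b_k$, which is claim (a). Substituting $f^*_k=b_k$ into the second-stage solutions gives $q^*_\text{s}=0$ --- using $-\pi^r b_k\le-(\pi_k-\pi^e)b_k$, i.e.\ $\pi^r\ge\pi_k-\pi^e$ --- which is claim (b), and $q^*_\text{ns}=\max(b_k,b_k)=b_k$, which is claim (c). Claims (d) and (e) are then bookkeeping: a not-called agent has $(f_k-q^*_\text{ns})^+=(b_k-b_k)^+=0$ and called agents pay no penalty by construction of the mechanism, so the aggregator collects zero penalty revenue (claim (d)); and each called agent reduces its consumption from its true baseline $b_k$ by $b_k-q^*_\text{s}=b_k=f_k$, so the realized aggregate reduction equals $\sum_{k\ \text{called}}f_k>D$ by the stopping rule (\ref{eq:aa}) --- once enough agents are recruited for that rule to terminate, as will be shown in Theorem \ref{thm:aa} --- which is claim (e).

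I expect the main obstacle to be the second-stage case analysis rather than anything downstream of it: $J_\text{s}$ and $J_\text{ns}$ are piecewise affine with breakpoints whose order flips according to whether $f_k$ is below or above $b_k$, and one must keep careful track of which hypothesis controls the sign of each slope. The one genuinely subtle point is that the called value $J_\text{s}(q^*_\text{s},f_k)$ is \emph{decreasing} in $f_k$, so viewed in isolation an agent would profit by inflating its baseline; it is exactly the bound $\alpha\le\pi^e/(\pi^r+\pi^e)$ --- together with $\pi^p\ge\pi^e$, which forces the not-called cost to grow in $f_k$ at the full rate $\pi^e$ rather than a smaller rate $\pi^p$ --- that makes the expected marginal gain from over-reporting nonpositive and thereby anchors the optimal report at $b_k$.
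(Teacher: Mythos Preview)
Your proposal is correct and follows essentially the same approach as the paper: backward induction on the two-stage problem, exploiting the piecewise-affine structure of $J_\text{s}$ and $J_\text{ns}$ to reduce each minimization to a comparison of slopes at the breakpoints $\{0,\min(f_k,b_k),\max(f_k,b_k)\}$, and then checking that the first-stage slope $(1-\alpha)\pi^e-\alpha\pi^r$ is nonnegative exactly under the hypothesis $\alpha\le\pi^e/(\pi^r+\pi^e)$. The paper organizes the same computation into two separate propositions---one analyzing the second stage under truthful reporting $f_k=b_k$, the other analyzing the first-stage report for arbitrary $f_k$---whereas you carry out a single unified backward induction with general $f_k$ throughout; the slope calculations and the role of each hypothesis are identical.
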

{\em Proof:} Refer Appendix 

\begin{remark}
Result (a) assures us that there is no baseline inflation. Result (b) states that called agents maximally reduce their discretionary consumption.
Result (c) implies that agents who are not called consume the same amount of electricity as they would have if 
they were not participating in the DR program.
\end{remark}

\subsection{Aggregator Cost}
%We have argued (see (\ref{eq:aggcost})) 
%that the expected total cost (payout + recruiting) 
%per KWh of demand reduction faced by the aggregator is
%\[ \phi = \frac{\Eset [ \psi ]}{D} + \frac{\pi^o\Eset[N]}{mD} \]
%where $\psi$ is the expected payout per DR event.

We now examine the aggregator's perspective. Consider the class $\Ccal$ of {\em self-reporting baseline mechanisms} with {\em linear} reward and penalty functions as specified in \eqref{eq:reward1} and \eqref{eq:penalty1}. The reward prices can be agent-specific.
Our next result offers a lower bound on the {\em minimum possible cost} per KWh of DR provision under {\em any mechanism in $\Ccal$}.
\begin{theorem} \label{thm:min}
Suppose agent true baselines $\{b_1, b_2, \cdots \}$ and agent true marginal utilities $\{\pi_1, \pi_2, \cdots\}$ 
are i.i.d. random variables. Assume $b_{i}$s and $\pi_{k}$s are independent for all $i, k$. 
Let  $\Ccal$ denote the class of self-reporting baseline mechanisms with linear reward and penalty functions.
Then, the expected cost for DR provision 
under any mechanism in $\Ccal$ satisfies 
\beq \label{eq:phimin}
 \phi \geq \frac{1}{\Eset[ 1/\pi] } - \pi^e + \frac{\pi^o}{m \pi^e \Eset [b] \Eset [ 1/\pi]} = \phi_\text{min}
 \eeq
 where $\Eset [b] = \Eset [b_{i}]$ and $\Eset [ 1/\pi] = \Eset [ 1/\pi_{k}]$. 
\end{theorem}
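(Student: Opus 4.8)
\emph{Proof proposal.}
The plan is to lower-bound the two terms of $\phi = \Eset[\psi]/D + \pi^o\Eset[N]/(mD)$ (from \eqref{eq:aggcost}) separately, using only constraints that every mechanism in $\Ccal$ must obey, and then optimize the resulting bound over the remaining design freedom. First I would re-run the agent best-response analysis behind Theorem~\ref{thm:1}, now with \emph{agent-specific} prices $(\pi^r_k,\pi^p_k)$ and a selection probability $\alpha_k$ that may depend on the full report vector. The same case analysis of the second-stage cost shows that a called agent either delivers its entire true baseline $b_k$ (when $\pi^r_k \ge \pi_k-\pi^e$) or delivers nothing, and that an uncalled agent is faithful; hence any agent that actually contributes to the target must be paid at least its marginal cost, $\pi^r_k \ge \pi_k - \pi^e$ per KWh, and baseline truthfulness forces the selection probability to satisfy $\alpha_k \le \pi^p_k/(\pi^r_k+\pi^p_k)$. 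Using $\pi^p_k = \pi^e$ (the natural, and, via the no-penalty-revenue/fairness requirement, the binding choice) together with $\pi^r_k\ge\pi_k-\pi^e$ gives the key ceiling $\alpha_k \le \pi^e/\pi_k$.

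Next I would exploit the i.i.d.\ structure. Because $\pi_k$ and $b_k$ are i.i.d.\ and mutually independent, and an agent's call-and-deliver probability is a function of the reports (hence, in equilibrium, of its type), I would pass to a reduced-form rule $\alpha(\cdot)$, where $\alpha(\pi)$ is the probability that a type-$\pi$ agent is called and delivers. Meeting the target then forces $\Eset[N]\,\Eset[b]\,\Eset[\alpha(\pi)] \ge D$, which an optimal mechanism makes tight, so $\Eset[N] = D/(\Eset[b]\,\Eset[\alpha(\pi)])$; and the per-event payout obeys $\Eset[\psi] \ge \Eset[N]\,\Eset[b]\,\Eset[\alpha(\pi)(\pi-\pi^e)]$ since each delivered KWh is paid at least $\pi^r_k \ge \pi_k-\pi^e$. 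Substituting both into $\phi$ collapses everything to
\[
\phi \ \ge\ \frac{\Eset[\alpha(\pi)\,\pi] + \pi^o/(m\,\Eset[b])}{\Eset[\alpha(\pi)]} \;-\; \pi^e ,
\]
a functional of the single unknown $\alpha(\cdot)$ subject to $0 \le \alpha(\pi) \le \pi^e/\pi$.

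The remaining step is to minimize this expression over feasible $\alpha(\cdot)$. Here the requirement that the target be met \emph{with certainty} for every realization (including one in which the whole recruited pool happens to consist of high-$\pi$ agents) rules out ``cream-skimming,'' i.e.\ setting $\alpha(\pi)=0$ on expensive types; a short perturbation argument then shows the minimizer saturates the incentive ceiling, $\alpha(\pi)=\pi^e/\pi$ — equivalently $\pi^r_k=\pi_k-\pi^e$ with the selection probability at its incentive-maximal value. Since then $\alpha(\pi)\pi\equiv\pi^e$, one gets $\Eset[\alpha(\pi)\pi]=\pi^e$ and $\Eset[\alpha(\pi)]=\pi^e\Eset[1/\pi]$, and the bound becomes
\[
\frac{\pi^e + \pi^o/(m\,\Eset[b])}{\pi^e\,\Eset[1/\pi]} - \pi^e \;=\; \frac{1}{\Eset[1/\pi]} - \pi^e + \frac{\pi^o}{m\,\pi^e\,\Eset[b]\,\Eset[1/\pi]} \;=\; \phi_\text{min},
\]
which is exactly \eqref{eq:phimin}; one checks that every other feasible $\alpha(\cdot)$ (e.g.\ any constant rule) yields a value no smaller, using $\Eset[\pi]\ge 1/\Eset[1/\pi]$ and $\pi_\text{max}\ge 1/\Eset[1/\pi]$.

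I expect the main obstacle to be the reduction in the first two paragraphs rather than the final optimization, which is elementary once set up. Specifically, the delicate points are: (i) justifying that for an \emph{arbitrary} mechanism in $\Ccal$ — where $\alpha_k,\pi^r_k,\pi^p_k$ may be intricate functions of all $N$ reports — the cost collapses to the one-dimensional functional of the marginal selection rule $\alpha(\pi)$, and that the incentive ceiling $\alpha_k\le\pi^e/\pi_k$ survives this reduction; (ii) ruling out relaxing that ceiling by inflating the penalty price $\pi^p_k$, which is where the fairness/no-penalty-revenue property must be invoked; and (iii) making the almost-sure-delivery-versus-cream-skimming step precise, since it is what forces the minimizer to be the full rule $\alpha(\pi)=\pi^e/\pi$ rather than a threshold rule.
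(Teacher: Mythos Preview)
Your approach is essentially the paper's: derive the two constraints $\pi^r_k \ge \pi_k-\pi^e$ (else a called agent yields nothing, Proposition~\ref{prop:2}(c)) and $\alpha_k \le \pi^e/(\pi^r_k+\pi^e)$ (else the agent inflates its baseline, Proposition~\ref{prop:3}), saturate both to obtain $\pi^r_k=\pi_k-\pi^e$ and $\alpha_k=\pi^e/\pi_k$, and then compute the cost via Wald's equation together with the stopping-time bound $\Eset[N]\ge D/(\pi^e\Eset[b]\Eset[1/\pi])$. The paper jumps directly to ``smallest feasible reward, largest feasible selection probability'' where you instead set up and minimize the reduced-form functional over $\alpha(\cdot)$; your framing is more explicit about exactly the issue you flag as (iii), but the substance and the computation are the same.

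One correction that dissolves your concern (ii): the truthfulness ceiling is \emph{not} $\alpha_k\le\pi^p_k/(\pi^r_k+\pi^p_k)$. If $\pi^p_k>\pi^e$, an uncalled agent who has over-reported $f_k>b_k$ will simply consume $q_k=f_k$, paying the extra retail cost $\pi^e(f_k-b_k)$ and avoiding any penalty; the effective marginal deterrent to inflation is therefore $\min\{\pi^p_k,\pi^e\}$, and the ceiling is $\alpha_k\le\pi^e/(\pi^r_k+\pi^e)$ for every $\pi^p_k\ge\pi^e$ (this is exactly the calculation in the proof of Proposition~\ref{prop:1}). So inflating the penalty price cannot relax the selection-probability ceiling, and no separate fairness or no-penalty-revenue argument is needed there.
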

{\em Proof:} Refer Appendix

\begin{remark}
Note that the marginal utilities are bounded and bounded away from zero as $\pi^e \leq \pi_k \leq \pi_\text{max}$. 
A standard calculation reveals that
\[ 1/\Eset[ \pi] \leq \Eset[ 1/\pi] \leq 1/\Eset[ \pi] + {\mathcal O} (\sigma^2/m^3) \]
where $(m,\sigma^2)$ are the mean and variance of $\pi$. Thus, if the variance of true marginal utilities is modest, 
we can approximate $\Eset [ 1/\pi] \approx 1/ \Eset [\pi]$, and the minimum 
expected cost of DR provision over any mechanism in $\Ccal$ is
\beq 
\label{eq:802} 
\phi_\text{min} \approx \Eset[\pi] - \pi^e + \frac{\pi^o \Eset [\pi ]}{m \pi^e \Eset [b]} 
\eeq
\end{remark}

We next compute the minimal cost of DR under the baseline-only reporting mechanism. 
Minimizing this cost requires the aggregator to select the smallest possible reward price $\pi^r$, 
and the fewest number of customers $N$ (or the largest selection probability $\alpha$).
We have the following:

\begin{theorem} \label{thm:aa}
Suppose agent true baselines $\{b_1, b_2, \cdots \}$ and agent true marginal utilities $\{\pi_1, \pi_2, \cdots\}$ 
are i.i.d. random variables. Assume $b_{i}$s and $\pi_{k}$s are independent for all $i, k$. Under baseline-only reporting,
the aggregator's profit is maximized by the optimal parameters:
%\vspace*{-0.8cm}
\setlength{\leftmargini}{0.15in}
\balphlist \setlength{\itemsep}{-0.1cm}
\item reward price: $\dst{\pi^r = \pi_\text{max} - \pi^e}$
\item penalty price: $\dst{\pi^p \geq \pi^e}$
\item selection probability: $\dst{\alpha  = \pi^e / \pi_\text{max}}$
%\item recruited agents: $\dst{N = D\pi_\text{max} / \Eset [b] \pi^e}$
\end{list}
%The expected number of agents that need to be recruited satisfies
%\[ \Eset [ N ] \leq  D\pi_\text{max} / \Eset [b] \pi^e + 1.\]
The resulting expected cost per KWh of DR provision is
%\[ \phi < \pi_\text{max} - \pi^e + \frac{\pi^o}{m \Eset [b]}\cdot \frac{\pi_\text{max}}{\pi^e} + \frac{\pi^o}{mD} \]
\beq \label{eq:phiBO}
 \phi_\text{BO}  \leq (\pi_\text{max} - \pi^e)  \left(1 + \frac{ \pi^e \Eset[ b ] }{\pi_\text{max}D } \right)   + 
\frac{\pi^o \pi_\text{max} }{m \pi^e \Eset[ b ] }  + \frac{\pi^o}{mD}
\eeq
%Here $\Eset [b]$ is the average baseline consumption of agents in the recruited pool, 
%and $m$ is the  number of DR events that each agent is obligated to participate in.
\end{theorem}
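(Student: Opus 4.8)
The plan is to read off the equilibrium behavior from Theorem~\ref{thm:1} and then minimize the aggregator's expected cost $\phi$ of \eqref{eq:aggcost} over the admissible parameters $(N,\pi^r,\pi^p,\alpha)$. First I would reduce ``profit maximization'' to ``$\phi$ minimization'': the capacity payment the aggregator collects from the utility for committing $D$\,KWh is a fixed constant, and part~(d) of Theorem~\ref{thm:1} says no penalty revenue is received in equilibrium, so the aggregator's profit equals this constant minus $J_\text{agg}=m\,\Eset[\psi]+\pi^o\,\Eset[N]$; maximizing profit is therefore the same as minimizing $\phi=J_\text{agg}/D$. Throughout, only parameter choices satisfying the hypotheses of Theorem~\ref{thm:1}, namely $\pi^r\ge\pi_\text{max}-\pi^e$, $\pi^p\ge\pi^e$, $\alpha\le\pi^e/(\pi^r+\pi^e)$, are considered, since outside this region the incentive guarantees (a)--(e) fail and the target is no longer assured.

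Next I would evaluate the two terms of $\phi$ in equilibrium. By Theorem~\ref{thm:1}, agents report $f_k=b_k$, a called agent consumes $q^*_\text{s}=0$ and an uncalled agent consumes $q^*_\text{ns}=b_k$; hence the ex post reward \eqref{eq:reward1} paid to a called agent is $\pi^r b_k$, no penalty is ever assessed, and $\psi=\pi^r\sum_{k\ \text{called}}b_k$. Since the called agents form a random i.i.d.\ sub-sample of the recruited pool, $\Eset[\psi]=\pi^r\,\Eset\!\big[\sum_{k\ \text{called}}b_k\big]$. The selection rule \eqref{eq:aa} calls agents until their baselines just cover $D$, so the expected called reduction overshoots $D$ by at most one agent's worth; choosing $N$ as the smallest number of recruits for which the expected called reduction reaches $D$, i.e.\ $N=\lceil D/(\alpha\,\Eset[b])\rceil$, yields the two estimates $\Eset\!\big[\sum_{k\ \text{called}}b_k\big]\le D+\alpha\,\Eset[b]$ and $\Eset[N]\le D/(\alpha\,\Eset[b])+1$.

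Then I would optimize. From the previous step $\Eset[\psi]$ is proportional to $\pi^r$ with a factor independent of $\pi^r$, $\Eset[N]$ is to leading order proportional to $1/\alpha$, and neither term depends on $\pi^p$; the residual $\alpha$-dependence of $\Eset[\psi]$ is a lower-order $O(\alpha\Eset[b]/D)$ effect dominated by the $O(1/\alpha)$ recruitment saving. Hence $\phi$ is increasing in $\pi^r$ and decreasing in $\alpha$, so on the feasible region it is minimized by $\pi^r=\pi_\text{max}-\pi^e$, then by the largest admissible $\alpha=\pi^e/(\pi^r+\pi^e)=\pi^e/\pi_\text{max}$, with $\pi^p$ any value $\ge\pi^e$ (immaterial, as no penalty is paid); this gives (a)--(c). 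Substituting these parameters into $\phi=\Eset[\psi]/D+\pi^o\Eset[N]/(mD)$ together with the two estimates above gives
\[
\phi_\text{BO}\ \le\ (\pi_\text{max}-\pi^e)\Big(1+\frac{\alpha\,\Eset[b]}{D}\Big)+\frac{\pi^o}{m\,\alpha\,\Eset[b]}+\frac{\pi^o}{mD},
\]
and replacing $\alpha=\pi^e/\pi_\text{max}$ yields exactly \eqref{eq:phiBO}.

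The main obstacle I anticipate is controlling the overshoot of the called reduction beyond the target, and the recruited population size $N$, because the aggregator inevitably overpays for reduction in excess of $D$; making the renewal-type bound $\Eset[\sum_{k\ \text{called}}b_k]\le D+\alpha\Eset[b]$ precise (together with $\Eset[N]\le D/(\alpha\Eset[b])+1$) and confirming that this is what the stopping rule \eqref{eq:aa} actually delivers is the delicate part. A secondary subtlety is the $\alpha$-monotonicity: one must check that enlarging $\alpha$ toward $\pi^e/\pi_\text{max}$ genuinely decreases $\phi$, i.e.\ that the drop in recruitment cost outweighs the slight increase in expected payout caused by the coarser ``one-agent'' granularity of the selection rule.
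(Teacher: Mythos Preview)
Your plan matches the paper's proof essentially step for step: invoke Theorem~\ref{thm:1} to pin down the equilibrium, argue that $\phi$ is increasing in $\pi^r$ and decreasing in $\alpha$ so that the optimal parameters lie on the boundary of the feasible region $\pi^r\ge\pi_\text{max}-\pi^e$, $\alpha\le\pi^e/(\pi^r+\pi^e)$, and then bound $\Eset[N]$ and $\Eset[\psi]$ by a renewal-type overshoot estimate. The paper formalizes the last step by treating $N$ as the stopping time $N=\min\{n:\sum_1^n \alpha b_k\ge D\}$ and invoking Wald's equation together with the optional-stopping bound (its Propositions~\ref{thm:wald} and~\ref{thm:optional}) to get $\Eset[N]<D/(\alpha\Eset[b])+1$ and $\Eset[\psi]=\pi^r\alpha\Eset[b]\,\Eset[N]<\pi^r(D+\alpha\Eset[b])$; this is exactly the ``renewal-type bound'' you anticipated, and it sidesteps your secondary worry about $\alpha$-monotonicity since the bound is already evaluated at the extremal $\alpha$. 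The only cosmetic difference is that you take $N=\lceil D/(\alpha\Eset[b])\rceil$ deterministically, whereas the paper makes $N$ a stopping time on the realized baselines; the latter is cleaner because it lets Wald's equation do the work directly and avoids any question of whether a fixed $N$ suffices to meet the target almost surely.
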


{\em Proof:} Refer Appendix

\begin{remark} The penalty price does not affect the aggregator's cost as it derives no penalty revenue. The only constraint is that
$\pi^p \geq \pi^e$ in order for agents to report truthfully. 
\end{remark}

\begin{remark}
For large demand reduction targets $D$, our upper bound (\ref{eq:phiBO}) on the average cost of DR provision becomes
\beq \label{eq:801} \phi_\text{BO}  \leq \pi_\text{max} - \pi^e   +  \frac{\pi^o \pi_\text{max} }{m \pi^e \Eset[ b ] }  \eeq

%Note that the marginal utilities are bounded and bounded away from zero as $\pi^e \leq \pi_k \leq \pi_\text{max}$. 
%A standard calculation reveals that
%\[ 1/\Eset[ \pi] \leq \Eset[ 1/\pi] \leq 1/\Eset[ \pi] + {\mathcal O} (\sigma^2/m^3) \]
%where $(m,\sigma^2)$ are the mean and variance of $\pi$. Thus, if the variance of true marginal utilities is modest, 
%we can approximate $\Eset [ 1/\pi] \approx 1/ \Eset [\pi]$, and the minimum 
%expected cost of DR provision over any mechanism in $\Ccal$ is
%\beq 
%\label{eq:802} 
%\phi_\text{min} \approx \Eset[\pi] - \pi^e + \frac{\pi^o \Eset [\pi ]}{m \pi^e \Eset [b]} 
%\eeq
Comparing this with (\ref{eq:802}), we see that the baseline-only reporting mechanism incurs a large cost of DR provision 
because the aggregator does not have information about the true marginal utilities of agents. It only has access to the upper bound $\pi_\text{max}$.  Cost increase stems from the inflation of $\pi_\text{max}$ over $\Eset [ \pi ]$.
\end{remark}

\begin{example} \label{ex:1}
Assume that the marginal utilities of recruited agents are uniformly distributed on
$[0.3,1.3]$ \$/KWh. Equivalently, the mix of recruited agents demand this distribution of
payments for their DR services. We use typical numbers from the PG\&E jurisdiction for residential DR:

\begin{tabular}{lcl}
$\pi^o$  & \$2/agent & recruitment cost \\
$\pi_\text{max}$ & \$1.30 & max DR payment demanded by agents \\
$\Eset [ b ]$ & 5KWh & average reduction/DR event \\
$m$ & 10 & max number of DR events \\
$\pi^e$ & \$0.15 & retail price of electricity \\
$D$ & 100KWh & DR target \\
\end{tabular}

This yields an average cost of DR provision of \$1.51/KWh under baseline-only reporting. This compares unfavorably with
the lower bound $\phi_\text{min}$ = \$0.71/KWh of Theorem \ref{thm:min}.
\hfill $\Box$
\end{example}

The cost of DR provision per KWh under baseline-only reporting can be quite large. It is set by the {\em maximum} marginal utility of consumers in the recruited pool. Doing any better requires agents to reveal their true marginal utilities, allowing the aggregator to set lower reward prices while assuring that agents yield their discretionary consumption. This observation motivates us to consider a more complex mechanism which offers the promise of lower cost DR provision.

\section{Self-Reported Baseline Mechanism (SRBM)} 
\label{sec:mu-unknown}  

\subsection{SRBM Mechanism Definition}
\label{sec:srbm-mech}  

The cost of DR provision under baseline only reporting inspires us to consider a more complex mechanism which we call Self-Reported Baseline Mechanism (SRBM). 
Under SRBM, all agents submit reports $f_k$ of their baselines {\em and} $\mu_k$ of their marginal utilities. 
Agents may not be truthful. 
The key idea is to design the mechanism so that agents {\em reveal their true baselines and marginal utilities}. 
This allows the aggregator to set lower reward prices, without compromising on the delivery of the DR target.
Under SRBM, the reward prices for selected agents is determined by the submitted reports of {\em other agents}. 
In this aspect, SRBM  resembles classic mechanisms such as Vickrey-Clarke-Groves (VCG) auction pricing.

The SRBM mechanism definition is as follows:

{\em Step 1 (Pod Sorting):} \\
Based on the submitted reports, the aggregator sorts agents into $M$ pods, $\Pset^1, \cdots, \Pset^M$. 
A pod is a minimal subset of recruited 
agents that  can deliver the demand reduction target $D$ under SRBM.  We later describe a specific
pod sorting algorithm that delivers the DR target cost effectively in Subsection \ref{subsec:pod}. 

{\em Step 2 (Pod Selection):} \\
The aggregator selects one pod from the set of $M$ pods to deliver the demand reduction target $D$.  Pod $\Pset^i$ will be selected with probability $\beta^i$. The selection probability $\beta^i$ and a specific pod selection mechanism is described in Subsection \ref{subsec:pod}.  
%We will require forming at least $M$ pods where
%\[ \sum_1^{M-1} \beta^i < 1 \leq \sum_1^M \beta^i \]
%to ensure that the total probability mass is 1. 

{\em Step 3 (Agent Selection):} \\
Agents in pod $\Pset^i$ are   
{\em sorted in increasing order of their reported marginal utility} $\mu_k$.
Pods are broken into their core and their header, as sown in Figure \ref{fig:pods2}.
The first $k^*$ agents form the pod {\em core} $\Sset^i$ where
\beq
\label{eq:select-m-muuc}
\sum_{k=1}^{k^*} f_k \geq D, \quad \sum_{k=1}^{k^*-1} f_k <  D. 
\eeq
Equivalently, $k^*$ is the smallest index such that the first $k^*$ agents 
from pod $\Pset^i$ in the sorted list of marginal utilities can deliver the target $D$.  If pod $\Pset^i$ is selected in response to a DR event, agents in its core $\Sset^i$ are {\em called on} to provide their demand reduction. The remaining agents form the pod header $\Hset^i$. So, the core and header of the pod are defined as, 
\[ \Sset^i = \{1, \ldots, k^{*} \},~~ \Hset^i = \Pset^i \setminus \Sset^i\]

%i.e. the set of selected  agents to deliver the DR target $D$ is
%\[ \Sset^i = \{1, \ldots, k^{*} \} \]
%Note that $k^*$ depends on the target $D$. The complement of the core with respect to pod $\Pset^i$ is called the pod
%header $\Hset^i$, i.e. 
%\[ \Hset^i = \Pset^i \setminus \Sset^i  \]

\begin{figure}[h!]
\centering
\begin{tikzpicture}[scale = 0.8]

%\draw[help lines] (0,-2) grid (10,4);
\draw [->, very thick] (-2,-0.2) -- (7,-0.2);
%\draw[dashed] (3,-1) -- (3,1.75);
%\draw[dashed] (6,-1) -- (6,1.75);
\draw [draw=blue, fill=blue, fill opacity = 0.1, thick] (0,0) rectangle (1.9,1);
\draw [draw=red, fill=red, fill opacity = 0.1, thick] (2,0) rectangle (3.9,1);
\draw[thick,<->] (0,1.2) -- (3.9,1.2);
\draw (1.95,1.5) node[align=center] {\scriptsize Pod $\Pset^i$};

%\draw [draw=blue, fill=blue, fill opacity = 0.1, thick] (6,0) rectangle (7.9,1);

%\draw [draw=red, fill=red, fill opacity = 0.1, thick] (2,1.1) rectangle (3.9,2);
%\draw [draw=red, fill=red, fill opacity = 0.1, thick] (6,1.1) rectangle (7.9,2);
%\draw [draw=red, fill=red, fill opacity = 0.1, thick] (8,1.1) rectangle (9.9,2);
%\draw (1.5,1.75) node[anchor=north] {\small Period 1};
%\draw (4.5,1.75) node[anchor=north] {\small Period 2};
%\draw (7.5,1.75) node[anchor=north] {\small Period 3};
%
%\draw (1.5,0.5) node[align=left] {\scriptsize utility notifies \\[-0.05in] \scriptsize agg of DR event};
%\draw[thick,->] (0.3,0) -- (0.3,0.5);
\draw (1,0.5) node[align=right] {\scriptsize Core $\Sset^i$ };
\draw (3,0.5) node[align=right] {\scriptsize Head $\Hset^i$ };
%\draw (7,0.5) node[align=right] {\scriptsize Core $\Sset^M$ };

%\draw (3,1.5) node[align=right] {\scriptsize Head $\Hset^1$ };
%\draw (7,1.5) node[align=right] {\scriptsize Head $\Hset^{M-1}$ };
%\draw (9,1.5) node[align=right] {\scriptsize Head $\Hset^M$ };

%\draw[thick,->] (1.8,0) -- (1.8,-0.5);
%\draw (1.8,-0.75) node[align=center] {\scriptsize $\mu^1$ };
\draw[thick,->] (3.8,0) -- (3.8,-0.5);
\draw (5.6,-0.75) node[align=center] {\scriptsize $\nu^i$ = maximum reward price};
%\draw[thick,->] (7.8,0) -- (7.8,-0.5);
%\draw (7.8,-0.75) node[align=center] {\scriptsize $\mu^{M-1}$ };
%\draw[thick,->] (9.8,1.1) -- (9.8,-0.5);
%\draw (9.8,-0.75) node[align=center] {\scriptsize $\mu^{M}$ };
%\draw[thick,->] (2.7,0) -- (2.7,-0.5);
%
\draw (-0.25,-1) node[align=left] {\scriptsize agents sorted by increasing \\[-0.05in] \scriptsize reported  marginal utility $\mu_k$};
%\draw[thick,->] (3.3,0) -- (3.3,0.5);
%\draw (4.25,-0.5) node[align=right] {\scriptsize agg publishes \\[-0.05in] \scriptsize penalty/reward prices };
%\draw[thick,->] (5.7,0) -- (5.7,-0.5);
%
%\draw (7.5,0.5) node[align=left] {\scriptsize agents decide  \\[-0.05in] \scriptsize consumption $q_k$};
%\draw[thick,->] (6.3,0) -- (6.3,0.5);
%\draw (7.5,-0.5) node[align=right] {\scriptsize agents pay penalty  \\[-0.05in] \scriptsize or  recieve reward};
%\draw[thick,->] (8.7,0) -- (8.7,-0.5);

\end{tikzpicture}%
\caption{Pod structure: core and header.}
\label{fig:pods2}
\end{figure}
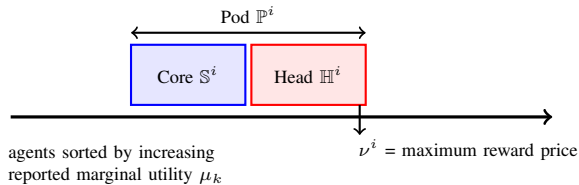

{\em Step 4 (Reward Pricing):} \\
Selected agents (i.e. in $\Sset^i$) receive a reward price $\pi^r_k$ \$/KWh for consumption reduction $(f_k-q_k)^+$ 
below their self-reported baseline. 
Selected agents do not face penalties. 

Let $\Sset_{-k}$ be the set of agents who {\em would have been selected} from pod $\Pset^i$ %(according to \eqref{eq:select-m-muuc}), 
if agent $k$ was not participating in the DR program. Define the reward price for agent $k$ to be
\begin{align}
\label{eq:pslope-main}
& \pi^r_k =  \max\{\mu_j\} -\pi^e \ , \ j \in {\Sset}_{-k}. 
\end{align}
We stress that the reward price $\pi^r_k$ depends on the target $D$, and  is agent-specific.

{\em Step 5 (Penalty Pricing):} \\
Agents who are not selected (i.e. those in the header $\Hset^i$) face a penalty price $\pi^p$ for consumption
deficits $(f_k - q_k)^+$ below their reported baselines.
Under SRBM, the penalty price $\pi^p$ for agent $k$ is chosen to satisfy $\pi^p \geq \pi^e$. 
It is best to select the smallest penalty price, i.e. $\pi^p = \pi^e$, so as not to discourage agents from participating.

This completes the SRBM mechanism definition.

\subsection{Agent Decisions} 
\label{sec:agent-dec}

As with the baseline-only mechanism, agent $k$ solves a two-stage optimization problem. 
In the first stage, it optimally selects its baseline report $f_k$. 
In the second stage, it is informed whether or not it is selected, and then optimally selects its consumption $q_k$ during the DR event. 
The complexity is that reward price $\pi^r_k$ for agent $k$ depends on its submitted reports $(f_k, \mu_k)$ and the reports of all other agents
$(f_{-k}, \mu_{-k})$. These other reports are private information, unavailable to agent $k$.

Suppose agent $k$ submits a baseline report $f_k$. If this agent is selected, its second stage cost function is
\[ J_\text{s}(q_k, f_k) = \pi^e q_k - u(q_k) - \pi^r_k (f_k - q_k)^+. \]
If the agent is not selected, it second stage cost function is
\[ J_\text{ns}(q_k, f_k) = \pi^e q_k - u(q_k) + \pi^p (f_k - q_k)^+. \]
Selected agents are rewarded for consumption reduction from their reported baselines, and agents that are not selected are
penalized for consumption shortfalls below their reported baselines.

Define the optimal consumptions
\begin{eqnarray*}
q^*_\text{s}  =  \arg\min_{q_k} J_\text{s}(q_k, f_k),~~ q^*_\text{ns} =  \arg\min_{q_k} J_\text{ns}(q_k , f_k). 
\end{eqnarray*}
Note that these depend on the first stage decision $f_k$, i.e. the reported baselines.

Let $\beta_k$ be the probability that agent $k$ is selected. Its first stage decision problem is to minimize its expected cost:
\[ J_k(f_k,\mu_k\mid f_{-k}, \mu_{-k} ) =  \beta_k J_\text{s}(q^*_\text{s}, f_k) + (1 - \beta_k) J_\text{ns}(q_\text{ns}^*, f_k)  \]

We use the following notion. 
\begin{definition} {\em [Dominant strategy]} \newline
Let $J_k(f_k,\mu_k\mid f_{-k}, \mu_{-k} )$ be the expected cost for agent $k$ when it reports  $(f_{k}, \mu_{k})$ 
and other agents report $(f_{-k}, \mu_{-k})$. The pair  $(f^{*}_{k}, \mu^*_k)$ is a dominant strategy report for agent $k$ if 
\[ J_k(f_k^*,\mu_k^*\mid f_{-k}, \mu_{-k} ) \leq J_k(f_k,\mu_k\mid f_{-k}, \mu_{-k} ) \]
for all reports $(f_{k}, \mu_{k})$ and $(f_{-k}, \mu_{-k})$.
\end{definition}

The agents have knowledge of only what is informed to them as outlined in section \ref{sec:event-timeline}. In particular each agent is informed the following, { a) that the agent is recruited with a set of agents such that their net reported capacity meets the load reduction requirement, b) that the agents in this group are ordered in increasing order of their marginal utility reports, c) that when this group is called for DR service all or some of the agents in set $\Sset$ which consitute the first $k^*$ agents (as given by \eqref{eq:select-m-muuc}) are selected to provide DR service, d) that the reward and penalty are determined by  \eqref{eq:pslope-main} and $\pi^p_k = \pi^e$, and d) that the probability of calling an agent $k$ in this set $\Sset$ is given by, $\beta_k = \pi^e/(\pi^r_k + \pi^e)$}. \\
Hence under SRBM, { agents are not privy to the pod sorting algorithm}. We call this as the {\it partial information} (PI) setting. Our next result establishes that truthful reporting of baselines and marginal utilities is a dominant strategy for  SRBM under this partial information setting. Mechanism design for truthful reporting under complete information revelation where a consumer is also informed of the pod sorting algorithm is much more complex, and is discussed in the appendix.

%\begin{remark}
%The information that is revealed to the agents, in this partial information setting is consistent with the overall SRBM and at the same time reveals as much as required to precisely define their selection and reward mechanism. This is required so that the consumers participate. Note that the selection probability of an individual agent, $\beta^k$, could be different from the pod selection probability $\beta^i$ of the pod that the agent belongs. This inconsistency is only apparent and is resolved in section \ref{subsec:pod}.
%\end{remark}

We now offer our  main result which establishes key properties of SRBM.
\begin{theorem}\label{thm:muuc-main}
Under the partial information setting (PI), SRBM has the following properties:
\vspace*{-0.08in}
\setlength{\leftmargini}{0.15in}
\balphlist \setlength{\itemsep}{-0.1cm}
\item truthful reporting of baselines and marginal utilities % i.e. $(f^*_{k}, \mu^*_{k}) = (b_{k}, \pi_k)$, 
is a dominant strategy 
\item called agents consume $q_\text{s}^* = 0$, providing the maximal reduction in their discretionary consumption
\item agents that are not called consume $q_\text{ns}^* = b_k$
\item the aggregator receives no penalty revenue 
\item the load reduction target $D$ is met by each pod core.
\end{list}
\end{theorem}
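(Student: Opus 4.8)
The plan is to analyze each agent's two--stage problem by backward induction, closely paralleling the proof of Theorem~\ref{thm:1} but now carrying the agent--specific reward price $\pi^r_k$ of \eqref{eq:pslope-main} and the calling probability $\beta_k=\pi^e/(\pi^r_k+\pi^e)$. Throughout I fix an arbitrary realized configuration (agent $k$'s pod membership and the other agents' reports $(f_{-k},\mu_{-k})$); since I will show truthful reporting minimizes agent $k$'s cost pointwise over \emph{every} such configuration, it is optimal in expectation under the PI information structure regardless of the agent's beliefs about the pod--sorting algorithm. The one structural observation used repeatedly is that $\Sset_{-k}$ --- the selection made when agent $k$ is absent --- and hence $\pi^r_k$ depend only on $(f_{-k},\mu_{-k})$, not on agent $k$'s own report $(f_k,\mu_k)$.

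\emph{Step 1 (second stage).} Fixing a report $(f_k,\mu_k)$ and the realized role, I minimize the piecewise--linear second--stage cost using $u(\cdot)$ from \eqref{eq:utility-1} and $\pi^p\ge\pi^e$. For a not--called agent the cost is $(\pi^e-\pi_k-\pi^p)q_k+\pi^p f_k$ on $[0,\min(f_k,b_k)]$; since $\pi_k>\pi^e$ the minimizer is $q^*_\text{ns}=b_k$ when $f_k\le b_k$ (cost $(\pi^e-\pi_k)b_k$, \emph{zero penalty paid}) and $q^*_\text{ns}=f_k$ when $f_k>b_k$ (cost $\pi^e f_k-\pi_k b_k$), so $J_\text{ns}(q^*_\text{ns},f_k)$ is non--decreasing in $f_k$ and constant on $f_k\le b_k$. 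For a called agent the cost is $(\pi^e-\pi_k+\pi^r_k)q_k-\pi^r_k f_k$ on $[0,\min(f_k,b_k)]$: when $\pi^r_k\ge\pi_k-\pi^e$ the minimizer is $q^*_\text{s}=0$ with $J_\text{s}(q^*_\text{s},f_k)=-\pi^r_k f_k$ for $f_k$ not too small (and the constant $(\pi^e-\pi_k)b_k$ otherwise), so $J_\text{s}(q^*_\text{s},\cdot)$ is non--increasing in $f_k$; when $\pi^r_k<\pi_k-\pi^e$ the minimizer is $q^*_\text{s}=b_k$ on $f_k\le b_k$, giving cost $(\pi^e-\pi_k)b_k$. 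These facts yield parts (b)--(d) once truthfulness is in hand.

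\emph{Step 2 (first stage, baseline).} Holding $\mu_k$ (hence $\beta_k$ and the core/header role) fixed, I substitute into $J_k=\beta_k J_\text{s}(q^*_\text{s},f_k)+(1-\beta_k)J_\text{ns}(q^*_\text{ns},f_k)$. The key algebraic fact is that the SRBM choice $\beta_k=\pi^e/(\pi^r_k+\pi^e)$ makes the slope of $J_k$ in $f_k$ on the inflation region $f_k>b_k$ exactly $-\beta_k\pi^r_k+(1-\beta_k)\pi^e=0$, while on $f_k\le b_k$ the slope is $-\beta_k\pi^r_k\le 0$. Hence $J_k$ is non--increasing on $[0,b_k]$ and flat on $[b_k,\infty)$, so $f_k=b_k$ is optimal: the mechanism removes any incentive to inflate or deflate the baseline. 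Substituting $f_k=b_k$ back into Step~1 gives $q^*_\text{ns}=b_k$ (part (c)) and zero penalty revenue (part (d)); once Step~3 shows $\mu_k=\pi_k$ forces $\pi^r_k\ge\pi_k-\pi^e$ for called agents, it also gives $q^*_\text{s}=0$ (part (b)); and part (e) follows because a called pod core then delivers $\sum_{k\le k^*}(f_k-q^*_\text{s})=\sum_{k\le k^*}b_k\ge D$ by \eqref{eq:select-m-muuc}.

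\emph{Step 3 (first stage, marginal utility).} This is the crux. Since $\pi^r_k$ does not depend on $\mu_k$, the report $\mu_k$ affects agent $k$ only through whether it lands in the core or the header of its pod, so I must check that $\mu_k=\pi_k$ always lands $k$ in the cheaper role and that no $\mu_k$--deviation undercuts it. The needed ingredient --- and the step I expect to take the most care --- is a combinatorial monotonicity lemma: if $k$ lands in the core then $\max_{j\in\Sset_{-k}}\mu_j\ge\mu_k$, because the core is the \emph{minimal} $\mu$--sorted prefix meeting $D$ (cf.\ \eqref{eq:select-m-muuc}), so deleting a core agent forces the greedy selection to reach at least as far up the sorted list. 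Two cases follow. If $k$ is truthfully in the core, the lemma gives $\pi^r_k\ge\pi_k-\pi^e$, so by Step~1 the core expected cost is $-\tfrac{\pi^r_k}{\pi^r_k+\pi^e}\pi_k b_k$, which a one--line computation shows is $\le(\pi^e-\pi_k)b_k$, the header cost; hence truthful is optimal and over--reporting $\mu_k$ (which would push $k$ into the header) only raises the cost. If $k$ is truthfully in the header, then $\mu_k=\pi_k$ exceeds every core agent's report and $\Sset_{-k}=\Sset$, so $\pi^r_k\le\pi_k-\pi^e$; by Step~1 the expected cost in \emph{either} role is then exactly $(\pi^e-\pi_k)b_k$, so under--reporting $\mu_k$ into the core changes nothing. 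In all cases $(f_k,\mu_k)=(b_k,\pi_k)$ is weakly optimal, giving part (a); together with Steps~1--2 this establishes the theorem.
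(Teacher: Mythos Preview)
Your proposal is correct and follows essentially the same approach as the paper: you verify the selection--probability condition $\beta_k=\pi^e/(\pi^r_k+\pi^e)$ to get baseline truthfulness (the paper does this by invoking Proposition~\ref{prop:1}, you re-derive it inline), and for marginal--utility truthfulness you use the same key observation that $\pi^r_k$ depends only on $(f_{-k},\mu_{-k})$ together with the same core/header case analysis (your ``monotonicity lemma'' $\max_{j\in\Sset_{-k}}\mu_j\ge\mu_k$ is exactly what the paper establishes in its case ``not selected but would be selected if truthful''). The only cosmetic difference is that the paper organizes Step~3 as four cases indexed by (strategic role, truthful role) whereas you index by truthful role and check deviations; the content is identical.
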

{\em Proof:} See Appendix \ref{sec:pf:thm:muuc-main}. 

\begin{remark} 
Under SRBM, each consumer reports its true baseline $b_k$. 
%Agents who are not called are faithful to this baseline in their actual consumption. 
As a result, measuring load reduction from the reported baseline coincides with the true load reduction,
and called agents are paid accurately for the services they offer. This makes SRBM fair
Agents who are not called consume $b_k$, which is exactly what they would have if they were not participating in the DR program.
SRBM has a fairness property. An agent who happens to be on vacation will report  their intended consumption, and must incur 
dis-utility to receive a DR payment.
The selection process \eqref{eq:select-m-muuc} meets the load reduction target $D$.
This is because the selected agents completely yield their discretionary electricity consumption during the DR event,  
curtailing their consumption precisely by $b_k$.  
This is an artifact of our piece-wise linear utility function model. 
A more nuanced analysis with general concave utilities will yield a nonzero actual consumption. 
This analysis is more complex and only serves to mask the simplicity of our SRBM.
\end{remark} 

%\begin{remark} 
%From a social welfare perspective,  the SRBM selection process is  \emph{nearly efficient}. This is because it targets 
%agents with the smallest marginal utilities for load reduction, and these agents yield their maximum possible curtailment of discretionary consumption. 
%As a result, the aggregate agent dis-utility $- \sum_{k}u_{k}(\cdot)$ is minimized under our mechanism. 
%%However, it is not evident if SRBM is optimal from the perceptive of the aggregator.  
%%It is possible that other mechanisms exist that minimize the expected payout to agents.
%\end{remark} 
 
%\begin{remark}
%Under SRBM, reward price are agent specific. We could instead offer the uniform reward price 
%\[ \nu^i - \pi^e, \ \text{where } \nu^i = \max_{k \in \Pset^i} \mu_k \]
%for all agents in pod $\Pset^i$. This results is a modestly higher average cost of DR provision to the aggregator.
%\end{remark}

\begin{remark} 
We have assumed that agent utility functions (and resulting true baseline consumption $b_{k}$) are deterministic. 
However, $b_{k}$ depends on (exogenous) random parameters such as temperature and occupancy. For example, 
A more realistic model would accommodate dependence on exogenous random processes such as temperature and occupancy.
This might result, for example, in a baseline consumption of the form $b_{k} = \overline{b}_k + a_{k} |\theta - \theta_{0}|$. Here, $\theta$ is the realized temperature during the DR event, and $\theta_{0}$ is the predicted temperature. In this case, agents can be required to
report their best-effort forecast $\overline{b}_k$ of their baseline consumption along with the temperature sensitivity $a_{k}$. 
Historical consumption data can be used to assist agents in making these reports. The SRBM mechanism can  be easily extended to incorporate
these more complex reporting scenarios.  The most general scenarios with uncertain utility functions that
explicitly depend on exogenous random processes $\theta$  is challenging and is an ongoing work.  
\end{remark}  

%\begin{remark}
%We could have developed SRBM without considering agent utility functions using the modern micro-economic language of preferences.
%In this case, we interpret $\pi_k$ as the minimum price agent $k$ is willing to receive for demand reduction, and $b_k$ is the maximum discretionary demand reduction agent $k$ is willing to provide. With these interpretations, SRBM can be viewed as an experimental
%means of deducing agent preferences.
%\end{remark}

\subsection{A Pod Sorting Algorithm} \label{subsec:pod} 

We now offer a specific algorithm that sorts recruited agents into pods. From the aggregator's perspective, 
this sorting is attractive as it results in an expected cost of DR provision that is nearly optimal (see Therorem \ref{thm:cc}). The selection probability of a pod $\Pset^i$ is set as, 
\beq \label{eq:pod-sel-prob} \beta^i = \pi^e/\nu^i \leq \beta^i_k  = \pi^e/(\pi^r_k + \pi^e) \ , \ k \in \Sset^i\eeq

\begin{remark}
Pod selection probability $\beta^i$ may not be consistent with the individual selection probabilities of an agent $k$ in $\Sset^i$, which is $\beta^i_k$. When $D$ is large enough and $b$ is small, one can expect $\beta^i_k \sim \beta^i$ and SRBM to be consistent with the individual seleciton probabiliy $\beta^i_k$. In order for SRBM to be consistent, an agent $k (\in \Pset^i)$ is made available for selection, even after $\Pset^i$ is selected for up to $\beta^i$ fraction of time, till the agent $k$'s frequency adds up to $\beta^i_k$. It is in this sense the consumers are informed that {\it all or some consumers} belonging to the pod core would be selected when the pod is selected for DR service (Refer Section \ref{sec:agent-dec}). 
\end{remark}

%First notice that our main result Theorem \ref{thm:muuc-main}, argues that if the pod selection probability satisfies 
%\beq \label{eq:9999}  \beta^i \leq \frac{\pi^e}{\nu^i}, \ \text{where } \nu^i = \max_{k \in \Pset^i} \mu_k \eeq
%then agents will truthfully report their baselines and marginal utilities. To minimize its recruitment costs, 
%the aggregator should recruit the smallest number of agents or maximize the selection probability.
%Thus we set $\beta^i = (\pi^e/\nu^i)$, 
%and from Theorem \ref{thm:muuc-main}, we can write $f_k = b_k$ and $\mu_k = \pi_k$.

For a selected pod $\Pset^{i}$ to deliver the DR target, 
the pod core must contain $k^*$ agents, where $\sum_1^{k^*} b_k \geq D$. 
In order to compute the reward prices $\pi^r_k$, the pod header must also contain sufficiently many agents to determine the reward prices $\pi^r_k$.
More precisely, we require that if any agent in the core $\Sset$ is removed, we can still find sufficiently many agents
to determine $\Sset_{-k}$. A sufficient condition for this is that the header $\Hset^i$ contain $k^{**}$ agents, where
$\sum_{k^* +1}^{k^* + k^{**}} b_k \geq D$. 
Thus the selection criterion for pods cores and headers is identical.
{\em  Since agents in the header of a pod are not called, they can serve as the core of another pod.}
This key idea is illustrated in Figure \ref{fig:pods}. It allows us to reduce the total number of agents that must be recruited 
which is one component of the cost of DR provision.

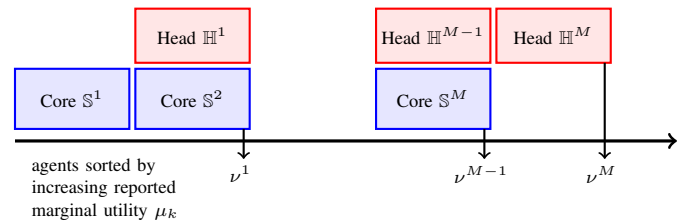
\begin{figure}[h!]
\centering
\begin{tikzpicture}[scale = 0.8]

%\draw[help lines] (0,-2) grid (10,4);
\draw [->, very thick] (0,-0.2) -- (11,-0.2);
%\draw[dashed] (3,-1) -- (3,1.75);
%\draw[dashed] (6,-1) -- (6,1.75);
\draw [draw=blue, fill=blue, fill opacity = 0.1, thick] (0,0) rectangle (1.9,1);
\draw [draw=blue, fill=blue, fill opacity = 0.1, thick] (2,0) rectangle (3.9,1);
\draw [draw=blue, fill=blue, fill opacity = 0.1, thick] (6,0) rectangle (7.9,1);

\draw [draw=red, fill=red, fill opacity = 0.1, thick] (2,1.1) rectangle (3.9,2);
\draw [draw=red, fill=red, fill opacity = 0.1, thick] (6,1.1) rectangle (7.9,2);
\draw [draw=red, fill=red, fill opacity = 0.1, thick] (8,1.1) rectangle (9.9,2);
%\draw (1.5,1.75) node[anchor=north] {\small Period 1};
%\draw (4.5,1.75) node[anchor=north] {\small Period 2};
%\draw (7.5,1.75) node[anchor=north] {\small Period 3};
%
%\draw (1.5,0.5) node[align=left] {\scriptsize utility notifies \\[-0.05in] \scriptsize agg of DR event};
%\draw[thick,->] (0.3,0) -- (0.3,0.5);
\draw (1,0.5) node[align=right] {\scriptsize Core $\Sset^1$ };
\draw (3,0.5) node[align=right] {\scriptsize Core $\Sset^2$ };
\draw (7,0.5) node[align=right] {\scriptsize Core $\Sset^M$ };

\draw (3,1.55) node[align=right] {\scriptsize Head $\Hset^1$ };
\draw (7,1.55) node[align=right] {\scriptsize Head $\Hset^{M-1}$ };
\draw (9,1.55) node[align=right] {\scriptsize Head $\Hset^M$ };

%\draw[thick,->] (1.8,0) -- (1.8,-0.5);
%\draw (1.8,-0.75) node[align=center] {\scriptsize $\mu^1$ };
\draw[thick,->] (3.8,0) -- (3.8,-0.5);
\draw (3.8,-0.75) node[align=center] {\scriptsize $\nu^1$ };
\draw[thick,->] (7.8,0) -- (7.8,-0.5);
\draw (7.8,-0.75) node[align=center] {\scriptsize $\nu^{M-1}$ };
\draw[thick,->] (9.8,1.1) -- (9.8,-0.5);
\draw (9.8,-0.75) node[align=center] {\scriptsize $\nu^{M}$ };
%\draw[thick,->] (2.7,0) -- (2.7,-0.5);
%
\draw (1.5,-1) node[align=left] {\scriptsize agents sorted by \\[-0.05in]  \scriptsize increasing reported \\[-0.05in] \scriptsize marginal utility $\mu_k$};
%\draw[thick,->] (3.3,0) -- (3.3,0.5);
%\draw (4.25,-0.5) node[align=right] {\scriptsize agg publishes \\[-0.05in] \scriptsize penalty/reward prices };
%\draw[thick,->] (5.7,0) -- (5.7,-0.5);
%
%\draw (7.5,0.5) node[align=left] {\scriptsize agents decide  \\[-0.05in] \scriptsize consumption $q_k$};
%\draw[thick,->] (6.3,0) -- (6.3,0.5);
%\draw (7.5,-0.5) node[align=right] {\scriptsize agents pay penalty  \\[-0.05in] \scriptsize or  recieve reward};
%\draw[thick,->] (8.7,0) -- (8.7,-0.5);

\end{tikzpicture}%
\caption{Illustration of pod sorting: cores, headers, and max reward prices.}
\label{fig:pods}
\end{figure}

The second key idea in our pod selection algorithm is to organize agents into pods so that pods with large rewards 
are selected with low probability.
This reduces the expected payout to agents. We begin by sorting agents
in increasing order of their reported marginal utility as shown in Figure \ref{fig:pods}.
The maximum reward price paid to agents in pod $\Pset^i$ is bounded by 
\[ \pi^r_k \leq \nu^i - \pi^e, \ \text{where } \nu^i = \max_{k \in \Pset^i} \pi_k. \]
Also, pod $\Pset^i$ is selected with probability $\beta^i = \pi^e/\nu^i$.
As a result, pods with larger reward prices are selected with lower probability, reducing the expected cost of DR provision.
Agents with high marginal utility are called on less frequently, reducing the expected dis-utility.
This simple sorting algorithm is detailed below in {\it pod sorting algorithm}.

\subsubsection{Pod Selection} A random number $u \sim \mathcal{U}[0,1]$ is drawn. Pod core $\Sset^i$ is selected for DR if $ u \in [ \sum_{j =1}^{i-1} \beta^j, \sum_{j =1}^{i} \beta^j] $. To be consistent with the selection probability of an individual consumer a consumer $k$ in $\Sset^i$ is selected for all draws $ u \in [ \sum_{j =1}^{i-1} \beta^j, \sum_{j =1}^{i-1} \beta^j + \beta^i_k] $. Consumers who are selected are rewarded for reduction from reported baseline based on unit reward announced to them. Consumers who are not selected are penalized for consuming less than the reported baseline. Because the number of pods that are recruited is such that $\sum_i^{M-1} \beta^i < 1 \leq \sum_i^{M} \beta^i $, this pod selection procedure guarantees that the required target $D$ is delivered always.

\renewcommand{\thealgorithm}{}
\begin{algorithm}[!th]
\floatname{algorithm}{}
\caption{\bf Pod Sorting Algorithm}
\label{alg:drmm}
\setlength{\leftmargini}{0.2in}
\begin{list}{\arabic{l1}}{\usecounter{l1} \setlength\labelwidth{0.5in}}
\item Sort agents in  the increasing order of reported marginal utilities
\item Set pod index $i = 1$. Set $n=1$ 
\item Place $k^{*}(i)$ agents indexed from $n$ to $n+k^{*}(i)-1$ in pod core $\Sset^i$  where $k^*(i)$ is the smallest number such that $\dst{ \sum_{j=n}^{n+k^{*}(i)-1} f_{j} \geq D}$. Increment $n \leftarrow n+k^{*}(i)$
\item Place $k^{*}(i+1)$ agents indexed from $n$ to $n+k^{*}(i+1)-1$ in pod header $\Hset^i$  where $k^*(i+1)$ is the smallest number such that $\dst{ \sum_{j=n}^{n+k^{*}(i+1)-1} f_{j} \geq D}$. \\ Increment $n \leftarrow n+k^{*}(i+1)$
\item Define the pod $\Pset^i  = \Sset^i \cup \Hset^i$. Define the pod selection probability $\beta^i = \min \beta^i_k, \ k \in \Sset^i$. Increment $i \leftarrow i+1$
\item Define $\Sset^i  = \Hset^{i-1}$. 
\item If $ \sum_{i} \beta^i < 1$, go to step 4. Else stop.
\end{list} 
\end{algorithm} 

Next, we give a characterization of the number of pods $M$ and number of recruited agents $N$ under this pod sorting algorithm. 

\begin{theorem}
\label{thm:MNbounds}
Let $N$ be the number of recruited agents,  $\psi$ be the payout to agents per DR event, 
and $M$ be the number of pods. Under the pod sorting algorithm,
\begin{align*}
\Eset[N]  &\leq \left(\Eset[M] + 1\right) (D/\Eset [ b ] + 1), \\
\Eset[M] &\leq  \Eset [ \pi ] \ \pi^e + 3. 
\end{align*}
\end{theorem}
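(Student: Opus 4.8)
The plan is to analyze the pod sorting algorithm directly, tracking how many agents are consumed at each step and how the pod selection probabilities accumulate. First I would establish the bound on $\Eset[N]$ in terms of $\Eset[M]$. By construction, each pod core $\Sset^i$ is the minimal prefix of the sorted list of not-yet-used agents whose reported baselines sum to at least $D$; since each agent reports truthfully (Theorem \ref{thm:muuc-main}(a)), $f_k = b_k$, so $\sum_{j \in \Sset^i} b_j \geq D$ while $\sum_{j \in \Sset^i \setminus \{\text{last}\}} b_j < D$. This gives $\sum_{j \in \Sset^i} b_j < D + \max_j b_j$, and so the number of agents in $\Sset^i$ is a random quantity whose conditional expectation, via Wald's identity, is at most $D/\Eset[b] + 1$ (the $+1$ absorbing the overshoot of the minimal stopping rule). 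Because headers are reused as the next pod's core (step 6), the $M$ pods are covered by $M+1$ disjoint ``core-sized'' blocks of agents, each of conditional expected size at most $D/\Eset[b]+1$; summing and taking expectations, and being slightly careful that $M$ is itself random (so one should condition on $M$ or invoke a stopping-time argument for the number of blocks), yields $\Eset[N] \leq (\Eset[M]+1)(D/\Eset[b]+1)$.

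Next I would bound $\Eset[M]$. The algorithm stops once $\sum_i \beta^i \geq 1$, where $\beta^i = \min_{k \in \Sset^i}\beta^i_k = \pi^e/\nu^i$ with $\nu^i = \max_{k \in \Pset^i}\pi_k$. Since agents are sorted in increasing order of reported (hence true) marginal utility, the values $\nu^1 \leq \nu^2 \leq \cdots$ are nondecreasing, and each $\nu^i$ lies in $[\pi^e, \pi_\text{max}]$. The key observation is that $\sum_{i=1}^{M-1}\beta^i < 1$, i.e. $\sum_{i=1}^{M-1} \pi^e/\nu^i < 1$. Replacing the increasing sequence $1/\nu^i$ by its running average and relating $\sum_i 1/\nu^i$ to $(M-1)\cdot\Eset[1/\nu]$-type quantities would let me conclude that $M$ cannot be too large: heuristically $M \cdot \pi^e \cdot \Eset[1/\pi] \lesssim 1$ up to edge terms coming from the overshoot at the last pod and from the fact that $\nu^i$ is a maximum over a pod rather than a single draw. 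The stated bound $\Eset[M] \leq \Eset[\pi]\,\pi^e + 3$ (I read this as $\Eset[M] \leq 1/(\pi^e \Eset[1/\pi]) + 3$, or an inversion thereof — I would double-check the constant/direction against the appendix) then follows by taking expectations, with the additive constant $3$ absorbing: the final pod that pushes the sum past $1$, the gap between $\min_k \beta^i_k$ and $\beta^i$, and the discrepancy between $\nu^i$ as a pod-maximum and the i.i.d.\ marginal-utility samples.

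The main obstacle I anticipate is making the $\Eset[M]$ bound rigorous: $\nu^i = \max_{k \in \Pset^i}\pi_k$ is the maximum of a \emph{random} number of i.i.d.\ draws (the pod size itself depends on the $b_k$'s, which are independent of the $\pi_k$'s — this independence, guaranteed by the hypothesis, is what I would lean on heavily), and the $\nu^i$ across pods are not independent since sorting couples them. I would handle this by conditioning on the sorted order statistics of all the $\pi_k$'s, noting that within this conditioning the pod-to-pod assignment is monotone, and then bounding $\sum_i 1/\nu^i$ from below by an integral/Riemann-sum comparison against $\Eset[1/\pi]$, using that each pod contains at least one agent and that pod sizes have bounded expectation from the first part. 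A secondary technical point is the joint randomness of $M$ and $N$ in the first inequality; I would resolve it by writing $N = \sum_{i=1}^{M+1} |B_i|$ for the core-sized blocks $B_i$, observing $\{M+1 \geq i\}$ is determined by $B_1,\dots,B_{i-1}$ (a stopping-time / optional-stopping structure), and applying Wald's identity once more. Routine concentration and the explicit bounds $\pi^e \leq \pi_k \leq \pi_\text{max}$ then close the argument.
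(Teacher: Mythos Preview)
Your treatment of the $\Eset[N]$ bound is essentially the paper's: decompose $N=\sum_{i=1}^{M+1} N_i$ into disjoint core-sized blocks, bound $\Eset[N_i]\leq D/\Eset[b]+1$ via the minimal-overshoot stopping rule (Proposition~\ref{thm:optional}), and apply Wald's identity with $M$ a stopping time.

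The gap is in your approach to $\Eset[M]$. You propose to bound $\sum_{i=1}^{M-1} 1/\nu^i$ \emph{from below} by a Riemann-sum comparison against $\Eset[1/\pi]$. But $\nu^i$ is the maximum of all marginal utilities in pod $\Pset^i$, and since the agents are globally sorted by $\pi$, the sequence $\nu^1\leq\nu^2\leq\cdots$ is biased toward $\pi_{\max}$; there is no usable lower bound on $\sum_i 1/\nu^i$ of the form $(M-1)\Eset[1/\pi]$. Indeed such a bound would give $\Eset[M]\lesssim 1/(\pi^e\Eset[1/\pi])$, strictly tighter than the theorem and generally false. The obstacle you anticipate (the $\nu^i$ are maxima over random-size, dependent pods) is real, and the Riemann-sum route does not resolve it.

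The paper flips the direction. By the AM--HM inequality,
\[
(M-1)^2 \;\leq\; \Bigl(\sum_{i=1}^{M-1} \tfrac{1}{\nu^i}\Bigr)\Bigl(\sum_{i=1}^{M-1} \nu^i\Bigr) \;<\; \frac{1}{\pi^e}\sum_{i=1}^{M-1} \nu^i,
\]
using only the stopping rule $\pi^e\sum_{i<M} 1/\nu^i<1$. This reduces the problem to an \emph{upper} bound on $\sum_i \nu^i$. For that, the paper exploits the sorted structure in the opposite direction from what you attempt: each $\nu^i$ is dominated by every $\pi_{[k]}$ with $k$ in a later pod core, so $N_{i+2}\,\nu^i\leq\sum_{k\in\Sset^{i+2}}\pi_{[k]}$, and summing gives $\sum_{i} N_{i+2}\nu^i\leq\sum_{k=1}^N \pi_k$. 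Taking expectations (independence of the $b$'s and $\pi$'s decouples $N_{i+2}$ from $\nu^i$) and applying Wald on the right yields $\Eset\bigl[\sum_{i<M}\nu^i\bigr]\leq(\Eset[M]+1)\Eset[\pi]$. Jensen plus the quadratic $(\Eset[M]+1)(\Eset[M]-3)\leq(\Eset[M]-1)^2$ then gives $\Eset[M]\leq\Eset[\pi]/\pi^e+3$ (the statement's $\Eset[\pi]\,\pi^e$ is a typo for $\Eset[\pi]/\pi^e$). The AM--HM step and the upper bound on $\sum_i\nu^i$ via the full sum $\sum_k\pi_k$ are the missing ideas in your proposal.
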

\begin{proof}
Refer Appendix
\end{proof}

\subsection{Aggregator Cost} 
%
%Let $N$ be the number of recruited agents,  $\psi$ be the payout to agents per DR event, 
%and $M$ be the number of pods. Then we have the following result.
%
%\begin{theorem} \label{thm:bounds}
%Under the pod sorting algorithm, we have the following bounds:
%\vspace*{-0.1in}
%
%\setlength{\leftmargini}{0.1in}
%\balphlist
%\item expected payout to agents per DR event
%\[ \Eset[ \psi] \leq \left(\Eset[M] - 1 \right) \pi^e D \]
%\item expected number of  recruited agents
%\[ \Eset[N]  \leq \left(\Eset[M] + 1\right) (D/\Eset [ b ] + 1) \]
%\item expected number of pods 
%\[ \Eset[M] \leq  \Eset [ \pi ] \ \pi^e + 3 \]
%\end{list}
%\end{theorem}
%
%{\em Proof:} See Appendix G.

We are now in a position to compute the expected cost of DR provision under SRBM with our pod sorting algorithm. This given in the following theorem. 

\begin{theorem} \label{thm:cc}
Suppose agent true baselines $\{b_1, b_2, \cdots \}$ and agent true marginal utilities $\{\pi_1, \pi_2, \cdots\}$ 
are i.i.d. random variables. Assume $b_{i}$s and $\pi_{k}$s are independent for all $i, k$. Then, the expected cost $\phi$ per KWh of DR provision 
under SRBM with the pod sorting algorithm satisfies
\beq \label{eq:tmp20} 
\phi_\text{SRBM} \leq \Eset[\pi] + 2 \pi^e  + \left(\frac{\pi^o}{m\Eset [ b ]} + \frac{\pi^o}{mD}\right) \left(\frac{\Eset [ \pi ]}{\pi^e} + 3 \right)
\eeq
\end{theorem}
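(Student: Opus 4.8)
The plan is to start from the cost decomposition \eqref{eq:aggcost}, $\phi=\Eset[\psi]/D+\pi^o\Eset[N]/(mD)$, and bound the two terms separately; the recruitment term is a direct rewrite of Theorem~\ref{thm:MNbounds}, while the payout term is where the work lies. By Theorem~\ref{thm:muuc-main}, under the PI setting every agent reports $f_k=b_k$, called agents consume $q^*_\text{s}=0$ (so the metered reduction $(f_k-q_k)^+$ equals $b_k$ exactly), and the aggregator collects no penalty revenue. Hence $\psi$ is exactly the sum of the rewards paid to called agents, with nothing to subtract, and the task reduces to bounding $\Eset[\psi]$.

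For the payout I would condition on the submitted reports, which fixes the pod structure produced by the pod sorting algorithm: cores $\Sset^1,\dots,\Sset^M$ in the marginal‑utility ordering, with $\Sset^i$ the $i$-th minimal batch and $\Hset^i=\Sset^{i+1}$, so the agents that can ever be called are exactly those in $\Sset^1\cup\cdots\cup\Sset^M$ (the terminal header $\Hset^M$ is never called). Since exactly one pod is selected per DR event, and by \eqref{eq:pod-sel-prob} an agent $k$ in core $\Sset^i$ is called with probability at most $\beta^i_k=\pi^e/(\pi^r_k+\pi^e)$, it is cleanest to account per agent: when called, agent $k$ is paid $\pi^r_k b_k$. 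The crucial identity is
\[ \beta^i_k\,\pi^r_k \;=\; \frac{\pi^e\,\pi^r_k}{\pi^r_k+\pi^e}\;\le\;\pi^e , \]
so the expected reward to agent $k$ is at most $\pi^e b_k$, independent of $\pi^r_k$ (hence of the other agents' marginal‑utility reports). Summing over $\Sset^1\cup\cdots\cup\Sset^M$ gives $\Eset[\psi]\le \pi^e\,\Eset\!\big[\sum_{i=1}^M\sum_{k\in\Sset^i}b_k\big]$.

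Next I would control $\sum_{i=1}^M\sum_{k\in\Sset^i}b_k$: each core is a minimal batch, so $D\le\sum_{k\in\Sset^i}b_k<D+b_{k^*(i)}$, whence this sum is $MD$ plus a total overshoot that is $O(\Eset[M])$ and lower order relative to $MD$. This yields $\Eset[\psi]/D\lesssim\pi^e\,\Eset[M]$, and substituting $\Eset[M]\le\Eset[\pi]/\pi^e+O(1)$ from Theorem~\ref{thm:MNbounds} produces the $\Eset[\pi]+2\pi^e$ part of \eqref{eq:tmp20} up to the recorded constants. For the recruitment term I would simply plug $\Eset[N]\le(\Eset[M]+1)(D/\Eset[b]+1)$ and $\Eset[M]\le\Eset[\pi]/\pi^e+O(1)$ into $\pi^o\Eset[N]/(mD)$ and regroup, obtaining $\big(\tfrac{\pi^o}{m\Eset[b]}+\tfrac{\pi^o}{mD}\big)\big(\tfrac{\Eset[\pi]}{\pi^e}+3\big)$; adding the two bounds gives \eqref{eq:tmp20}.

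The main obstacle is the payout bound, and specifically making the per‑agent selection probabilities legitimate. Two points need care: (i) the consistency correction in the pod‑selection rule (an agent in $\Sset^i$ may also be called while $u$ lies in pod $\Pset^{i+1}$'s interval), for which one must verify that each agent's total calling probability remains at most $\pi^e/(\pi^r_k+\pi^e)$ and truncates to $0$ for agents in $\Hset^M$; and (ii) passing the expectation through the random stopping indices $k^*(i)$ and the random $M$ when bounding $\sum_{i=1}^M\sum_{k\in\Sset^i}b_k$, where one invokes the i.i.d./independence assumptions on $\{b_k\}$ and $\{\pi_k\}$ together with a renewal‑type estimate for the overshoot — this is the routine‑but‑fiddly part that pins down the exact additive constants. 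The algebraic heart, $\beta^i_k\pi^r_k\le\pi^e$, is what trades away all dependence on individual marginal utilities, leaving only $\Eset[\pi]$ to re‑enter through the pod‑count bound.
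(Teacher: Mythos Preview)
Your approach is essentially the paper's: start from \eqref{eq:aggcost}, bound $\Eset[\psi]$ via the per-agent identity $\beta^i_k\pi^r_k=\pi^e\pi^r_k/(\pi^r_k+\pi^e)\le\pi^e$, and then invoke Theorem~\ref{thm:MNbounds} for both $\Eset[M]$ and $\Eset[N]$. The paper's proof is in fact terser than yours: it writes $\Eset[\psi]\le\Eset\big[\sum_{i=1}^M\sum_{k\in\Sset^i}\beta^i_k\pi^r_k b_k\big]\le\Eset\big[\sum_{i=1}^M\pi^e D\big]=\pi^e D\,\Eset[M]$ in one line, effectively replacing $\sum_{k\in\Sset^i}b_k$ by $D$ without comment, and then substitutes the bound on $\Eset[M]$. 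Your version, which keeps the overshoot $\sum_{k\in\Sset^i}b_k-D<b_{k^*(i)}$ explicit and absorbs it into the constants, is the more careful reading of the same argument; the paper simply elides that step (and, as you can check, the resulting constants in \eqref{eq:tmp20} are already loose enough to cover it).
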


{\em Proof:} Refer Appendix

\begin{remark}
For large demand reduction targets $D$, our upper bound (\ref{eq:tmp20}) on the average cost of DR provision becomes
\beq \label{eq:803} \phi_\text{SRBM}  \leq \left(\Eset[\pi] +3 \pi^e\right)  - \pi^e +  \frac{\pi^o (\Eset[\pi] +3 \pi^e) }{m \pi^e \Eset[ b ] }  \eeq
If the spread of true marginal utilities is modest, we have argued (see (\ref{eq:802})) that the minimum 
expected cost of DR provision over any mechanism in the class $\Ccal$ is
\beq  \label{eq:803b} \phi_\text{min} \approx \Eset[\pi] - \pi^e + \frac{\pi^o \Eset [\pi ]}{m \pi^e \Eset [b]} \eeq
If the retail electricity price $\pi^e$ is small compared to expected marginal utility $\Eset[\pi]$ of the recruited agents, we have
\[ \phi_\text{SRBM} \approx \phi_\text{min} \]
Comparing expressions (\ref{eq:803}) and (\ref{eq:803b}), 
we conclude that SRBM is {\em nearly optimal} in the metric of expected cost of DR provision.
\end{remark}

\section{Case Study and Simulations}

In this section, we illustrate the  performance advantages of the SRBM mechanism compared to other standard mechanisms through numerical examples and simulations. 

%First, we discuss an example for SRBM and then provide a set of simulations that establishes the competitiveness of SRBM's cost to the optimal cost (measured in terms of competitive ratio (CR)), the ratio of SRBM cost to the optimal cost, for a wide range of scenarios. The scenarios range from a low reduction requirement ($D = 20$ kWh) to a large reduction requirement ($D = 100$ kWh) and from a low average consumer baseline ($E[b] = 1$) to a large average consumer baseline ($E[b] = 5$).

\subsection{SRBM vs Baseline-Only Reporting Mechanism} 

Assume that the marginal utilities of recruited agents are uniformly distributed on
$[0.3,1.3]$ \$/KWh. Equivalently, the mix of recruited agents demand this distribution of
payments for their DR services. We use typical numbers from the PG\&E jurisdiction for residential DR as given in the table below. 

\begin{table}[h]
\label{table:example1}
\centering
\begin{tabular}{|c|c|c|}
\hline
$\pi^o$  & \$2/agent & recruitment cost \\ \hline
$\pi_\text{max}$ & \$1.30 & max DR payment demanded by agents \\ \hline
$\Eset [ b ]$ & 5KWh & average reduction/DR event \\ \hline
$m$ & 10 & max number of DR events \\ \hline
$\pi^e$ & \$0.15 & retail price of electricity \\ \hline
$D$ & 100KWh & DR target \\ \hline
\end{tabular}
\end{table}

This yields an average cost of DR provision of \$1.51/KWh under baseline-only reporting. This compares unfavorably with
the lower bound $\phi_\text{min}$ = \$0.68/KWh of Theorem \ref{thm:min}.

However,  the average cost of DR provision under SRBM in this case is is \$0.84/KWh. 
This is only modestly larger than the {\em lowest possible} average cost for DR provision of  \$0.71/KWh found from Theorem \ref{thm:min}. SRBM compares very favorably with baseline-only reporting 
which has an average cost of \$1.51/KWh. 

\subsection{Numerical Simulations} 

\begin{figure}[h]  
\begin{tabular}{cc}
\includegraphics[scale=0.11]{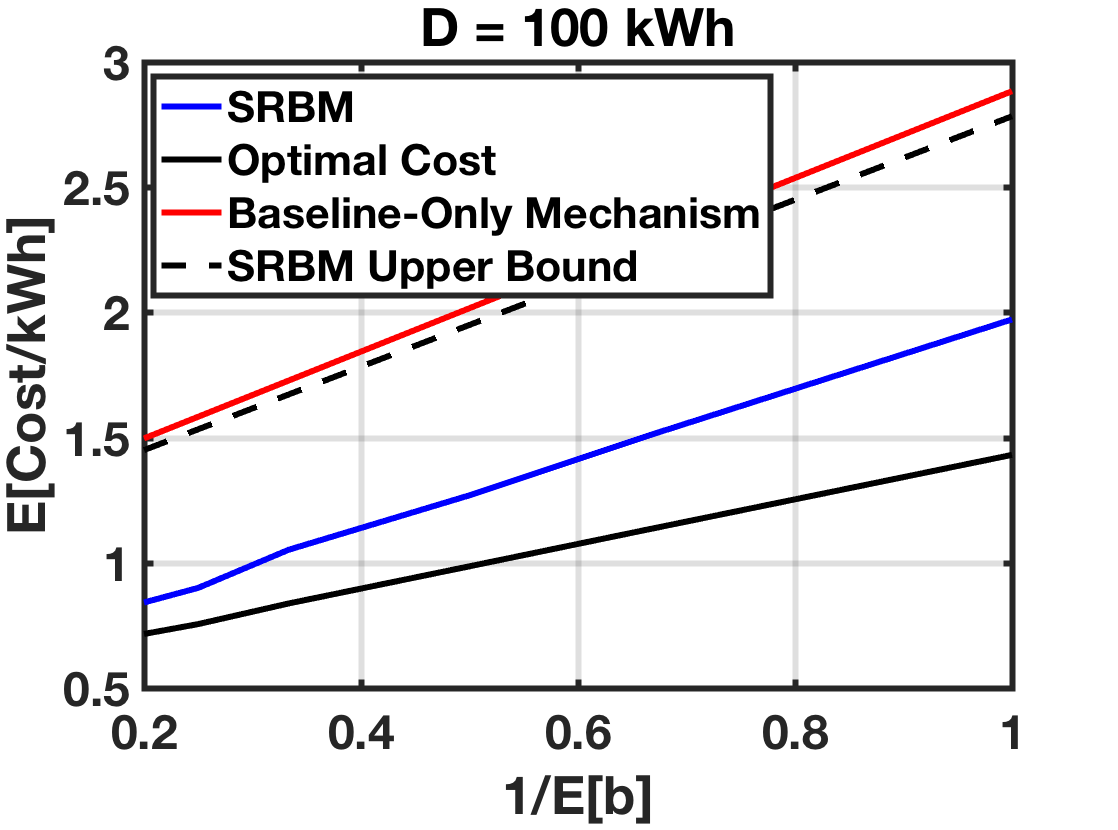} & \includegraphics[scale=0.11]{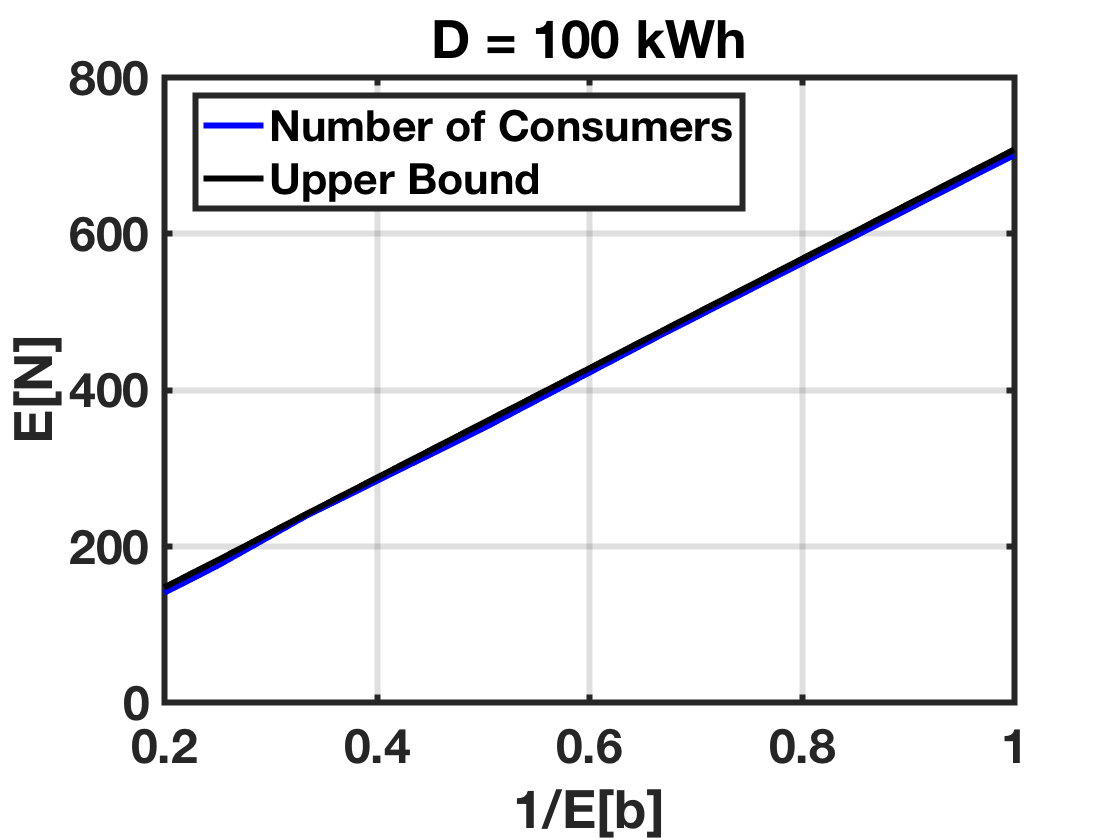} \\
\includegraphics[scale=0.11]{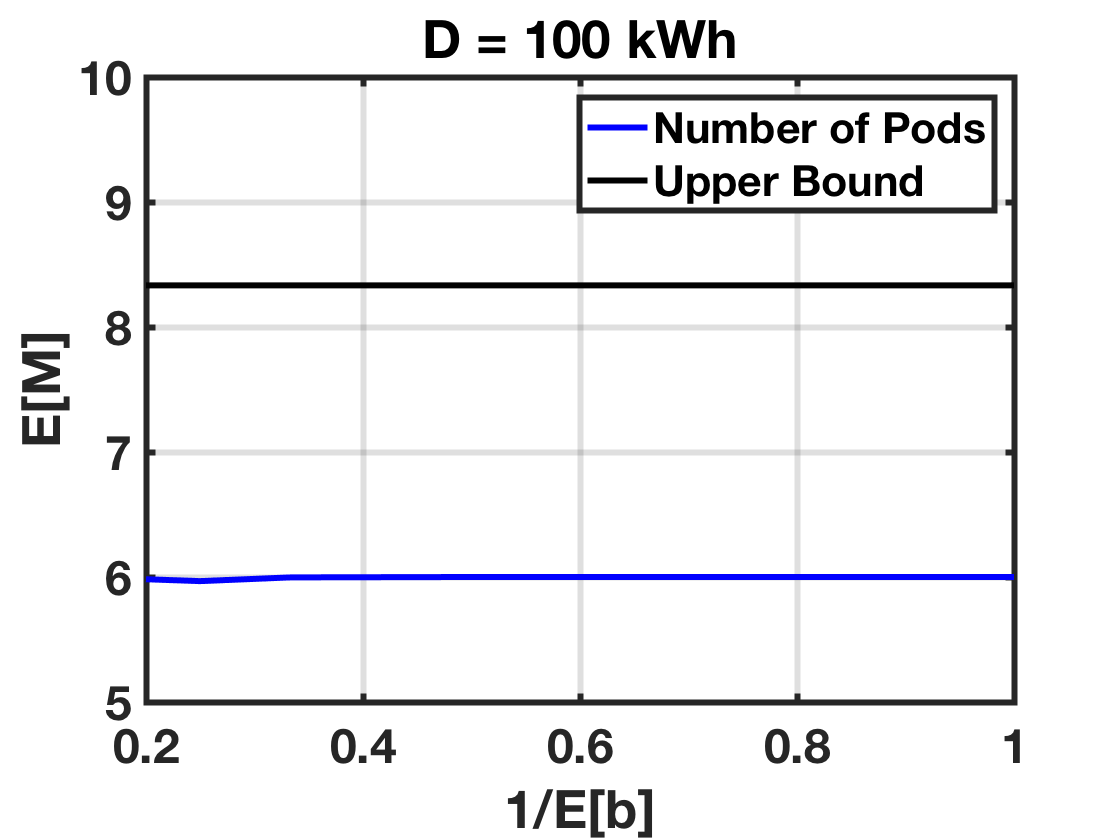} & \includegraphics[scale=0.11]{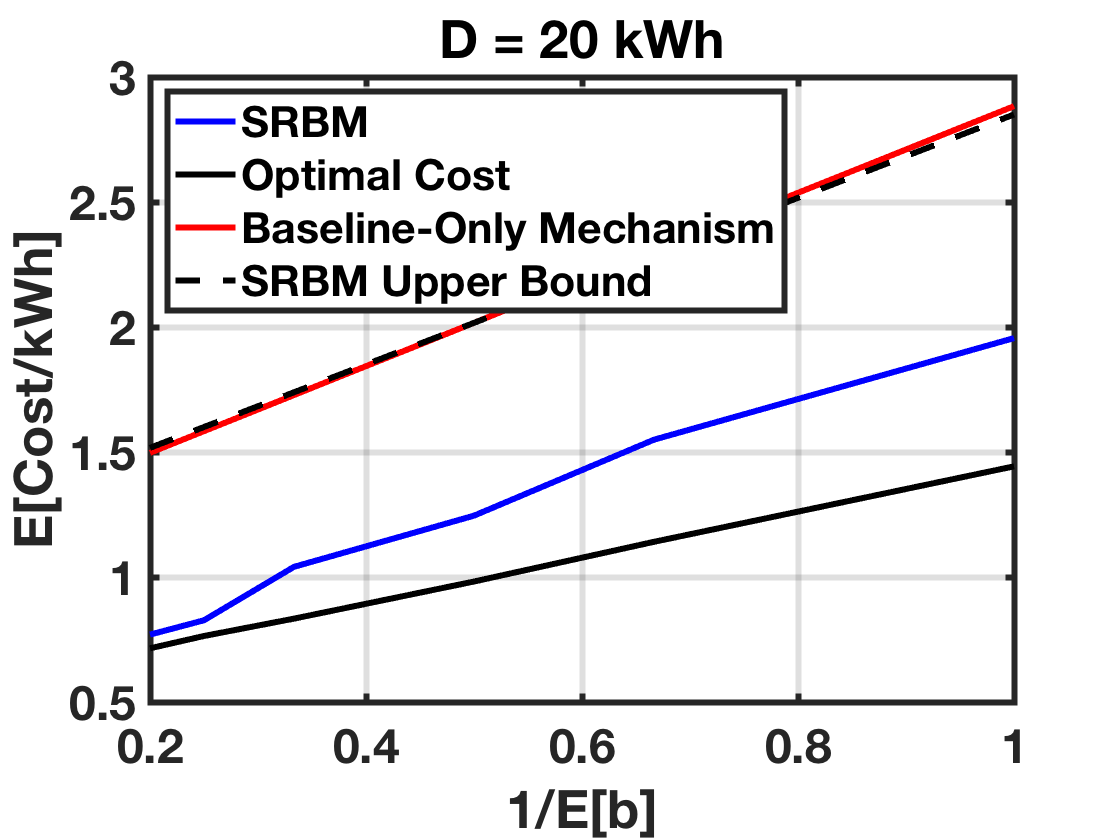} \\
\includegraphics[scale=0.11]{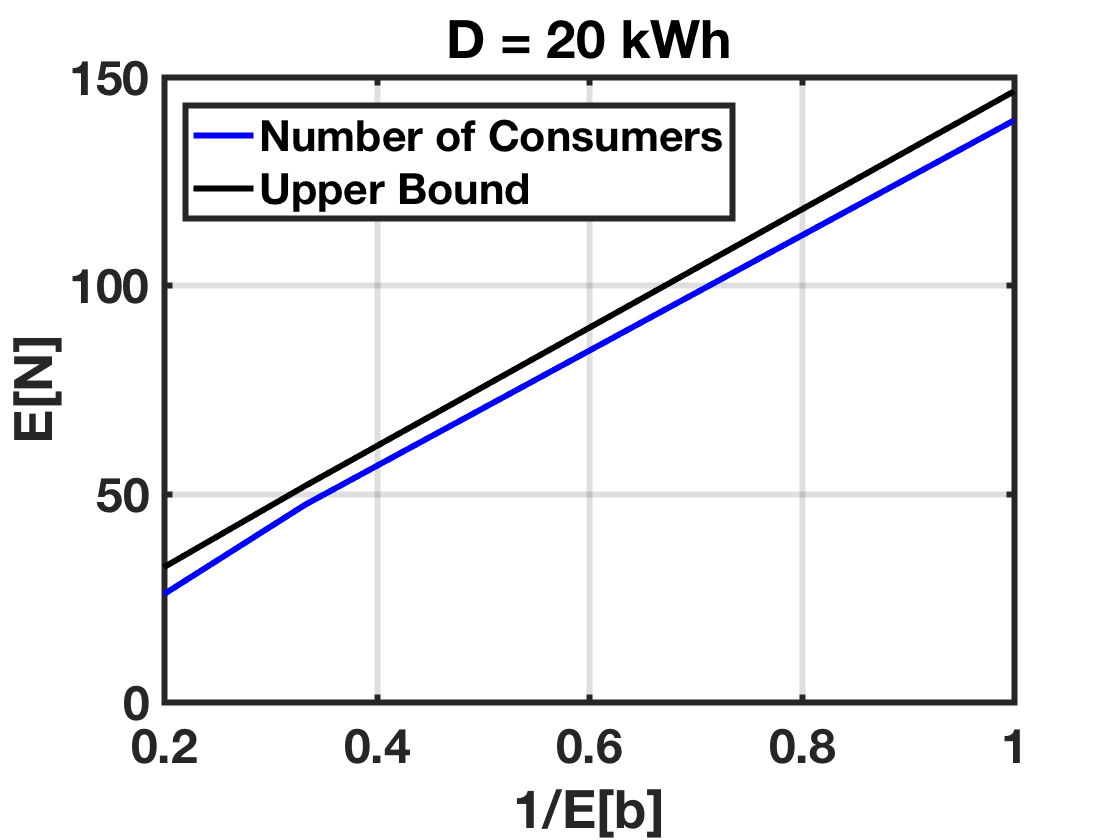} & \includegraphics[scale=0.11]{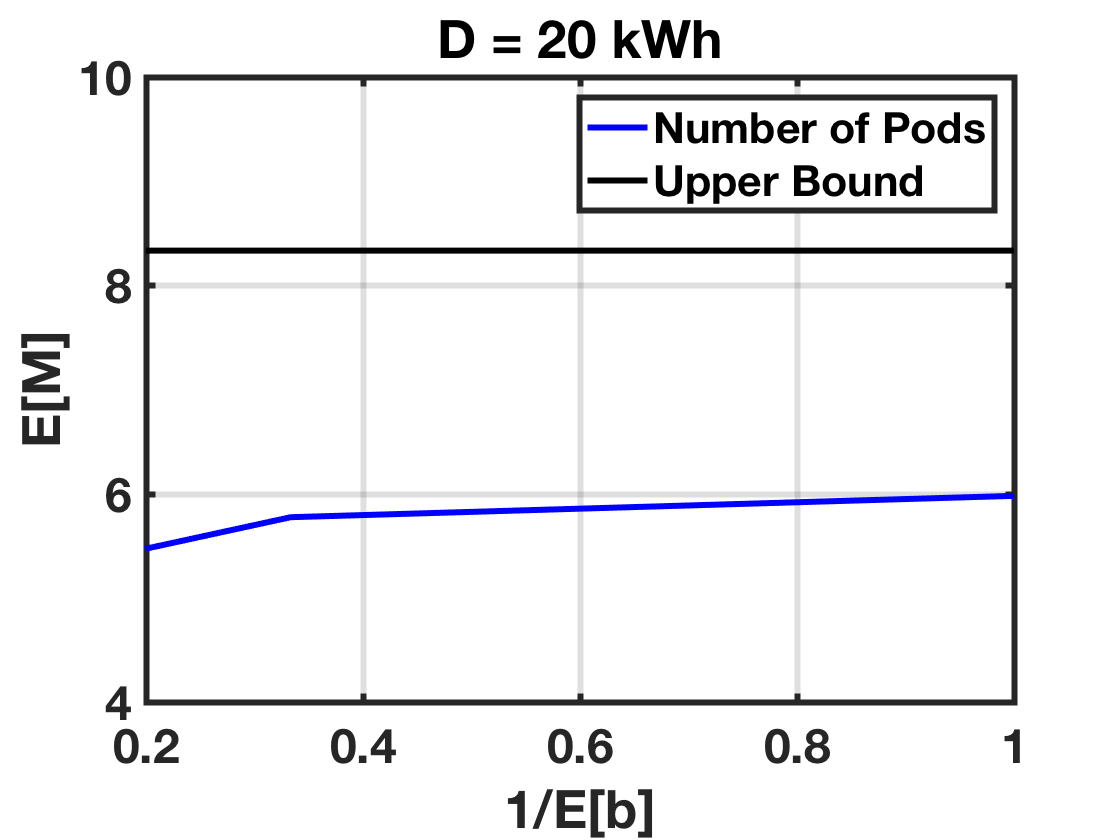} 
\end{tabular}
\caption{Left above: $\Eset$ [Cost/kWh] vs 1/$\Eset$[b] (D = 100 kWh), right above: $\Eset$[N] vs 1/$\Eset$[b] (D = 100 kWh), left middle: $\Eset$[M] vs 1/$\Eset$[b](D = 100 kWh), right middle: $\Eset$[Cost/kWh] vs 1/$\Eset$[b] (D = 20 kWh), left below: $\Eset$[N] vs 1/$\Eset$[b] (D = 20 kWh). right below:  $\Eset$[M] vs 1/$\Eset$[b](D = 20 kWh). Upper bounds are from Theorems \ref{thm:MNbounds} and \ref{thm:cc}}
\label{fig:example-srbm}
\end{figure}

In Fig. \ref{fig:example-srbm}, we provide simulation results that illustrate the superior performance of SRBM mechanism. The first three plots starting from top left provides the average cost, average number of consumers recruited and average number of pods for $D = 100$ kWh (which is representative of a large demand reduction requirement). The next three plots are for $D = 20$ kWh which is representative of smaller demand reduction requirement.  Each plots also contains the theoretical upper/lower bound on the quantities of interests. We use the same values for the prices as in the previous subsection.   

 It follows from the plots that the  cost/kWh of SRBM is significantly lower compared to upper bound and the baseline-only mechanism for all  the scenarios. Moreover, the cost for SRBM is  close to the  optimal cost (given by the lower bound). The figures also show that the  actual number of consumers required for enabling the DR is very close to that predicted by Theorem (\ref{thm:MNbounds}) which further suggests the accurate modeling and analysis of the SRBM mechanism. 
 
Tables \ref{table:CR-SRBM} and \ref{table:CR-SRBM-1} summarizes the result from the simulations. In order to succinctly characterize the  performance optimality, we use the notion of \textit{competitive ratio} (CR) which is the ratio of cost of the mechanism  to the optimal cost possible. The CR is close to 1, though it  increases with $1/\Eset [b]$, which is expected from the upper bound in Eq. \eqref{eq:tmp20}.    

\begin{table}[h] 
\centering
\caption{SRBM Performance:  $D = 20$ kWh }
\begin{tabular}{|c|c|c|c|c|c|}
\hline
$\Eset$[b] & 5 & 4 & 3 & 2 & 1 \\
\hline
 SRBM & 0.77 & 0.83 & 1.04 & 1.24 & 1.96 \\
 \hline
 $\phi_{min}$ & 0.71 & 0.76 & 0.84 & 0.99 & 1.44 \\
 \hline
CR & 1.08 & 1.09 & 1.24 & 1.26 & 1.36 \\
\hline
\end{tabular}
\label{table:CR-SRBM-1}
\end{table}

\begin{table}[!h]
\centering
\caption{SRBM Performance:  $D = 100$ kWh }
\begin{tabular}{|c|c|c|c|c|c|}
\hline
$\Eset$[b] & 5 & 4 & 3 & 2 & 1  \\
\hline
 SRBM & 0.84 & 0.9 & 1.05 & 1.27 & 1.97 \\
 \hline
 $\phi_{min}$ & 0.72 & 0.76 & 0.84 & 0.99 & 1.43 \\
 \hline
CR & 1.18 & 1.19 & 1.26 & 1.29 & 1.38 \\
\hline
\end{tabular}
\label{table:CR-SRBM}
\end{table}

\subsection{Self-Reported vs CAISO Baseline} 

%Under the CAISO DR program the participants or Demand Response Providers (DRP) can bid as a {\it Proxy Demand Response} (PDR) either in the day-ahead energy service markets or the ancillary service markets. The settlement is done at the market clearing price and load curtailment or reduction is measured as the difference between an established baseline and the observed load. 
California Independent System Operator (CAISO) estimates baseline by the $10/ 10$ method. In the $10/ 10$ method, a raw baseline $\bar{b}_c$ is estimated by averaging the consumption of $10$ non-event similar days closer to the event day. In addition an adjustment factor is calculated to account for any variation on the event day. This adjustment factor is calculated based on the consumption $q_p$ couple of hours prior to the event and the average consumption $\bar{b}_p$ at the same time on the selected non-event days. This factor is capped at $20\%$ both upwards and downwards. With this baseline estimation approach, the payment $R_c$ to a consumer is given by $R_c = \pi^r_k \left(\frac{q_p}{\bar{b}_p}\bar{b}_c - q\right)$. Lets consider the same example as before, where $\pi \sim \mathcal{U}[0.3,1.3]$ and $\pi^e = 0.15$. Lets also consider the case where the uninflated value of $q_p = \bar{b}_c = \bar{b}_p$ and that $\bar{b}_c$ and $\bar{b}_p$ are the true values. Here we ignore any incentives from the consumer side to alter $\bar{b}_c$ and $\bar{b}_p$. From Eq. \eqref{eq:pslope-main} it is clear that $\pi^r_k \geq \pi -\pi^e \geq \pi^e$ for each consumer. One the day of the event, if an individual consumer inflates its consumption $q_p$ then it gains at the rate of $\pi^r_k$ for each unit of increase and its loss is at the rate of retail price $\pi^e$ (follows from the piecewise linear utility model Eq. \eqref{eq:utility-1}). Here the gain and the loss is evaluated in terms of the net utility for the hours before the DR event and during the DR event. Since $\pi^r_k \geq \pi^e$ increasing $q_p$ results in a net positive gain for the consumer on the event day. Then each consumer will increase $q_p$ by the maximum allowed that is $20 \%$ resulting in $20 \%$ inflation of baseline and payments, while self-reporting nullifies any such incentive to inflate (Theorem \ref{thm:muuc-main}). 

\section{Conclusion}
\label{sec:conclusions}
%In this paper we address the baseline problem that is central to incentive-based DR programs. We propose a mechanism where the consumers self-report their baseline and also their marginal utility. Under the proposed mechanism we show that the consumers report their true baseline and true marginal utility and provide the maximum possible load reduction when selected for DR. This establishes the correct and as a result the measured reductions correspond to the true reduction. And so the aggregator is able to meet any random load reduction requirement reliably by selecting consumers whose measured reductions adds up to the total reduction.
In this paper, we have addressed the baseline estimation problem that is central to demand response programs. 
We proposed a mechanism where agents participating in a DR program self-report their baselines and  marginal utilities. 
Under this self reported baseline mechanism (SRBM), agents reveal their true baselines and marginal utilities. Agents that are selected for demand reduction maximally curtail their load. When agents are not selected, their actual consumption is  faithful to their baselines. 
As a result, the aggregator is able to meet any feasible load reduction target with certainty by selecting agents whose measured reductions adds up to the total reduction. We have also derived a lower bound $\phi_\text{min}$ on the expected cost per KWh of DR provision 
for any mechanisms in the class $\Ccal$ with piece-wise linear penalty and reward functions. We argue that SRBM comes close to deliver 
DR provision at this minimum cost.
%Finally, we compared our mechanism with the current baseline estimation method employed by CAISO, and analytically showed the advantages of our mechanism. 
% method either (i) one cannot guarantee reliable load reduction (ii) Or the payments are larger (iii) And in some scenarios the baseline estimate is inflated. 

Our work has assumed that the true baseline consumption is deterministic and that the utility function is piece-wise linear. 
In practice, DR mechanisms must account for uncertainty in agents intended consumption during the DR delivery window.
The exploration of the DR mechanisms for general utility functions $u(q;\theta)$ where $\theta$ is an exogenous random variable remains to be explored. 

%Also we have not addressed the optimality of the DR program from the point of view of the aggregator. We will also incorporate the recruitment problem - recruiting the minimum possible consumers to achieve any random load reduction reliably when the recruitment cost is high and consumption is random.  

%\vspace{-0.6cm}

\bibliographystyle{IEEEtran}
\bibliography{Refs-BL-TSG}

\appendix
\subsection{Mathematical Preliminaries}

We require two preliminary results based on the classical optional stopping theorem and the closely related Wald's equation. 
%on the expectation of a random sum of random variables.
%\begin{lemma} \label{lem:0}
%Let $X_k, k= 1, \cdots, N$ be a collection of independent random variables, uniformly distributed on $[a,b]$.
%Consider the new random variables obtained by sorting every realization of this collection:
%\[ X_{[1]} \leq X_{[2]} \leq \cdots \leq X_{[N]} \]
%Then, $\dst{\Eset [ X_{[k]} ] = a + \frac{k(b-a)}{N+1} }$.
%\end{lemma}

\begin{proposition} (\em Wald's equation) \label{thm:wald}
Let $X = \{ X_k:  k = 1,2, \cdots \}$ be a sequence of independent identically distributed random variables. Let $\chi = \Eset [ X_k]$.
Let $N$ be a stopping time with respect to $X$, i.e. an integer valued random variable that depends only on $\{X_1, \cdots, X_N\}$.
Suppose $\Eset [ N ] < \infty$. Then
\beq \label{eq:wald} \Eset [X_1 + \cdots +X_N] =  \chi \Eset[N] \eeq
\end{proposition}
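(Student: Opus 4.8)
The plan is to give the classical \emph{layer-cake} argument that rewrites the random sum as a series indexed by $k$, swaps expectation with summation, and exploits the stopping-time property to decouple the indicator $\mathbf{1}\{N \geq k\}$ from $X_k$. First I would write, pathwise,
\[ \sum_{k=1}^{N} X_k = \sum_{k=1}^{\infty} X_k \, \mathbf{1}\{N \geq k\}. \]
The key structural observation is that $\{N \geq k\} = \{N \leq k-1\}^{c}$, and since $N$ is a stopping time with respect to $X$, the event $\{N \leq k-1\}$ is determined by $X_1, \ldots, X_{k-1}$ alone; hence $\mathbf{1}\{N \geq k\}$ is independent of $X_k$ for every $k$.

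Next I would justify the interchange of expectation and the infinite sum via Fubini/Tonelli, which needs absolute summability. Using the independence just noted together with the i.i.d.\ assumption (and noting that $\Eset[|X_1|] < \infty$ is implicit in $\chi = \Eset[X_k]$ being well defined),
\[ \sum_{k=1}^{\infty} \Eset\big[\, |X_k|\, \mathbf{1}\{N \geq k\} \,\big] \;=\; \Eset[|X_1|] \sum_{k=1}^{\infty} \Pset(N \geq k) \;=\; \Eset[|X_1|]\, \Eset[N] \;<\; \infty, \]
where the middle equality is the standard identity $\Eset[N] = \sum_{k \geq 1} \Pset(N \geq k)$ for a positive integer-valued random variable, and finiteness is exactly the hypothesis $\Eset[N] < \infty$.

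With absolute summability in hand I would then compute
\[ \Eset\Big[\, \sum_{k=1}^{N} X_k \,\Big] \;=\; \sum_{k=1}^{\infty} \Eset\big[\, X_k\, \mathbf{1}\{N \geq k\} \,\big] \;=\; \sum_{k=1}^{\infty} \Eset[X_k]\, \Pset(N \geq k) \;=\; \chi \sum_{k=1}^{\infty} \Pset(N \geq k) \;=\; \chi\, \Eset[N], \]
where the second equality uses the independence of $\mathbf{1}\{N \geq k\}$ and $X_k$. This establishes \eqref{eq:wald}. I expect the only genuinely delicate point to be the justification of the limit–expectation interchange, so I would be careful to state the integrability bound explicitly before swapping; the rest is bookkeeping.

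As an alternative that ties in directly with the optional stopping theorem referenced above, one may instead apply that theorem to the mean-zero martingale $S_n = \sum_{k=1}^{n}(X_k - \chi)$ relative to the filtration $\mathcal{F}_n = \sigma(X_1, \ldots, X_n)$: since the conditional increments satisfy $\Eset[\,|X_{n+1} - \chi| \mid \mathcal{F}_n\,] = \Eset[|X_1 - \chi|]$, a finite constant, and $\Eset[N] < \infty$, the appropriate form of the optional stopping theorem yields $\Eset[S_N] = \Eset[S_0] = 0$, which rearranges to $\Eset[\sum_{k=1}^{N} X_k] = \chi\,\Eset[N]$. I would present the direct series argument as the primary proof and mention this martingale route as a remark.
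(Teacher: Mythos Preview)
Your argument is correct and is the standard proof of Wald's equation; the optional-stopping alternative you sketch is also valid. The paper itself does not supply a proof at all --- it simply cites Durrett's \emph{Probability: Theory and Examples} --- so your self-contained derivation goes beyond what the paper provides rather than diverging from it.
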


{\em Proof:} Theorem (1.6), Chapter 3 in \cite{durrett2010probability} \hfill $\Box$

\begin{proposition} \label{thm:optional}
Let  $X = \{ X_k:  k = 1,2, \cdots \}$ be a sequence of independent identically distributed random variables.
Let $\chi = \Eset [ X_k ] $ and define the partial sums 
\[ S_0 = 0, S_1 = X_1, S_2 = X_1 + X_2, \cdots \]
Fix $D$ and consider the random variable 
\[ N = \min\{t: S_t \geq D \}. \]
Then,  
\[ \frac{D}{\chi} \leq \Eset [ N ] < \frac{D}{\chi} + 1 \]
\end{proposition}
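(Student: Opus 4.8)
The plan is to read off both inequalities from Wald's equation (Proposition~\ref{thm:wald}) applied to the first-passage time $N$, together with the pathwise sandwich $S_{N-1} < D \le S_N$. First I would check the two hypotheses Wald needs. Since the increments model baseline consumptions we have $X_k \ge 0$, so the partial sums $S_t$ are non-decreasing; hence $\{N \le t\} = \{S_t \ge D\} \in \sigma(X_1,\dots,X_t)$, and $N$ is a genuine stopping time for $X$. Finiteness of $\Eset[N]$ follows because $\chi = \Eset[X_1] > 0$ forces $q := \Pr[X_1 \ge \delta] > 0$ for some $\delta > 0$, so on every path $N$ is bounded above by the epoch of the $\lceil D/\delta\rceil$-th increment that exceeds $\delta$, and this negative-binomial time has finite mean $\lceil D/\delta\rceil/q$. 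Thus Wald applies and gives $\Eset[S_N] = \chi\,\Eset[N]$.

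For the lower bound I would simply note that $S_N \ge D$ holds on every sample path by the definition of $N$, so $\chi\,\Eset[N] = \Eset[S_N] \ge D$, i.e.\ $\Eset[N] \ge D/\chi$. For the upper bound I would use that $N$ is \emph{minimal}, so $S_{N-1} < D$ surely (with the convention $S_0 = 0 < D$ covering the case $N=1$). Writing $S_N = S_{N-1} + X_N$ and taking expectations, $\chi\,\Eset[N] = \Eset[S_{N-1}] + \Eset[X_N] < D + \Eset[X_N]$, hence $\Eset[N] < D/\chi + \Eset[X_N]/\chi$.

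The remaining task — and the one delicate point — is to control the expected size of the \emph{last} increment $\Eset[X_N]$ (equivalently the overshoot $S_N - D \le X_N$). This is not a second invocation of Wald: $N-1$ is not a stopping time, and conditioning on $\{N=n\}$ size-biases $X_n$ (the inspection/renewal-reward paradox), so $\Eset[X_N]$ is not $\chi$ in general. It is here that the modelling assumptions enter — the agent baselines take values in a bounded range comparable to their mean $\chi$, while a DR target satisfies $D \gg \chi$ — which bounds the overshoot and yields the stated $\Eset[N] < D/\chi + 1$. I expect this overshoot/size-biasing step to be the main obstacle; the rest is a routine pairing of Wald's equation with the pathwise bounds $S_{N-1} < D \le S_N$.
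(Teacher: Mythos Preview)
Your plan mirrors the paper's proof almost exactly: the paper also (i) bounds $\Eset[N]<\infty$ by comparing $N$ with a negative-binomial count of ``successes'' $\{X_k\ge a\}$, (ii) notes that $N$ is a stopping time, and (iii) reads the lower bound directly off Wald via $\Eset[S_N]=\chi\,\Eset[N]\ge D$. The only divergence is at the upper bound, and there you are \emph{more} careful than the paper, not less. The paper simply asserts that ``$(N-1)$ is also a stopping time'' and applies Wald a second time to get $\Eset[S_{N-1}]=\chi\,(\Eset[N]-1)<D$; this is exactly the shortcut you correctly reject, since $\{N-1=t\}=\{S_t<D,\ S_{t+1}\ge D\}\notin\sigma(X_1,\dots,X_t)$ and the last increment $X_N$ is size-biased.

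So your diagnosis of the overshoot as ``the one delicate point'' is accurate, and the paper does not actually resolve it. Without an extra hypothesis the stated strict inequality can fail: take $X_k\in\{0,2\}$ with equal probability, so $\chi=1$; for $D=2+\epsilon$ one needs two nonzero increments and $\Eset[N]=4>D/\chi+1$. Your proposed repair---invoke the model-level boundedness of baselines so that the overshoot $S_N-D\le X_N$ is controlled---is the right instinct, but it delivers $\Eset[N]<D/\chi+C/\chi$ with $C=\sup_k X_k$, not the clean ``$+1$'' unless $C\le\chi$. For the downstream uses in the paper (the recruitment-cost bounds in Theorems~\ref{thm:aa}, \ref{thm:MNbounds}, \ref{thm:cc}) any $O(1)$ additive constant suffices, so the conclusions survive; but the bound as stated needs the extra assumption you point to, and the paper's second invocation of Wald does not supply it.
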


{\em Proof:} We prove the result in three steps.

a) {\em First we show that $\Eset [N]$ is bounded.} \\
Construct the sequence of i.i.d random variables defined by
\begin{equation}
Y_k = \left\{ \begin{array}{cc} 0 & \ \text{If} \ X_k < a \\  a & \ \text{If} \ X_k \geq a \end{array}\right.
\end{equation} 
%Notice that $Y_k$ is a Bernoulli random variable. 
Call a trial $k$ a success if $Y_k = a$. 
Let $T$ denote the first success in a sequence of trials. Denote the probability of success on a given trial by $p$.
Then $p = \Pset(Y_k = a)$ and $T$ is a geometric random variable with probability distribution
\begin{equation}
\Pset(T = k) = (1 - p)^{k-1}p.
\label{eq:dist-tr2suc} 
\end{equation}
From \eqref{eq:dist-tr2suc} it follows that 
\[ \Eset [T] = 1/p = 1/\Pset(X_k \geq a). \]
%\red{After $\ceil{D/a} + 1$ successes,  $\sum_k Y_k = (\ceil{D/a} + 1)a = \ceil{v/a} a +a > v$. }

Define, $R_{t} = \sum_{k = 1}^{t} Y_k$ and $M = \min\{t: R_{t} > D\}$. Then 
\begin{equation}
\label{eq:dk671}
M \leq \text{Number of trials needed for} \ \ceil{D/a} + 1 \ \text{successes}
\end{equation}
Denote $T_{k-1}^{k} $ as the number of trials to reach $k$ th success after the $(k-1)^\text{th}$ success. Then from 
\eqref{eq:dk671}
%
%Taking expectations we get,
%\begin{align}
%&\Eset[ M ]  \leq \Eset \sum_{k = 1}^{\ceil{D/a} + 1} \ (\text{Number of trials to reach} \ k \text{th success} \nonumber \\ 
%&\hspace{2cm} \text{after} \ k -1 \ \text{successes})
%\label{eq:ub-tildeN}
%\end{align}
%Denote $T_{k-1}^{k} $ as the number of trials to reach $k$ th success after the $(k-1)^\text{th}$ success. 
%Using this in \eqref{eq:ub-tildeN} we get,
\begin{equation}
\Eset [M] \leq \Eset \sum_{k = 1}^{\ceil{D/a} + 1} T_{k-1}^{k}  = \sum_{k = 1}^{\ceil{D/a} + 1} \Eset [T_{k-1}^{k}]
\label{eq:ub-tildeN-1}
\end{equation}
Observe that $T_{k-1}^{k} $ and $T$ are identically distributed. As a result, $\Eset[T_{k-1}^{k}]  = \Eset[T]$ and
\begin{equation}
\Eset [M] \leq \sum_{k = 1}^{\ceil{D/a} + 1} \Eset [T] = \left(\ceil{D/a} + 1\right)/ \Pset(X_k \geq a)
\end{equation}
From the construction of $Y_n$ it trivially follows that $Y_n \leq X_n$ which implies that $R_t \leq S_t$. So clearly, $N \leq M$. So,
\begin{equation}
\Eset [N] \leq \Eset [M] \leq \left(\ceil{D/a} + 1\right)/ \Pset(X_k \geq a) < \infty
\end{equation}

b) {\em It is straightforward to show that  $N$ is a stopping time with respect to the sigma algebra generated by $X_k, k = \{1,2, . . . ,n\}$}. We omit the proof. 

%Denote $F_n$ as the sigma algebra generated by $X_k, k = \{1,2, . . . ,n\}$. All we need to show is that the event $\{N = n\}$ for any $n$ is $F_n$-measurable. Note that \[\{N = n\} = \{S_1 < D, S_2 < D, . . . , S_n > D\}\] where $S_k = \sum_{j = 1}^{k} X_j$. Each individual event on the right is $F_n$-measurable. Hence $\{N = n\}$ is $F_n$ measurable and so $N$ is a stopping time. 

c) {\em We show that $\frac{D}{\chi} < \Eset [ N ] \leq \frac{D}{\chi} + 1 $}:

Since $N$ is a stopping time, 
\[ \Eset[ S_N]  = \chi \Eset[N]  \geq D \implies \Eset[N] \geq D/\chi \]
Since $(N-1)$ is also a stopping time, 
\[  \Eset[S_{N-1}] =  (N-1) \chi < D  \implies \Eset]N] <  D/\chi + 1. \]
\hfill $\Box$

%
%Define, \[\overline{S}_n = \sum_{k = 1}^{n} X_k - \nu \] Then $\overline{S}_n$ is a martingale w.r.t $F_n$. Since $\Eset [N] < \infty$, optional stopping theorem [Theorem 6.2.2, Stochastic Processes, S.M. Ross] is applicable. Hence we get,
%\[ \Eset [\overline{S}_N] = \Eset [\overline{S}_1] = 0 \]
%Substituting for $\overline{S}_n = \sum_{k = 1}^{n} X_k - \nu$, it follows that, 
%\begin{equation}
%\Eset S_N = \Eset [N] \nu
%\label{eq:EN-eq}
%\end{equation}
%Using $S_N > D$ in \eqref{eq:EN-eq} we get the lower bound,
%\[ \Eset [N] > D/\nu \]
%Also, \[ \Eset \overline{S}_N = \Eset \overline{S}_{N-1} + \Eset X_N 
%= \Eset \overline{S}_{N-1} + \nu \]
%Then using \eqref{eq:EN-eq} and $S_{N-1} \leq D$ in the above equation we get,
%\begin{align}
%& \Eset [N] \nu \leq D + \nu \nonumber\\
%& \Rightarrow (\Eset [N] - 1)\nu \leq D \nonumber\\
%& \Rightarrow \Eset [N] \leq D/\nu + 1
%\end{align}
%proving the claim. 
\subsection{Three Intermediate Results} 

%We require three propositions before we can prove our principal results.

We first explore second-stage decision of agents {\em under truthful reporting of baselines}.
We show that if the reward price exceeds the marginal utility, selected agents completely yield their discretionary consumption.
Conversely, if $\pi^r < \pi_k - \pi^e$, the reward price is insufficient to persuade agents to offer any demand reduction.
If the penalty prices is non-negative,  participation in the DR program does not alter the energy consumption of 
agents who are not selected.
We have the following:

\begin{proposition} 
\label{prop:2} Suppose agent $k$ faces a penalty price $\pi^p$ and receives a reward price $\pi^r$.
Assume agent $k$ reports its baseline truthfully, i.e. $f_k = b_k$.
\setlength{\leftmargini}{0.11in}
\balphlist \setlength{\itemsep}{-0.1cm}
\item Suppose agent $k$ is not selected. If $\pi^p \geq 0$, its unique optimal second-stage decision and resulting optimal cost are
\[ q^*_\text{ns} = b_k, \ J_\text{ns}(q^*_\text{ns}) = -(\pi_k - \pi^e) b_k \]
\item Suppose agent $k$ is selected. If $\pi^r > \pi_k - \pi^e$, its unique optimal second-stage decision and resulting optimal cost are
\[ q^*_\text{s} = 0, \ J_\text{s}(q^*_\text{s}) = -\pi^r  b_k \]
\item Suppose agent $k$ is selected. If $\pi^r < \pi_k - \pi^e$, agent $k$ will not yield any of 
its discretionary consumption, i.e. $q^*_\text{s} = b_k$.
\end{list}
\end{proposition}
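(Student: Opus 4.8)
The plan is to analyze each of the three cases by exploiting the piecewise-linear structure of the second-stage cost functions, which are themselves piecewise-linear (not strictly concave/convex), so the optimum is always attained at a kink, namely at $q_k=0$ or $q_k=b_k$. First I would substitute the truthful report $f_k=b_k$ into the cost functions from Section \ref{sec:agent-dec}, and recall $u(q_k)=\pi_k q_k$ for $q_k<b_k$ and $u(q_k)=\pi_k b_k$ for $q_k\ge b_k$. There is no benefit to consuming beyond $b_k$, since $(b_k-q_k)^+=0$ there and the $\pi^e q_k$ term strictly increases, so the optimum lies in $[0,b_k]$; on this interval $(b_k-q_k)^+ = b_k-q_k$, and each cost function becomes affine in $q_k$. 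I would then read off the sign of the slope.

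For part (a): on $[0,b_k]$, $J_\text{ns}(q_k,b_k) = \pi^e q_k - \pi_k q_k + \pi^p(b_k-q_k) = (\pi^e-\pi_k-\pi^p)q_k + \pi^p b_k$. Since $\pi_k>\pi^e$ and $\pi^p\ge 0$, the slope $\pi^e-\pi_k-\pi^p$ is strictly negative, so the unique minimizer is $q^*_\text{ns}=b_k$, with cost $J_\text{ns}(b_k,b_k) = \pi^e b_k - \pi_k b_k = -(\pi_k-\pi^e)b_k$ — matching the claim. For part (b): on $[0,b_k]$, $J_\text{s}(q_k,b_k) = \pi^e q_k - \pi_k q_k - \pi^r(b_k-q_k) = (\pi^e-\pi_k+\pi^r)q_k - \pi^r b_k$. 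The slope is $\pi^r-(\pi_k-\pi^e)$, which is strictly positive precisely when $\pi^r > \pi_k-\pi^e$; hence the unique minimizer is $q^*_\text{s}=0$, with cost $J_\text{s}(0,b_k) = -\pi^r b_k$. For part (c): the same affine expression has slope $\pi^r-(\pi_k-\pi^e)<0$ when $\pi^r<\pi_k-\pi^e$, so the minimizer is the right endpoint $q^*_\text{s}=b_k$, i.e. the agent yields none of its discretionary consumption.

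The only subtlety — and the one place requiring a word of care rather than a calculation — is the global argument that the minimum over all $q_k\ge 0$ is attained on $[0,b_k]$ rather than strictly beyond $b_k$. For $q_k>b_k$ one has $J_\text{s}=J_\text{ns}=\pi^e q_k - \pi_k b_k$ (the reward/penalty terms vanish), which is strictly increasing in $q_k$ and continuous at $q_k=b_k$, so no point $q_k>b_k$ can beat $q_k=b_k$; this reduces the problem to the interval $[0,b_k]$ as claimed. I do not expect any genuine obstacle here: the whole proposition is a case check on the slope of an affine function, and uniqueness follows from strictness of the relevant inequalities ($\pi_k>\pi^e$, $\pi^p\ge 0$, and the hypothesized strict inequalities on $\pi^r$). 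I would present the three parts in the order (a), (b), (c), opening with the common reduction to $[0,b_k]$ so it is not repeated three times.
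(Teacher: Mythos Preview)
Your proposal is correct and follows essentially the same approach as the paper: both arguments substitute $f_k=b_k$, observe that the cost is piecewise linear in $q$, compute the slope on either side of $b_k$, and read off the minimizer and optimal cost from the sign of the slope. Your version front-loads the reduction to $[0,b_k]$ once, whereas the paper repeats the two-piece slope computation in each part, but the substance is identical.
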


{\em Proof:} (a) If agent $k$ is not selected, its cost function is
\[ J_\text{ns}(q) = \pi^e q - u(q) + \pi^p(b_k - q)^+ \]
which is piece-wise linear in the decision variable $q$. We compute the slopes:
\begin{eqnarray*}
 \frac{d J_\text{ns}(q)}{d q} & = & \left\{ \begin{array}{cl} 
 			 \pi^e - \pi_k  - \pi^p   & \text{if } q < b_k \\ 
 			\pi^e   & \text{if } q > b_k \end{array} \right.  \\
 			& =& \left\{ \begin{array}{cl} 
 			< 0   & \text{if } q < b_k \\ 
 			> 0  & \text{if } q > b_k \end{array} \right.
\end{eqnarray*} 
Thus $q_\text{ns}^* = b_k$ is the unique minimizer of the cost function $J_\text{ns}$. The resulting optimal cost is clearly
$-(\pi_k - \pi^e) b_k$. \\[0.1in]
(b) If agent $k$ is selected, its cost function is
\[ J_\text{s}(q) = \pi^e q - u(q) - \pi^r(b_k - q)^+ \]
This is piece-wise linear in the decision variable $q$. We compute the slopes:
\begin{eqnarray*}
 \frac{d J_\text{s}(q)}{d q} & = & \left\{ \begin{array}{cl} 
 			 \pi^e - \pi_k  + \pi^r   & \text{if } q < b_k \\ 
 			\pi^e   & \text{if } q > b_k \end{array} \right.  \\
 			& =& \left\{ \begin{array}{cl} 
 			> 0   & \text{if } q < b_k \\ 
 			> 0  & \text{if } q > b_k \end{array} \right.
\end{eqnarray*} 
Thus $q_\text{s}^* = 0$ is the unique minimizer of the cost function $J_\text{s}$. The resulting optimal cost is clearly
$-\pi^r  b_k$.  \\[0.1in]
(c) If $\pi^r < \pi_k - \pi^e$, we have
\begin{eqnarray*}
 \frac{d J_\text{s}(q)}{d q} & = & \left\{ \begin{array}{cl} 
 			< 0   & \text{if } q < b_k \\ 
 			> 0  & \text{if } q > b_k \end{array} \right.
\end{eqnarray*} 
Thus $q_\text{s}^* = b_k$ is the unique minimizer of  $J_\text{s}$.
 \hfill $\Box$

We next explore conditions on the penalty and reward prices, and the selection probability which {\em induce agents to 
truthfully report their baselines}.

\begin{proposition}  
\label{prop:1}
Assume the mechanism prices and selection probability satisfy 
$ \pi^p \geq \pi^e,   \alpha \leq (\pi^e)/(\pi^r + \pi^e)$.  Then,
\setlength{\leftmargini}{0.13in}
\balphlist \setlength{\itemsep}{-0.1cm}
\item an optimal first stage decision  for  agent $k$ is to truthfully report its baseline, i.e. $f_k^* = b_k$. 
\item if further, $\pi^r > \pi_k - \pi^e$, truthful reporting of baselines is the unique optimal decision.
\end{list}
\end{proposition}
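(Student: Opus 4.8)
The plan is to study the first-stage objective $J(f_k) = \alpha\, J_\text{s}(q^*_\text{s}, f_k) + (1-\alpha)\, J_\text{ns}(q^*_\text{ns}, f_k)$ directly, by solving for each candidate report $f_k$ the two inner second-stage piecewise-linear programs (exactly the kind of slope computation carried out in the proof of Proposition~\ref{prop:2}), and then comparing $J(f_k)$ against $J(b_k)$. The report $f_k = b_k$ is already handled by Proposition~\ref{prop:2}, which yields the truthful cost $J(b_k) = -\alpha\max\{\pi^r,\pi_k-\pi^e\}\, b_k - (1-\alpha)(\pi_k-\pi^e)\, b_k$; it then remains to treat over-reporting ($f_k > b_k$) and under-reporting ($f_k < b_k$) separately.

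For over-reporting I would first recompute the slopes of $J_\text{ns}(\cdot,f_k)$: negative on $[0,b_k)$, equal to $\pi^e-\pi^p\leq 0$ on $[b_k,f_k)$, positive beyond. Hence an uncalled agent optimally consumes $q^*_\text{ns} = f_k$, and its cost, $\pi^e f_k - \pi_k b_k$, \emph{exceeds} the truthful value by $\pi^e(f_k-b_k)$ --- inflating buys nothing because $\pi^p\geq\pi^e$ makes wasting energy above $b_k$ no cheaper than paying the penalty. For a called agent, $J_\text{s}(\cdot,f_k)$ is nondecreasing when $\pi^r\geq\pi_k-\pi^e$ (so $q^*_\text{s}=0$, cost $-\pi^r f_k$) and otherwise is minimized at $q^*_\text{s}=b_k$; in either subcase the called cost drops by at most $\pi^r(f_k-b_k)$ relative to truthful. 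Averaging the two outcomes,
\[ J(f_k) - J(b_k)\ \geq\ (f_k - b_k)\bigl[(1-\alpha)\pi^e - \alpha\pi^r\bigr], \]
and the bracket is nonnegative exactly because $\alpha\leq\pi^e/(\pi^r+\pi^e)$. This inequality is where the hypothesis on $\alpha$ earns its keep, and it is the real content of the argument.

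For under-reporting, shrinking the report only relaxes a penalty that was already non-binding, so an uncalled agent still consumes $q^*_\text{ns}=b_k$ and its cost is unchanged. A called agent's $J_\text{s}(\cdot,f_k)$ is minimized either at $q=0$ (cost $-\pi^r f_k > -\pi^r b_k$ since $f_k<b_k$) or at $q=b_k$ (cost $-(\pi_k-\pi^e)b_k$), so the called cost is at least $-\max\{\pi^r,\pi_k-\pi^e\}\, b_k$, i.e.\ at least the truthful called cost, with \emph{strict} inequality once $\pi^r>\pi_k-\pi^e$. Thus $J(f_k)\geq J(b_k)$ in this direction too, and strictly (for $\alpha>0$) when $\pi^r>\pi_k-\pi^e$.

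Combining the two directions proves (a), that $f_k^*=b_k$ is optimal. For (b), when $\pi^r>\pi_k-\pi^e$ the under-reporting branch is strictly suboptimal while the over-reporting branch is never strictly better (and is strictly worse away from the boundary value $\alpha=\pi^e/(\pi^r+\pi^e)$), so $f_k=b_k$ is the unique minimizer. I expect the main obstacle to be purely organizational: one must carefully track the sign of each linear piece across the sub-regimes determined by how $f_k$ sits relative to $b_k$ and how $\pi^r$ sits relative to $\pi_k-\pi^e$, and verify that the cost comparisons above are tight in each regime; once that bookkeeping is complete, the $\alpha$-inequality closes the argument with essentially no further work.
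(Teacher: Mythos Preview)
Your proposal is correct and follows essentially the same route as the paper: solve the two second-stage piecewise-linear problems, assemble $J(f_k)$, and use the key inequality $(1-\alpha)\pi^e-\alpha\pi^r\geq 0$ to rule out over-reporting while under-reporting is handled by monotonicity; the paper phrases this via the slopes of $J(f)$ whereas you bound $J(f_k)-J(b_k)$ directly, but the computations coincide. Your parenthetical remark that over-reporting is only \emph{weakly} worse at the boundary $\alpha=\pi^e/(\pi^r+\pi^e)$ is in fact a more careful observation than the paper's own proof of part~(b), which glosses over that edge case.
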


{\em Proof:} (a)
Consider agent $k$, and fix the submitted first-stage baseline report $f$. Define the constant
\[ c = \pi^e b_k - u(b_k) = (\pi^e - \pi_k) b_k. \] 
The selection probability inequality can be re-written as
\begin{equation} \label{eq:tmp1}
-\alpha \pi^r + (1 - \alpha) \pi^e  \geq 0.
\end{equation}
We first compute the optimal consumption decided by agents in the second stage in two cases:
\vspace{-0.4cm}
\setlength{\leftmargini}{0.13in}
\bromalist
\item {\em Agent is not called.} Its cost function is
\[ J_\text{ns}(q,f) = \pi^e q - u(q) + \pi^p ( f -q)^+  \]
Since this is a continuous piece-wise linear function, it easy to verify that the optimal consumption is either $q^* = b_k$ or $q^* = f$.
The optimal cost is:
\begin{eqnarray*} 
& & \hspace*{-0.45in} J_\text{ns}(q^*,f) = \min \{ c + \pi^p (f-b_k)^+,  \pi^e f - u(f) \} \\
		& = & \left\{ \begin{array}{cl} \min \{ c + \pi^p (f-b_k),  c + \pi^e (f-b_k) \} & \text{if } f >  b_k \\  \min \{ c ,  c + (\pi_{k} - \pi^e)(b_k-f)  \}  & \text{if } f< b_k \end{array} \right. \\
		& = & \left\{ \begin{array}{cl} c+ \min \{ \pi^p,  \pi^e \} \cdot (f - b_k) & \text{if } f >  b_k \\  c+ \min \{0, (\pi_{k} - \pi^e)(b_k-f) \}  & \text{if } f< b_k
		\end{array} \right. \\
		& = & \left\{ \begin{array}{cl} c+ \pi^e  (f - b_k) & \text{if } f > b_k \\  c & \text{if } f< b_k
		\end{array} \right.
\end{eqnarray*}
\item {\em Agent is called.}  Its cost function is
\[ J_\text{s}(q,f)  = \pi^e q - u(q) - \pi^r (f - q)^+ \]
Since this is a continuous piece-wise linear function, it easy to verify that the optimal consumption is either $q^* = 0$ or $q^* = b_k$. 
The optimal cost is:
\begin{eqnarray*}
J_\text{s}(q^*; f) & = & \min \{-\pi^r f,  c - \pi^r (f - b_k)^+ \}  \\ 
		& = & \left\{ \begin{array}{cl} \min \{-\pi^r f,  c - \pi^r (f - b_k) \}  & \text{if } f > b_k \\ \min \{-\pi^r f,  c   \} & \text{if } f< b_k \end{array} \right. \\ 
		& = &  \left\{ \begin{array}{cl} -\pi^r f + \min \{0,  c + \pi^r b_k \}  & \text{if } f > b_k \\ \min \{-\pi^r f,  c  \} & \text{if } f< b_k \end{array} \right. 
\end{eqnarray*}
\end{list}
Combining (i) and (ii), we can write the expected cost as
\begin{eqnarray*}
 J(f) &=& \alpha J_\text{s} + (1-\alpha) J_\text{ns} \\
 & = & \left\{ \begin{array}{cl} 
 			 -\alpha \pi^r f + \alpha \min \{0,  c + \pi^r b_k \} & \\
 			 \quad + (1 - \alpha) (c+  \pi^e  (f - b_k))   & \text{if } f > b_k \\ 
 			 \alpha \min \{-\pi^r f,  c  \} + (1 - \alpha) c & \text{if } f< b_k \end{array} \right. 
\end{eqnarray*} 
Note that $J(f)$ is continuous and piece-wise linear. We compute the slopes
\begin{eqnarray} \label{eq:tmp99}
 \frac{d J(f)}{df} & = & \left\{ \begin{array}{cl} 
 			 -\alpha \pi^r + (1 - \alpha) \pi^e   & \text{if } f > b_k \\ 
 			\text{either } -\alpha \pi^r \ \text{or } 0  & \text{if } f< b_k \end{array} \right.  \\
 			& =& \left\{ \begin{array}{cl} 
 			\geq 0   & \text{if } f > b_k \\ 
 			\leq 0  & \text{if } f< b_k \end{array} \right. \nonumber
\end{eqnarray} 
Here, we have used (\ref{eq:tmp1}). Consequently, $f^* =b_k$ is a minimizer of $J(f)$. 

(b) Next suppose $\pi^r > \pi_k - \pi^e$. Notice that
\[ c + \pi^r b_k = (\pi^e - \pi_k) b_k + \pi^r b_k > 0.\]
Define $ g = -c / \pi^r$, and notice that for $f<g$,
\[ c + \pi^r f < c + \pi^r g = 0. \]
It is easy to verify that the expected cost simplifies to
\begin{eqnarray*}
 J(f) \hspace*{-0.15in}
 & = & \hspace*{-0.15in}  \left\{ \begin{array}{cl} 
 			 -\alpha \pi^r f + (1 - \alpha) (c+  \pi^e  (f - b_k))   & \text{if } f > b_k \\ 
 			 - \alpha \pi^r f + (1 - \alpha) c & \text{if } g< f< b_k \\
 			 c & \text{if } f < g \end{array} \right. 
\end{eqnarray*} 
Note that $J(f)$ is continuous and piece-wise linear. We compute the slopes
\begin{eqnarray*}
 \frac{d J(f)}{df} & = & \left\{ \begin{array}{cl} 
 			 -\alpha \pi^r + (1 - \alpha) \pi^e   & \text{if } f > b_k \\ 
 			 -\alpha \pi^r  & \text{if } g< f< b_k \\
 			 0 & \text{if }  f< g  \end{array} \right.  
\end{eqnarray*} 
Consequently, $f^* =b_k$ is the unique minimizer of $J(f)$.
\hfill $\Box$

Finally, we show that if the selection probability is too large, agents will arbitrarily inflate their baselines.
\begin{proposition} \label{prop:3}
Assume the selection probability satisfies \\ $ \alpha > (\pi^e)/(\pi^r + \pi^e)$. Then the optimal first stage decision is to report the maximum, i.e., $f^{*}=b_{\max}$. 
\end{proposition}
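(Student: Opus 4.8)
The plan is to reuse the computation of the agent's expected first-stage cost $J(f)$ carried out in the proof of Proposition~\ref{prop:1}, observing that the derivation of
\[
J(f) = \alpha J_\text{s}(q^*_\text{s},f) + (1-\alpha) J_\text{ns}(q^*_\text{ns},f)
\]
and of its piece-wise linear form never used the hypothesis $\alpha \le \pi^e/(\pi^r+\pi^e)$ --- that bound was invoked only at the very last step, to sign the slopes. So I would first carry over verbatim the closed form of $J(f)$ and of its derivative, namely
\[
\frac{dJ(f)}{df} = \begin{cases} -\alpha\pi^r + (1-\alpha)\pi^e & \text{if } f > b_k, \\[2pt] \text{either } -\alpha\pi^r \ \text{or } 0 & \text{if } f < b_k, \end{cases}
\]
and note that the only facts needed to reach this form are $\pi^p \ge \pi^e$ (which collapses $\min\{\pi^p,\pi^e\}$ to $\pi^e$), the piece-wise linear shape of $u(\cdot)$, and $0 \le \alpha \le 1$.

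Next I would rewrite the hypothesis of the proposition: since $\pi^r + \pi^e > 0$, the inequality $\alpha > \pi^e/(\pi^r+\pi^e)$ is equivalent to $\alpha\pi^r + \alpha\pi^e > \pi^e$, i.e.\ to
\[
-\alpha\pi^r + (1-\alpha)\pi^e < 0.
\]
Substituting into the slope expression gives $dJ/df < 0$ for every $f > b_k$, while for $f < b_k$ the slope equals either $-\alpha\pi^r < 0$ or $0$, hence is $\le 0$ throughout. Therefore $J(\cdot)$ is non-increasing on $(0,b_k)$ and strictly decreasing on $(b_k,\infty)$, so over the admissible range of baseline reports $[0,b_{\max}]$ it is minimized at the right endpoint, which yields $f^* = b_{\max}$; the strict decrease for $f > b_k$ also shows no smaller report can be optimal.

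The argument is short, and the only point that deserves an explicit line is the one flagged above --- that the intermediate algebra of Proposition~\ref{prop:1} establishing the closed form of $J_\text{s}(q^*,f)$, $J_\text{ns}(q^*,f)$ and their convex combination is genuinely insensitive to the sign of $-\alpha\pi^r+(1-\alpha)\pi^e$. That is the main (and really quite mild) obstacle: it is a matter of pointing back to the relevant lines rather than recomputing them. With that observation in place, Proposition~\ref{prop:3} is immediate, and it dovetails with Proposition~\ref{prop:1}: the value $\pi^e/(\pi^r+\pi^e)$ of the selection probability is exactly the threshold separating the regime where truthful baseline reporting is optimal from the regime where unbounded baseline inflation is optimal.
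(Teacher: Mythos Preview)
Your proposal is correct and follows essentially the same route as the paper: both reuse the slope computation \eqref{eq:tmp99} from the proof of Proposition~\ref{prop:1}, rewrite the hypothesis as $-\alpha\pi^r+(1-\alpha)\pi^e<0$, and conclude that $J(f)$ is monotone (weakly then strictly) decreasing so that the maximum admissible report is optimal. Your write-up is in fact a bit more careful than the paper's, since you explicitly flag that the intermediate algebra (and the use of $\pi^p\ge\pi^e$) is insensitive to the sign of $-\alpha\pi^r+(1-\alpha)\pi^e$.
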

{\em Proof:} 
We follow the calculation of Proposition \ref{prop:1}. The selection probability inequality can be re-written as
\[  -\alpha \pi^r + (1 - \alpha) \pi^e < 0. \]
It follows from (\ref{eq:tmp99}) that 
\begin{eqnarray*} 
 \frac{d J(f)}{df} & = & \left\{ \begin{array}{cl} 
 			< 0   & \text{if } f > b_k \\ 
 			\leq 0  & \text{if } f< b_k \end{array} \right. \nonumber
\end{eqnarray*} 

%\begin{eqnarray*} 
% \frac{d J(f)}{df} & = & \left\{ \begin{array}{cl} 
% 			 -\alpha \pi^r + (1 - \alpha) \pi^e   & \text{if } f > b_k \\ 
% 			\text{either } -\alpha \pi^r \ \text{or } 0  & \text{if } f< b_k \end{array} \right.  \\
% 			& =& \left\{ \begin{array}{cl} 
% 			< 0   & \text{if } f > b_k \\ 
% 			\leq 0  & \text{if } f< b_k \end{array} \right. \nonumber
%\end{eqnarray*} 
Thus $J(f)$ is monotone decreasing. The optimal first-stage decision is $f^* = b_{\max}$, i.e. agents will arbitrarily inflate their baselines. \hfill $\Box$

\subsection{Proof of Theorem \ref{thm:1}} 
\label{sec:pfthm:1}

{\em Proof:} (a) From Proposition \ref{prop:1}, we know that  an optimal first stage decision is for agents to truthfully 
report their baselines, i.e. $f^*_k = b_k$. \\
(b) and (c) now follow immediately from Proposition \ref{prop:2}. \\
(d) is immediate on observing that $\pi^p(f^*_k - q^*_\text{ns}) = 0$. \\
(e) Recall that agents are selected at random until (see (\ref{eq:aa}))
\[ \sum f_k \geq D. \]
Note that (a) agents report baselines truthfully, and (c) selected agents completely yield their discretionary consumption.
As a result, the set of selected agents provide $\sum_k b_k$ in demand reduction, meeting the target $D$. \hfill $\Box$

%\subsection{Proof of Theorem \ref{thm:min}} 
\label{sec:thm:min}

{\em Proof:} We consider mechanisms where agents that are called receive a payment $\pi_k^r(b_k - q_k)^+$ 
for consumption reduction from their true baseline. The reward price can be agent specific.

From Proposition \ref{prop:2} (c), if the reward price is too low, i.e. $\pi^r_k< \pi_k - \pi^e$, agent $k$ will not yield any of its discretionary consumption to the DR program.
So, we must have $\pi_k^r \geq \pi_k - \pi^e$, and under this condition, selected agents completely yield their discretionary consumption.

If agent $k$ knows with {\em certainty} that it will be called for a DR event, its first-stage cost function is
\[ J(f_k, q_k) = - \pi_r^k (f_k - q_k)^+ + \pi^e q_k - u(q_k). \]
As its optimal second stage decision is $q^* = 0$, the resulting first-stage decision problem is
\[ \min_{f_k} -\pi_r^k f_k. \]
This implies that agent $k$ will arbitrarily inflate its baseline, resulting in an unbounded expected cost for DR provision. 
As a result, we must allow for agent $k$ to be selected {\em at random} with a sufficiently small
probability. From Proposition \ref{prop:3}, the selection probability $\alpha_k$ for agent $k$ must satisfy 
$\alpha_k  \leq (\pi^e)/(\pi^r_k + \pi^e)$, else agent $k$ will arbitrarily inflate its baseline, resulting in an unbounded expected cost for DR provision.

To minimize its DR payments to agents, 
the aggregator should select the smallest possible reward, i.e. it should choose $\pi^r_k = \pi_k - \pi^e$. 
To minimize its recruitment costs, the aggregator should recruit the smallest number of agents or maximize the selection probability. Thus we set  
\[ \alpha_k  = \frac{\pi^e}{\pi^r_k + \pi^e} = \frac{\pi^e}{\pi_k} \]
Any other mechanism in $\Ccal$ would result in a higher expected cost for DR provision.

With these choices, under any mechanism in the class $\Ccal$, the expected 
demand reduction from agent $k$ is $\alpha_k b_k$, and the
expected payout to agent $k$ is $\alpha_k \pi_k^r b_k$. The expectations here are over 
the randomness of being selected under the DR mechanism.

Define the random variables $X_k = b_k w_k$ where $w_k = \pi^e/\pi_k$. Notice that $ \{b_1, b_2, \cdots\}$ and
$\{w_1, w_2, \cdots \}$ are  i.i.d. Also, $b_{i}$s and $w_{k}$s are independent.  Thus $\{X_1, X_2, \cdots \}$ is i.i.d. and
\beq \label{eq:tmp89}  \Eset [ X] :=  \Eset [ X_{k} ] = \pi^e \Eset[ b] \Eset [1 /\pi ] \eeq

To meet the load reduction target $D$, we must recruit $N$ customers where
\[ \sum_1^N \alpha_k b_k = \sum_1^N \frac{\pi^e}{\pi_k} b_k  = \sum_1^N X_k \geq D. \]
Using Proposition \ref{thm:optional}, we have
\beq \label{eq:tmp55} \Eset [ N ] \geq \frac{D}{\Eset[X_{k} ]} = \frac{D}{\pi^e \Eset [ b ] \Eset [ 1 /\pi ] }. \eeq
The expected payout $\psi$ to the agents (over the random variables $b_k, \pi_k$) is
\begin{eqnarray*}
\Eset[ \psi]  & = & \Eset \left[ \sum_1^N \alpha_k \pi_k^r b_k \right]  =    \Eset \left[ \sum_1^N \frac{\pi^e (\pi_k - \pi_e) }{\pi_k}  b _k \right] \\
 & =  &  \Eset \left[ \sum_1^N (\pi^e b_k  - \pi^e X_k) \right] \\
 & = &  \pi^e \Eset [ N ]  \left( \Eset [ b ] -   \Eset [X ] \right)  \\[0.05in]
 & = & \pi^e \Eset [ N ] \Eset [ b ] \left( 1 -   \pi^e \Eset[1/\pi] \right)
% & \geq  \pi^e \Eset [ N ] \Eset [ b ] - \pi^e D~\text{\color{red} Inequality should be reverse} \\
% & =  \frac{D}{\Eset [ 1 /\pi ]}  - \pi^e D. 
\end{eqnarray*}
In the final step above, we have used Wald's equation (see Proposition \ref{thm:wald}) together with 
(\ref{eq:tmp89}). Combining this with (\ref{eq:tmp55}), we obtain
\[  \Eset[ \psi]  \geq \frac{D}{\Eset [ 1 /\pi ]}  - \pi^e D.  \]

%
%\beq 
%\label{eq:aggcost} \phi = \frac{J_\text{agg}}{mD} = \underbrace{\frac{\Eset [ \psi ]}{D}}_\text{payout per KWh} + \underbrace{\frac{\pi^o\Eset[N]}{mD}}_\text{recruitment cost} \eeq

We can now compute a lower bound on the expected cost of DR provision:
\begin{eqnarray*}
\phi & = & \underbrace{\frac{\Eset [ \psi ]}{D}}_\text{payout per KWh} + \underbrace{\frac{\pi^o\Eset[N]}{mD}}_\text{recruitment cost}\\
%&=& \Eset \left[ \frac{J_\text{agg}}{mD} \right] = \underbrace{\Eset [ 1/\pi ]  - \pi^e }_\text{payout per KWh} + 
%\underbrace{\frac{\pi^o\Eset [N ] }{mD}}_\text{recruitment cost} \\
& \geq &  \frac{1}{\Eset[ 1/\pi] } - \pi^e + \frac{\pi^o}{m \Eset [b]}\cdot \frac{1}{\pi^e\Eset [ 1/\pi]}
\end{eqnarray*}
proving the claim. \hfill $\Box$

\subsection{Proof of Theorem \ref{thm:aa}} 
\label{sec:pf:thm:aa}
%Recall that the average cost per KWh of demand reduction faced by the aggregator is (see (\ref{eq:aggcost}))
%\beq \label{eq:100}  \phi = \frac{\Eset [ \psi ]}{D}  + \frac{\pi^o\Eset[N]}{mD} \eeq
%where $\psi$ is the payout per DR event and $N$ is the number of recruited agents.
{\em Proof:}   To minimize its DR payments to agents, 
the aggregator should select the smallest possible reward, i.e. it should choose (see Theorem \ref{thm:1}) $\pi^r = \pi_\text{max} - \pi^e$. To minimize its recruitment costs, the aggregator should recruit the smallest number of agents or maximize the selection probability. 
Thus we set (see Theorem \ref{thm:1}), $ \alpha  = (\pi^e)/(\pi^r + \pi^e) = (\pi^e)/(\pi_\text{max})$. 
The number of recruited agents needed to service the DR target is the smallest integer $N$ such that 
\[ \sum_1^N \alpha b_k \geq D.  \]
From Proposition \ref{thm:optional}, we have
\[ \Eset [ N ] <  \frac{D}{\alpha \Eset [b]} +1  = \frac{D \pi_\text{max}}{\pi^e \Eset[ b ] } + 1. \]
If agent $k$ is selected, it delivers $b_k$ KWh of demand reduction. The expected payout to the agents is then
%\[  \psi  = \sum_1^N \pi^r \alpha b_k \]
From Theorem \ref{thm:wald},  the expected payout is
\[ \Eset [ \psi ] = \Eset [ \sum_1^N \pi^r \alpha b_k ] = \pi^r \Eset [N] \alpha \Eset[ b ]  <    \pi^r D +  \frac{\pi^r \pi^e \Eset[ b ] }{\pi_\text{max}}    \]

We can now compute an upper bound on the expected cost of DR provision:
\begin{eqnarray*}
\phi & = & \underbrace{\frac{\Eset [ \psi ]}{D}}_\text{payout per KWh} + \underbrace{\frac{\pi^o\Eset[N]}{mD}}_\text{recruitment cost}\\
&\leq& (\pi_\text{max} - \pi^e)  \left(1 + \frac{ \pi^e \Eset[ b ] }{\pi_\text{max}D } \right)   + 
\frac{\pi^o \pi_\text{max} }{m \pi^e \Eset[ b ] }  + \frac{\pi^o}{mD}.
\end{eqnarray*}

\subsection{Proof of Theorem \ref{thm:muuc-main}} 
\label{sec:pf:thm:muuc-main}

{\em Proof:} Under SRBM, we have $\pi^p \geq \pi^e$. We first bound the selection probability. Suppose agent $k$ in pod $\Pset^i$ is selected. 
The reward price offered to agent $k$ satisfies
\[ \pi_k^r \leq  \max_{k \in \Pset^i} \mu_k  - \pi^e = \nu^i - \pi^e. \]
The probability $\beta_k$ that agent $k$ is called satisfies
\[ \beta_k = (\pi^e)/(\pi_k^r + \pi^e). \]
This is also independent of agent $k$'s reports. We therefore meet all the conditions of Proposition \ref{prop:1}. We conclude that
all agents will report its baselines truthfully, or $f_k^* = b_k$ for all $k$. More precisely, we have
\[ J(\mu_k, f_k \mid \mu_{-k}, f_{-k}) \geq J( \mu_k, b_k \mid \mu_{-k}, f_{-k}). \]
We next show that agents are also best served by reporting their
marginal utility truthfully, i.e. 
\[ J( \mu_k, b_k  \mid \mu_{-k}, f_{-k}) \geq J(\pi_k, b_k  \mid \mu_{-k}, f_{-k}). \]

Observe that the reward price for agent $k$ under SRBM is determined by the reports of other agents, $i \neq k$.
Suppose agent $k$ is selected. By the partial information setting this implies that the agent is part of $\Sset$ in its group or pod. If it remains selected when it reports its true marginal utility then it receives the same reward price. So it is no worse off being truthful.

Suppose agent $k$ is not selected. If it remains not selected when it reports its true marginal utility, it faces the same penalty price. So again, agent $k$ is no worse off being truthful.

The only remaining situations to consider are when agent $k$ strategically reports its marginal utility to alter the selection decision:
\setlength{\leftmargini}{0.3in}
\begin{itemize}
\item {\em Agent $k$ is selected, but would not selected if it is truthful.} \\
Since agent $k$ is selected, from Proposition \ref{prop:2}, its optimal cost is either $J_1 = -\pi^r_k b_k$ or $J_1 = -(\pi_k - \pi^e) b_k$ depending on whether $\pi^r_k \geq \pi_k - \pi^e$ or $\pi^r_k \leq \pi_k - \pi^e$ resply.
The reward price received by agent $k$ under SRBM is
\[ \pi^r_k =  \max\{\mu_j\} -\pi^e \ , \ j \in {\Sset}_{-k} \]

If agent $k$ is not selected when it is truthful, using Proposition \ref{prop:2}, its optimal cost is $J_2 = -(\pi_k - \pi^e)b_k$.
Since agent $k$ is not selected, we can drop $k$ from the list of agents
in determining the set of selected agents. Thus, the set of selected agents in this sub-case is simply $\Sset_{-k}$, and since agent $k$ is not selected,
we have $\pi_k > \max_{j \in \Sset_{-k}} \ \mu_k = \pi^r_k + \pi^e$. As a result, $\pi^r_k < \pi_k - \pi^e$, which implies $J_1 \geq J_2$. Thus agent
$k$ is better off being truthful. 

\item {\em Agent $k$ is not selected, but would be selected if it is truthful.} \\
Since agent $k$ is not selected, using Proposition \ref{prop:2}, its optimal cost is $J_1 = -(\pi_k - \pi^e)b_k$.

If agent $k$ is selected when it is truthful, using Proposition \ref{prop:2}, its optimal cost is $J_2 = -\pi^r_k b_k$ or $J_2 = -(\pi_k - \pi^e) b_k$.
Also, 
\[ \pi_k^r =  \max\{\mu_j\} -\pi^e \ , \ j \in {\Sset}_{-k} \]
Since agent $k$ is selected when truthful, we have $\pi_k <  \max\{\mu_j: \ j \in {\Sset}_{-k}\}$. As a result,
$\pi_k^r > \pi_k - \pi^e$, which implies $J_1 \geq J_2$. Thus agent
$k$ is better off being truthful. 

\end{itemize}

In all situations, agent $k$ is no worse off by truthfully reporting its marginal utility and baseline, proving (a). \\
(b) and (c) now follow immediately from Proposition \ref{prop:2}. \\
(d) is a straightforward observations. \\
(e) Recall that the set of selected agents from  pod $\Pset^i$ (see (\ref{eq:select-m-muuc})) is $\Sset = \{1, \cdots, k^*\}$ where
\[
\sum_{k=1}^{k^*} f_k \geq D, \quad \sum_{k=1}^{k^*-1} f_k <  D 
\]
Note that (a) agents report baselines truthfully, and (c) selected agents completely yield their discretionary consumption.
As a result, the agents in pod core when selected provide $\sum_{k=1}^{k^*} b_k$ in demand reduction, meeting the target $D$.
\hfill $\Box$

\subsection{Proof of Theorem \ref{thm:MNbounds}} 

%\subsection{Proof of Theorem \ref{thm:MNbounds}} 
\label{sec:pfMNbounds}
%We first show the following result. 
%\begin{proposition}
%Let $N$ be the number of recruited agents,  $\psi$ be the payout to agents per DR event, 
%and $M$ be the number of pods. Under the pod sorting algorithm, we have the following bounds:
%\vspace*{-0.1in}
%
%\setlength{\leftmargini}{0.1in}
%\balphlist
%\item expected payout to agents per DR event
%\[ \Eset[ \psi] \leq \left(\Eset[M] - 1 \right) \pi^e D \]
%\item expected number of  recruited agents
%\[ \Eset[N]  \leq \left(\Eset[M] + 1\right) (D/\Eset [ b ] + 1) \]
%\item expected number of pods 
%\[ \Eset[M] \leq  \Eset [ \pi ] \ \pi^e + 3 \]
%\end{list}
%\end{proposition}

{\em Proof:} 
%(a)  Pod $\Pset^i$ is selected with probability $\beta^i$ and the maximum reward price paid to agents in this
%pod is $\nu^i - \pi^e$.  Selected agents deliver $D$ KWh of demand reduction. 
%Recall that under SRBM, $\beta^i = \pi^e/\nu^i$. Thus, the payout to agents per DR event is
%\begin{align*}
%&\hspace{-0.4cm}\Eset[\psi]  \leq    \Eset [ \sum_{i=1}^M \beta^i (\nu^{i} - \pi^e) D ]  =  \Eset[\sum_{i=1}^M \beta^i \nu^{i} D  -  \sum_{i=1}^M \beta^i \pi^e D]\\
%&  =  \Eset[\sum_{i=1}^M \pi^e D  -  \sum_{i=1}^M \beta^i \pi^e D] \leq E[M] \pi^e D.
%\end{align*}

(a) Let $N_i$ be the number of agents in the pod core $\Sset^i$.
Recall agents are sorted into $\Sset^i$ so that 
\[ \sum_{k \in \Sset^i} b_i \geq D. \]
Using Theorem \ref{thm:optional}, we can bound the expected number of agents in the pod core $\Sset^i$ as
\beq 
\label{eq:754} \frac{D}{\Eset [ b ]}  \leq \Eset [ N_i ] \leq \frac{D}{\Eset [ b ]} + 1 
\eeq
for all $i$. Note that this is also a bound on the number of agents in any pod header $\Hset^i$.
Next observe from Figure \ref{fig:pods} that the header for pod $\Pset^i$ serves as the core $\Sset^{i+1}$ for the subsequent pod.
Thus the total number of agents recruited is 
\[ N = \sum_1^{M+1} N_i. \]
It is easy to observe that $N_{i}$s are i.i.d. The number of pods $M$ is determined  using the stopping criterion $\sum_1^M \beta^i \geq 1$ and hence $M$ is a stopping time. Using Wald's equation, from the above equation we get, 
\beq \label{eq:9992} \Eset[N] =  (\Eset[ M ] + 1) \Eset[N_1]  \leq (\Eset[ M ] + 1) (D/\Eset [ b ] + 1)   \eeq

%
%{\color{red} Note that $N_i$ depends only on the agent baselines, while $\beta^i$ depends only on the agent marginal utilities. 
%The number of pods $M$ is determined by $\beta^i$ using the stopping criterion $\sum_1^M \beta^i \geq 1$. Thus, 
%$M$ and $N_i$ are independent}, and using Wald's Theorem \ref{thm:wald}, we obtain
%\beq \label{eq:9992} \Eset[N] =  (\Eset[ M ] + 1) \Eset[N_i] \leq (\Eset[ M ] + 1) \lambda \eeq
%
%{\color{blue} Comment: The problem with the above argument is, $N_{i}$ and $\nu^{i}$ are dependent. So, $N_{i}$ and $\beta^{i}$ are dependent. So, $N_{i}$ and $M$ can be dependent.}

(b) We now bound the number of pods $M$. Recall that we keep forming pods until
\beq \label{eq:300}
\sum_1^{M-1}  \beta^i  < 1 \leq  \sum_1^{M} \beta^i, \ \text{where } \beta^i = \frac{\pi^e}{\nu^i}
\eeq
Since the harmonic mean of a collection of numbers is always less than arithmetic mean, we have
\[ \frac{M-1}{1/\nu^1 + \cdots + 1/\nu^{M-1}} \leq \frac{\nu^1 + \cdots + \nu^{M-1}}{M-1} \]
As a result, we have
\begin{eqnarray*}
 (M-1)^2  & \leq & \pi^e(1/\nu^1 + \cdots + 1/\nu^{M-1})\cdot \frac{\nu^1 + \cdots + \nu^{M-1}}{\pi^e}  \\
 & < &  \frac{\nu^1 + \cdots + \nu^{M-1}}{\pi^e} 
 \end{eqnarray*}
The final inequality above follows from (\ref{eq:300}). Note that $(M-1)^2$ is a convex function of $M$. Taking expectations,  using Jensen's inequality, we obtain
\beq 
\label{eq:301}
(\Eset [ M-1 ])^{2} \leq \Eset [ (M-1)^2 ] <  \frac{1}{\pi^e} \Eset \left[ \sum_{i=1}^{M-1}  \nu^i \right] 
\eeq

Define $\Sset^{M+1} = \Hset^M$. Agents in this  $(M+1)^\text{th}$ pod core are never
called. They are only needed to serve as the pod header $\Hset^M$ in order to determine reward prices for pod $\Pset^M$ under SRBM. 
Define $\pi_{[k]}, k = 1,2, \cdots$ to be the agent marginal utilities {\em sorted in increasing order}. Notice that (see Figure \ref{fig:pods})
\[ \nu^i  \leq \pi_{[k]}  \quad \text{for } k \in \Sset^{i+2}. \]
Therefore
\[ N_{i+2} \nu^i \leq \sum_{k \in \Sset^{i+2}} \pi_{[k]} \]
where $N_i$ is the number of agents in pod core $\Sset^i$.

As a result, we have
\begin{align} 
\nonumber
&\sum_{i=1}^{M-1} N_{i+2} \nu^i \leq  \sum_{i = 1}^{M-1} \sum_{k \in \Sset^{i+2}} \pi_{[k]} = \sum_{k \in \Sset^{3} \cup \cdots \cup \Sset^{M+1}} \pi_{[k]} \nonumber \\ 
\label{eq:302}
& \leq  \sum_{k \in \Sset^{1} \cup \cdots \cup \Sset^{M+1}} \pi_{[k]}  =  \sum_{k=1}^{N} \pi_{[k]}  =  \sum_{k=1}^{N} \pi_{k}
\end{align}
The final equality follows because the sum of all of the {\em sorted} random variables $\pi_{[k]}$ is identical to the 
sum of the unsorted random variables $\pi_k$. Taking  expectations of both sides of (\ref{eq:302}), 
\begin{align}
\mathbb{E}\left[ \sum_{i=1}^{M-1} N_{i+2} \nu^i \right] &\leq \mathbb{E}\left[ \sum_{k=1}^{N} \pi_{k} \right] = \mathbb{E}[N] \mathbb{E}[\pi] \nonumber  \\
\label{eq:dkop3}
&=  \mathbb{E}[N_{1}] \mathbb{E}[(M+1)] \Eset[ \pi ].
%&\leq  \lambda \left( \Eset[M] + 1 \right)\Eset[ \pi ]
\end{align} 
where we used \eqref{eq:9992} to get the last equality. Notice that $N_{i+2}$ and $\nu^{i}$ are independent, and since $N_{i}$s are i.i.d.,  by a simple conditioning argument,  
\begin{align}
\label{eq:dkop4}
\mathbb{E}\left[ \sum_{i=1}^{M-1} N_{i+2} \nu^i \right] = \mathbb{E}[N_{1}] \mathbb{E}\left[ \sum_{i=1}^{M-1}  \nu^i \right]
\end{align}
From \eqref{eq:dkop4} and \eqref{eq:dkop3}, 
\begin{align}
\label{eq:dkop5}
\mathbb{E}\left[ \sum_{i=1}^{M-1}  \nu^i \right] \leq  \mathbb{E}[(M+1)] \Eset[ \pi ]
\end{align}
Combining  \eqref{eq:301} and (\ref{eq:dkop5}), we obtain
\[
\left(\Eset[ M ]  + 1 \right) \left(\Eset [ M ] - 3] \right) \leq (\mathbb{E}[(M-1)])^{2} \leq \frac{\Eset [ M+1 ] \Eset [ \pi ]}{\pi^e}
\]
This simplifies to
\[   \Eset [ M ] < \frac{\Eset [ \pi ]}{\pi^e} + 3 \]
proving the claim. \hfill $\Box$

\subsection{Proof of Theorem \ref{thm:cc}} 
Agent $k$ in Pod $\Pset^i$ is selected with probability $\beta^i_k  = \pi^e/(\pi^r_k + \pi^e)$. Selected agents deliver $D$ KWh of demand reduction. Thus, the expected payout to agents per DR event is
\begin{align}
&\hspace{-0.4cm}\Eset[\psi] \leq \Eset [ \sum_{i=1}^M \sum_{k \in \Sset^i} \beta^i_k \pi^r_k b_k ]  \leq  \Eset[\sum_{i=1}^M \pi^e D ] \nonumber \\
\label{eq:dk912}
&  =  \Eset[\sum_{i=1}^M \pi^e D] \leq E[M] \pi^e D.
\end{align}
From (\ref{eq:aggcost}), the average cost of DR provision per KWh of demand reduction is 
\[
\phi =  \frac{\Eset[ \psi ]}{D} + \frac{\pi^o \Eset[N]}{mD} 
\]
Using the  bound on $\Eset[\psi]$ from \eqref{eq:dk912} and the bound on $\Eset[N]$ from Theorem \ref{thm:MNbounds}, we get the desired result. \hfill $\Box$

\subsection{SRBM for Complete Information (CI) Setting}
\label{sec:srbm-ci}

The main challenge in the mechanism design is that the selection probabilities of individual consumers has to be limited (see Theorem \ref{thm:1}) so that the reported baselines are not inflated. This is unlike traditional resource allocation problems where the baseline is available. This necessitates the mechanism designer to recruit more than the minimum number of agents required to meet the target $D$. SRBM recruits multiple such sets of consumers and sorts them in to pods, where each pod has a core that contains the minimum number of agents required to meet the target $D$. Supposing that the agents reveal the true values of their baseline and marginal utility, then the mechanism can minimize its payout by selecting pod cores of higher marginal utilities by a lower proability. The reward design (\eqref{eq:pslope-main}), would then imply that pod cores with higher rewards are selected less often.

In the complete information setting this can couple the selection probabilities with the agent reporting. As a result,  a characterization like Thm. \ref{thm:1} might not be feasible. So the challenge here is to design a {\it pod sorting} algorithm, combined with a {\it pod selection} and {\it reward pricing} mechanism such that the incentive for the agents to inflate their baseline reports or their marginal utility reports to increase payments is nullified while pod sorting is cost effective.

Below we discuss the mechanism for this complete information setting. Agents are sorted in to pods based on their reported marginal utility as in Fig. \ref{fig:pods}. Agent $k$'s selection probability and unit reward is given by,

\beq \beta^i_{k} = c^i_k \pi^e/ \nu^{i+1}_{-k} ; \ \pi^r_k = \nu^{i}_{-k} - \pi^e \eeq

Where,
\begin{align}
& \mu^{j+}_k = \nu^{j}_{-k} \frac{\nu^{j+2}_{-k} - e^j_k \left(\nu^{j+1}_{-k}\right)^2/\nu^{j}_{-k}}{\nu^{j+2}_{-k} - e^j_k\nu^{j+1}_{-k}}\ , \ c^1_k = 1 \ , \ e^j_k = \frac{c^{j+1}_k}{c^j_k} \nonumber\\
& \mu^{j+}_k =  \nu^{i-1}_{-k} \ , \ \text{if} \ j > i \ , \  \mu^{j+}_k =  \nu^{j-1}_{-k} \ \text{if} \ j \leq i \ , \nonumber\\
& \text{And} \ \nu^{j}_{-k} = \text{highest report in} \ j \ \text{th pod if} \ k \ \text{was excluded} \nonumber\\
& \text{And} \ i \ \text{is the pod allotted to consumer} \ k 
\label{eq:jumpfact} 
\end{align}

The above equation can be rewritten as,
\begin{equation}
e^j_k = \frac{\nu^{j+2}_{-k}(\mu^{j+}_{k} - \nu^j_{-k})}{\nu^{j+1}_{-k}(\mu^{j+}_{k} - \nu^{j+1}_{-k})} > 0
\label{eq:jumpfact-1} 
\end{equation}

\begin{remark}
There are three differences from mechanism for PI setting, the reward, the selection probability and the inclusion of an additional factor in the calculation of the selection probability.
\end{remark}

The number of pods $M$ in the mechanism is chosen such that  $\sum_{j =1}^{M-1} \beta^{j} < 1 \leq \sum_{j =1}^{M} \beta^{j} $, where a $\Pset^i$'s core $\Sset^i$ is such that $\sum_{k = 1}^{N-1} f_k <  D \leq \sum_{k = 1}^{N} f_k$ and $N$ is the number of consumers in $\Sset^i$. Here the difference is that  $\beta^i \neq \min \beta_k$. {\it Selection:} a random number $u \sim \mathcal{U}[0,1]$ is drawn. Pod core $\Sset^i$ is selected for DR if $ u \in [ \sum_{j =1}^{i-1} \beta^j, \sum_{j =1}^{i} \beta^j] $. To be consistent with the selection probability of an individual consumer a consumer $k$ in $\Sset^i$ is selected for all draws $ u \in [ \sum_{j =1}^{i-1} \beta^j, \sum_{j =1}^{i-1} \beta^j + \beta_k] $. Agents who are selected are rewarded for reduction from reported baseline based on unit reward announced to them. Agents who are not selected are penalized for consuming less than the reported baseline. \\

{\it \bf Difference between SRBM for CI and SRBM for PI}: Unlike SRBM for PI, the selection probability of agent $k$ has an additional jump factor $c^i_k$. This guarantees that the selection probability of an individual agent decays at a sufficient rate that the agent does not have the incentive to misreport and jump to a higher pod just to gain higher reward. Also note that the reward and the probability without the factor are also different. In SRBM for PI, the reward is $\pi^r_k = \nu^{i-1}_{-k} - \pi^e$ while in SRBM for CI the reward is $\pi^r_k = \nu^{i}_{-k} - \pi^e$. Similarly the selection probability in SRBM for PI is $\beta^i_k = \pi^e/\nu^i_{-k}$ while in SRBM for CI the selection probability without the jump factor is $\beta^i_k/c^i_k = \pi^e/\nu^{i+1}_{-k}$. This change in the selection probability and the reward enables design of a mechanism where the jump factors are computable using the reported information.  

\begin{theorem}\label{thm:muuc-main-srbm-ci}
The modified SRBM  for the complete information setting has the following properties,
\vspace*{-0.08in}
\setlength{\leftmargini}{0.15in}
\balphlist \setlength{\itemsep}{-0.1cm}
\item truthful reporting of baselines and marginal utilities % i.e. $(f^*_{k}, \mu^*_{k}) = (b_{k}, \pi_k)$, 
is a dominant strategy 
\item called agents consume $q_\text{s}^* = 0$, providing the maximal reduction in their discretionary consumption
\item agents that are not called consume $q_\text{ns}^* = b_k$
\item the aggregator receives no penalty revenue 
\item the load reduction target $D$ is met by each pod core.
\end{list}
\end{theorem}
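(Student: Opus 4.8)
The plan is to mirror the proof of Theorem~\ref{thm:muuc-main}: reduce (b)--(e) to (a), then prove (a) in two stages. Granting (a), parts (b) and (c) follow from Proposition~\ref{prop:2}: a called agent in pod $\Pset^i$ faces reward $\pi^r_k=\nu^i_{-k}-\pi^e$, and since $\nu^i_{-k}$ is the largest reported marginal utility in $\Pset^i\setminus\{k\}$ --- which includes header agents whose reports dominate those of the core --- we have $\nu^i_{-k}\ge\pi_k$, so $\pi^r_k\ge\pi_k-\pi^e$ and $q^*_\text{s}=0$; a not--called agent faces $\pi^p=\pi^e\ge0$ and so $q^*_\text{ns}=b_k$. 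Part (d) holds because the realized penalty is $\pi^p(f^*_k-q^*_\text{ns})^+=\pi^p(b_k-b_k)^+=0$, and (e) holds because each pod core is, by construction, the shortest prefix of the sorted list whose reported baselines --- equal to the true $b_k$ by (a) --- sum to at least $D$, so the called agents curtail exactly $\sum_{k\in\Sset^i}b_k\ge D$.

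So the burden is on (a). Fixing the others' reports $(f_{-k},\mu_{-k})$, I would prove the chain $J_k(f_k,\mu_k\mid f_{-k},\mu_{-k})\ge J_k(b_k,\mu_k\mid f_{-k},\mu_{-k})\ge J_k(b_k,\pi_k\mid f_{-k},\mu_{-k})$, which gives dominant--strategy truthfulness. The first inequality is baseline truthfulness for each fixed $\mu_k$, handled pod by pod: when agent $k$ is sorted into $\Pset^i$ it faces $\pi^r_k=\nu^i_{-k}-\pi^e$ and $\beta^i_k=c^i_k\,\pi^e/\nu^{i+1}_{-k}$, and I would verify the Proposition~\ref{prop:1} hypothesis $\beta^i_k\le\pi^e/(\pi^r_k+\pi^e)$ from $c^1_k=1$, the recursion $c^{j+1}_k=e^j_k c^j_k$ in \eqref{eq:jumpfact}--\eqref{eq:jumpfact-1}, and the fact that sorting by increasing marginal utility gives $\nu^{i+1}_{-k}\ge\nu^i_{-k}$. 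Proposition~\ref{prop:1}(a) then makes $f_k=b_k$ optimal within any pod $k$ can reach, and since the probability constraint holds across all of them, no cross--pod baseline deviation helps (inflating $f_k$ only advances pod boundaries, deflating it only shrinks the reward base).

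For the second inequality I would fix $f_k=b_k$ and, via Proposition~\ref{prop:2}, reduce agent $k$'s expected cost --- when sorted into pod $j$ and called with probability $\beta^j_k$ --- to
\[
J_k(j)=-(\pi_k-\pi^e)\,b_k-\beta^j_k\,(\nu^j_{-k}-\pi_k)^+\,b_k ,
\]
since a called agent incurs $-\max(\nu^j_{-k}-\pi^e,\ \pi_k-\pi^e)\,b_k$ and a not--called agent incurs $-(\pi_k-\pi^e)\,b_k$. Truthful reporting of $\mu_k$ is then equivalent to the gain $g(j):=\beta^j_k(\nu^j_{-k}-\pi_k)^+$ being maximized at the pod $i$ into which $\pi_k$ naturally sorts. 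I would show this by checking that the $\mu^{j+}_k$ of \eqref{eq:jumpfact} are exactly the marginal utilities at which $g(j)=g(j+1)$ --- which, on substituting $\beta^j_k=c^j_k\pi^e/\nu^{j+1}_{-k}$ and $c^{j+1}_k=e^j_k c^j_k$, is precisely \eqref{eq:jumpfact-1} --- and that on the relevant range $g(j)-g(j+1)$ is increasing in $\pi_k$ (using $e^j_k>1$). Since the true index obeys $\nu^{i-1}_{-k}\le\pi_k\le\nu^{j}_{-k}$ for $j>i$ and $g(j)=0$ for $j<i$, this forces $g(i)>g(i+1)>\cdots$ with $g$ vanishing below $i$, so $g$ peaks at $i$; the case where $k$ would be the top of pod $i$ is then settled by the analogous comparison against header placements in higher pods.

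The hard part is this calibration step: proving that the recursively defined jump factors make $g(\cdot)$ unimodal with peak at the true pod, uniformly over all configurations of the other agents' reports (which set the $\nu^\cdot_{-k}$), while accounting for the fact that a misreport of $\mu_k$ does not merely relocate agent $k$ but re--partitions the whole sorted list --- so the $\nu^j_{-k}$ that $k$ faces are themselves functions of its report and can move it between a pod core and a pod header. Making the ``no profitable cross--pod deviation'' arguments rigorous against this re--partitioning, for both the baseline and the marginal--utility perturbation, is exactly what makes the complete--information analysis substantially more delicate than the partial--information case of Theorem~\ref{thm:muuc-main}; the consumption facts (b)--(e) are, by contrast, a direct specialization of Propositions~\ref{prop:1} and~\ref{prop:2}.
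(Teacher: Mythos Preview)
Your overall architecture---reduce (b)--(e) to (a), prove baseline truthfulness via Proposition~\ref{prop:1}, then prove marginal-utility truthfulness by comparing the expected gain across pods---matches the paper's. But the marginal-utility step, which you correctly flag as the hard part, contains concrete errors that would make the argument fail as written.

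First, the ordering is backwards. When agent $k$ is truthfully placed in core $\Sset^i$, the paper records $\nu^{i-2}_{-k}\le\pi_k\le\nu^{i-1}_{-k}$ (recall $\nu^{j}_{-k}$ is the maximum over the whole pod $\Pset^j=\Sset^j\cup\Sset^{j+1}$ with $k$ removed, so $\nu^{i-1}_{-k}$ dominates everything in $\Sset^i$). Your claim $\nu^{i-1}_{-k}\le\pi_k$ is therefore reversed, and so is your conclusion that $g(j)=0$ for all $j<i$: at $j=i-1$ one has $\nu^{i-1}_{-k}\ge\pi_k$, so $g(i-1)$ need not vanish. The paper treats $j\le i-2$ (where indeed $\nu^j_{-k}\le\pi_k$ and the reward is too small to induce any reduction) and $j=i-1$ by separate direct computations; your unimodality picture must at minimum split these.

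Second, the monotonicity you need---that $g(j)-g(j+1)$ is increasing in $\pi_k$---amounts to $\beta^{j+1}_k>\beta^j_k$, i.e.\ $e^j_k>\nu^{j+2}_{-k}/\nu^{j+1}_{-k}\ge1$. Nothing in the mechanism guarantees $e^j_k>1$; the paper only extracts $e^j_k>0$ from~\eqref{eq:jumpfact-1}, and its recursive bound gives $c^i_k<\nu^{i+1}_{-k}/\nu^{i}_{-k}$ (which is exactly what you need for $\beta^i_k\le\pi^e/(\pi^r_k+\pi^e)$), not $e^j_k>1$. The paper avoids your monotonicity route entirely: it evaluates the difference $\tilde U^{j+1}_k-\tilde U^j_k$ directly and uses the defining identity~\eqref{eq:jumpfact} with $\mu^{j+}_k=\nu^{i-1}_{-k}\ge\pi_k$ to sign it.

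Third---and this is the step your final paragraph worries about but does not resolve---the jump factors are \emph{not} invariant under a misreport of $\mu_k$. The $\nu^j_{-k}$ are indeed independent of $k$'s report, but $\mu^{j+}_k$ in~\eqref{eq:jumpfact} depends on the pod $i$ that $k$ is allotted to, so $c^j_k$ changes when $k$ moves. The paper handles this with a two-layer comparison: it introduces $\tilde U^j_k$, the utility in pod $j$ computed with the jump factors frozen at their truthful values, proves $\tilde U^{j+1}_k\le\tilde U^j_k$ for $j\ge i$, and then separately shows $c^j_{k\in\Sset^r}\le c^j_{k\in\Sset^i}$ for $r>i$ (via the monotonicity of the right side of~\eqref{eq:jumpfact} in $e^j_k$), which yields $U^j_k\le\tilde U^j_k\le\tilde U^i_k=U^i_k$. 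Your single function $g(j)$ conflates these two effects and cannot, as stated, control the deviation payoff.
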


\begin{proof} The jump factor $c^i_k$ (\eqref{eq:jumpfact}), that is used in the selection probability of agent $k$,  is only dependent on $\mu^{(i-1)+}_k, \nu^{i-1}_{-k}, \nu^{i+1}_{-k}, \nu^{i}_{-k}$ and $c^{i-1}_k$. Hence the recursive scheme for computation of $c^i_k$ is independent of the agent report $\mu_k$ and that the jump factor $c^i_k$ is independent of the reports of agent $k$. This implies that the selection probability of a agent $k, \beta_k = c^i_k \pi^e/\nu^{i+1}_{-k}$ is independent of its reports. \\

Now all we need to check is that whether $\beta_k \leq \pi^e/(\pi^r_k + \pi^e)$. When $e^j_k$ is zero on the right of \eqref{eq:jumpfact}, we get,
\beq
\mu^{j+}_k \leq \nu^{j-1}_{-k} < \left. \nu^{j}_{-k} \frac{\nu^{j+2}_{-k} - e^j_k \left(\nu^{j+1}_{-k}\right)^2/\nu^{j}_{-k}}{\nu^{j+2}_{-k} - e^j_k\nu^{j+1}_{-k}}\right\vert_{e^j_k = 0} = \nu^{j}_{-k}
\eeq
Calling right hand side of  \eqref{eq:jumpfact} as $g(\nu^j_{-k}, \nu^{j+1}_{-k}, \nu^{j+2}_{-k}, e^j_k)$, it is easy to check using chain rule for differentiation that $\frac{d g}{d e^j_k} < 0$. And $g(\nu^j_{-k}, \nu^{j+1}_{-k}, \nu^{j+2}_{-k}, e^j_k) = 0$ when $c^{j+1}_k = \frac{c^j_k \nu^{j+2}_{-k} \nu^{j}_{-k}}{(\nu^{j+1}_{-k})^2}$. Combining these two observations it is clear that there is a $c^i_k$ such that,
\beq
0 < c^i_k < \frac{c^{i-1}_k \nu^{i+1}_{-k} \nu^{i-1}_{-k}}{(\nu^{i}_{-k})^2}
\eeq
Similarly it follows that,
\beq
0 < c^{i-1}_k < \frac{c^{i-2}_k \nu^{i}_{-k} \nu^{i-2}_{-k}}{(\nu^{i-1}_{-k})^2}
\eeq
Applying recursively we get that,
\beq
0 < c^i_k < \frac{ \nu^1_{-k} \nu^{i+1}_{-k} }{ \nu^2_{-k} \nu^{i}_{-k} } < \frac{ \nu^{i+1}_{-k} }{ \nu^{i}_{-k}} 
\eeq

This implies the selection probability of agent $k$ satisfies,
\beq
\beta_k = c^i_k \pi^e/\nu^{i+1}_{-k} < \frac{ \nu^{i+1}_{-k} }{ \nu^{i}_{-k}}  \pi^e/\nu^{i+1}_{-k}  = \frac{\pi^e}{\nu^{i}_{-k}} = \frac{\pi^e}{\pi^r_k + \pi^e}
\eeq
From the above observation and the fact that $\beta_k$ and $\pi^r_k$ are independent of agent $k$'s reports. Then the proof of Thm. \ref{thm:1} applies here and it follows that the agent $k$ reports its {\it true baseline} i.e. $f_k = b_k$. \\

Note that it is trivial to show that the agent does not gain by changing its marginal utility report such  that it is still within $\Sset^i$ (pod core allotted if agent $k$ reports truthfully) but at a different position. This is because the rewards, selection probabilities for a agent $k$ does not change once its pod core is fixed. Next we show that the agent does not gain either by jumping to a lower or a higher pod core. \\

Denote by $i$ the pod core number that is allotted to agent $k$ on reporting truthfully and denote by $U^i_k$ the net utility of agent $k$ in $\Sset^i$. Then,
\begin{align}
U^i_k & = \beta^i_k \pi^r_k b_k + (1 -  \beta^i_k) (\pi_k - \pi^e)b_k = \frac{c^i_k\pi^e}{\nu^{i+1}_{-k}} (\nu^i_{-k} - \pi^e) b_k \nonumber\\
& + \left(1 -  \frac{c^i_k\pi^e}{\nu^{i+1}_{-k}}\right) (\pi_k - \pi^e)b_k
\label{eq:pf-1} 
\end{align}

Define $\tilde{U}^j_k$ as the utility of agent $k$ in $\Sset^j$, but using the computed jump factor $c^j_k$ when $k$ is allotted $\Sset^i$. Then,
\begin{align}
\tilde{U}^{j}_k & = \beta^{j}_k \pi^r_k b_k + (1 -  \beta^{j}_k) (\pi_k - \pi^e)b_k \nonumber\\
& = \frac{c^{j}_k\pi^e}{\nu^{j+1}_{-k}} (\nu^{j}_{-k} - \pi^e) b_k  + \left(1 -  \frac{c^{j}_k\pi^e}{\nu^{j+1}_{-k}}\right) (\pi_k - \pi^e)b_k
\label{eq:pf-2} 
\end{align}
 
{\it Case $j \geq i$:} Here we first show that $\tilde{U}^{j+1}_k \leq \tilde{U}^{j}_k$ and then show that $U^j_k \leq \tilde{U}^{j}_k \leq \tilde{U}^{i}_k = U^i_k$. Taking the difference of $\tilde{U}^{j+1}_k$ and $\tilde{U}^{j}_k$ we get,
\begin{align} 
& \tilde{U}^{j+1}_k - \tilde{U}^{j}_k = \frac{c^{j+1}_k\pi^e}{\nu^{j+2}_{-k}} \nu^{j+1}_{-k}  b_k + \left(1 -  \frac{c^{j+1}_k\pi^e}{\nu^{j+2}_{-k}}\right) \pi_k b_k  \nonumber\\
& - \frac{c^{j}_k\pi^e}{\nu^{j+1}_{-k}} \nu^{j}_{-k}  b_k - \left(1 -  \frac{c^{j}_k\pi^e}{\nu^{j+1}_{-k}}\right) \pi_k b_k \nonumber\\
& = \frac{c^{j+1}_k\pi^e}{\nu^{j+2}_{-k}} \nu^{j+1}_{-k}  b_k - \frac{c^{j+1}_k\pi^e}{\nu^{j+2}_{-k}} \pi_k b_k  - \frac{c^{j}_k\pi^e}{\nu^{j+1}_{-k}} \nu^{j}_{-k}  b_k \nonumber\\
& + \frac{c^{j}_k\pi^e}{\nu^{j+1}_{-k}} \pi_k b_k \nonumber\\
& = \frac{c^j_k \pi^e b_k}{\nu^{j+2}_{-k} \nu^{j+1}_{-k}}\left( e^j_k (\nu^{j+1}_{-k})^2 - \nu^{j+2}_{-k} \nu^{j}_{-k}  -  \pi_k (e^j_k \nu^{j+1}_{-k} - \nu^{j+2}_{-k} ) \right) \nonumber\\
& = \frac{c^j_k \pi^e b_k\pi_k}{\nu^{j+2}_{-k} \nu^{j+1}_{-k}}\left( \pi_k \left(\nu^{j+2}_{-k} - e^j_k \nu^{j+1}_{-k}\right) \right. \nonumber\\
& \left. -  \nu^{j}_{-k} \left(\nu^{j+2}_{-k} - e^j_k \frac{(\nu^{j+1}_{-k})^2}{\nu^{j}_{-k}} \right) \right)
\label{eq:pf-2} 
\end{align}

Then from  \eqref{eq:jumpfact}, \eqref{eq:pf-2} and the fact that $\mu^{j+}_k = \nu^{i-1}_{-k} \geq \pi_k$, because agent $k$'s true marginal utility $\pi_k$ satisfies $\nu^{i-2}_{-k} \leq \pi_k \leq \nu^{i-1}_{-k}$, we get,

\begin{align}
&\tilde{U}^{j+1}_k - \tilde{U}^{j}_k \leq \frac{c^j_k \pi^e b_k\pi_k}{\nu^{j+2}_{-k} \nu^{j+1}_{-k}}\left( \mu^{j+}_{-k} \left(\nu^{j+2}_{-k} - e^j_k \nu^{j+1}_{-k}\right) \right. \nonumber\\
& - \left. \nu^{j}_{-k} \left(\nu^{j+2}_{-k} - e^j_k \frac{(\nu^{j+1}_{-k})^2}{\nu^{j}_{-k}} \right) \right) = 0 \nonumber\\
& \Rightarrow \tilde{U}^{j+1}_k \leq \tilde{U}^{j}_k
\end{align}

Next we show that $c^j_{k \in \Sset^r} \leq c^j_{k \in \Sset^i}$, when $r > i$, and then it trivally follows that $U^j_k \leq \tilde{U}^{j}_k$. If the agent $k$ jumps to higher pod core $r$ (by reporting high) then the value $\mu^{j+}_{k \in \Sset^r} = \min\{\nu^{j-1}_{-k},\nu^{r-1}_{-k}\} \geq \nu^{i-1}_{-k} = \mu^{j+}_{k \in \Sset^i} \ \forall \ j > i$ and $\mu^{j+}_{k \in \Sset^r} = \nu^{j-1}_{-k} = \mu^{j+}_{k \in \Sset^i} \ \forall \ j \leq i$ . Also, \eqref{eq:jumpfact} becomes $\mu^{(j-1)+}_{k \in \Sset^r} = g(\nu^{j-1}_{-k}, \nu^{j}_{-k}, \nu^{j+1}_{-k}, e^{j-1}_{k \in \Sset^r})$. For a particular $j > i$, every argument of $g$ is fixed except $e^j_k$, which depends on the core $r$ agent $k$ is in. We showed before that the right hand side of \eqref{eq:jumpfact} i.e. $g(\nu^j_{-k}, \nu^{j+1}_{-k}, \nu^{j+2}_{-k}, e^j_k)$ is decreasing in $e^j_k$. Then it follows that $e^j_{k \in \Sset^r} \leq e^j_{k \in \Sset^i} \ r > i$. Since $c^1_k = 1$ always it follows that $c^j_{k \in \Sset^r} \leq c^j_{k \in \Sset^i}$ when $r > i$. Setting $j = r$ it follows that $c^j_{k \in \Sset^j} \leq c^j_{k \in \Sset^i}$. This implies that $U^j_k \leq \tilde{U}^{j}_k$. From here it follows that $U^j_k \leq \tilde{U}^{j}_k \leq \tilde{U}^{i}_k = U^i_k$. \\

{\it Case $j < i$:} First the agent does not have any incentive to move to a pod core $\Sset^j$ where $j \leq i-2$. This is because the reward for agent $k$ when it is in a core $j \leq i-2$ is $\pi^r_k = \nu^j_{-k} \leq \nu^{i-2}_{-k} \leq \pi_k$. This implies that the incentive to reduce is not sufficient in any of these pod cores. Hence the agent's utility $U^j_k = (\pi_k - \pi^e)b_k < U^i_k$. From \eqref{eq:jumpfact} it follows that $c^j_{k \in \Sset^j} = c^j_{k \in \Sset^i}$ and $c^i_{k \in \Sset^j} = c^i_{k \in \Sset^i}$ when $j = i-1$. This implies $U^j_k = \tilde{U}^j_k = \tilde{U}^{i-1}_k$. Finding the difference between $\tilde{U}^{i}_k$ and $\tilde{U}^{i-1}_k$ we get,
\begin{align} 
& \tilde{U}^{i}_k - \tilde{U}^{i-1}_k = \frac{c^{i-1}_k \pi^e b_k\pi_k}{\nu^{i+1}_{-k} \nu^{i}_{-k}}\left( \pi_k \left(\nu^{i+1}_{-k} - e^{i-1}_k \nu^{i}_{-k}\right) \right. \nonumber\\
& -  \left. \nu^{i-1}_{-k} \left(\nu^{i+1}_{-k} - e^{i-1}_k \frac{(\nu^{i}_{-k})^2}{\nu^{i-1}_{-k}} \right) \right) \nonumber\\
& \geq \frac{c^{i-1}_k \pi^e b_k\pi_k}{\nu^{i+1}_{-k} \nu^{i}_{-k}}\left( \nu^{i-2}_{-k} \left(\nu^{i+1}_{-k} - e^{i-1}_k \nu^{i}_{-k}\right)  \right. \nonumber\\
& -  \left. \nu^{i-1}_{-k} \left(\nu^{i+1}_{-k} - e^{i-1}_k \frac{(\nu^{i}_{-k})^2}{\nu^{i-1}_{-k}} \right) \right)
\label{eq:pf-3} 
\end{align}

Then from \eqref{eq:jumpfact} it follows that,
\begin{align} 
& \tilde{U}^{i}_k - \tilde{U}^{i-1}_k \geq \frac{c^{i-1}_k \pi^e b_k\pi_k}{\nu^{i+1}_{-k} \nu^{i}_{-k}}\left( \nu^{i-2}_{-k} \left(\nu^{i+1}_{-k} - e^{i-1}_k \nu^{i}_{-k}\right) \right.  \nonumber\\
& - \left. \nu^{i-1}_{-k} \left(\nu^{i+1}_{-k} - e^{i-1}_k \frac{(\nu^{i}_{-k})^2}{\nu^{i-1}_{-k}} \right) \right) \nonumber\\
& = 0 \Rightarrow \tilde{U}^{i}_k \geq  \tilde{U}^{i-1}_k \Rightarrow U^{i}_k \geq U^{i-1}_k = U^{j}_k
\label{eq:pf-4} 
\end{align}

This establishes the first point. The other points follow from here. Hence proved. \end{proof}

\begin{IEEEbiography}{Deepan Muthirayan}
%[{\includegraphics[width=1in,height=1.25in,clip,keepaspectratio]{Figures/kalathil.jpg}}]{Dileep  Kalathil}
is currently a Post-doctoral Researcher in the department of Electrical Engineering and Computer Engineering at UC Irvine. He obtained his Phd in Mechanical Engineering from the University of California at Berkeley (2016) and his B.Tech/M.tech from the Indian Institute of Technology Madras (2010). His thesis work focussed on market mechanisms for integrating demand flexibility in energy systems. His research lies in control theory, machine learning, learning for control, online learning, online algorithms, game theory, smart systems.
\end{IEEEbiography}

\begin{IEEEbiography}{Dileep  Kalathil}
%[{\includegraphics[width=1in,height=1.25in,clip,keepaspectratio]{Figures/kalathil.jpg}}]{Dileep  Kalathil}
is an Assistant  Professor in the Department of Electrical and Computer Engineering  at Texas A\&M  University, College Station, Texas. He was a postdoctoral researcher in the EECS department, University of California, Berkeley from 2014-17. He received his PhD from University of Southern California (USC) in 2014 where he won the best PhD Dissertation Prize in the USC Department of Electrical Engineering. He received an M.Tech from IIT Madras where he won the award for the best academic performance in the EE department. His research interests include  intelligent transportation systems, sustainable energy systems, data driven optimization, online learning, stochastic control, and game theory.
\end{IEEEbiography}

\begin{IEEEbiography}
%[{\includegraphics[width=1in,height=1.25in,clip,keepaspectratio]{Figures/poolla.jpg}}]
{Kameshwar Poolla}
 is the Cadence Distinguished Professor at UC Berkeley in EECS and ME. His current research interests include many aspects of future energy systems including economics, security, and commercialization. He was the Founding Director of the IMPACT Center for Integrated Circuit manufacturing. Dr. Poolla co-founded OnWafer Technologies which was acquired by KLA-Tencor in 2007. Dr. Poolla has been awarded a 1988 NSF Presidential Young Investigator Award, the 1993 Hugo Schuck Best Paper Prize, the 1994 Donald P. Eckman Award, the 1998 Distinguished Teaching Award of the University of California, the 2005 and 2007 IEEE Transactions on Semiconductor Manufacturing Best Paper Prizes, and the 2009 IEEE CSS Transition to Practice Award.
\end{IEEEbiography}

\begin{IEEEbiography}
%[{\includegraphics[width=1in,height=1.25in,clip,keepaspectratio]{Figures/pvpic.jpg}}]
{Pravin Varaiya} is a Professor of the Graduate School in the Department of Electrical Engineering and Computer Sciences at the University of California, Berkeley.  He has been a Visiting Professor at the Institute for Advanced Study at the Hong Kong University of Science and Technology since 2010.  He has co-authored four books and 350+ articles.  His current research is devoted to electric energy systems and transportation networks.   

Varaiya has held a Guggenheim Fellowship and a Miller Research Professorship.  He has received three honorary doctorates, the Richard E. Bellman Control Heritage Award, the Field Medal and Bode Lecture Prize of the IEEE Control Systems Society, and the Outstanding Researcher Award of the IEEE Intelligent Transportation Systems Society. He is a Fellow of IEEE, a Fellow of IFAC, a member of the National Academy of Engineering, and a Fellow of the American Academy of Arts and Sciences.
\end{IEEEbiography}

\end{document}